%% file: main.tex
\documentclass{article}
\usepackage{iclr2027_conference,times}

\input{math_commands.tex}

\renewcommand{\eqref}[1]{\textup{(\ref{#1})}}
\usepackage{longtable,array}
\usepackage{hyperref}
\usepackage{url}
\usepackage{booktabs}
\usepackage{multirow}
\usepackage{graphicx}
\usepackage{microtype}
\usepackage{amsthm}
\usepackage{amssymb}
\usepackage{graphicx}
\usepackage{wrapfig}
\usepackage{tabularx}
\usepackage{algorithm}
\usepackage{algpseudocode}
\usepackage{xcolor}
\DeclareUnicodeCharacter{266B}{}

\newtheorem{proposition}{Proposition}
\AtBeginDocument{\setlength{\abovedisplayskip}{5pt}\setlength{\belowdisplayskip}{5pt}\setlength{\abovedisplayshortskip}{3pt}\setlength{\belowdisplayshortskip}{3pt}}

\makeatletter
\renewenvironment{proof}[1][\proofname]{\par
  \pushQED{\qed}%
  \normalfont \topsep0pt \partopsep0pt \trivlist
  \item[\hskip\labelsep\itshape #1\@addpunct{.}]\ignorespaces
}{\popQED\endtrivlist\@endpefalse}
\makeatother

\title{\hyphenpenalty=10000 CollabFlow: Recursive Self-Improvement\\of Agent Collaboration}

\author{\parbox[t]{\dimexpr\textwidth-2\tabcolsep\relax}{\centering
  Xiao Huang$^{1}$,\; Mingda Zhang$^{1}$,\; Junming Zhang$^{1}$,\; Qiang Huang$^{2}$,\; Hanwen Zhang$^{1}$, \\
  Yue Dai$^{1}$,\; Zijia Wang$^{3}$,\; Xiaoying Tang$^{1}$ \\[3pt]
  {\normalfont $^{1}$The Chinese University of Hong Kong, Shenzhen \quad
  $^{2}$Fudan University \quad
  $^{3}$University of Oxford}}}
\iclrfinalcopy


\pdfpageattr{/Group << /S /Transparency /I true /CS /DeviceRGB >>}
\begin{document}
\maketitle\vspace{-10pt}
\lhead{} 

\begin{abstract}
Recursive self-improvement (RSI) lets a system improve from its own outcomes; in LLM-based multi-agent systems, Agents refine one another within a task, and outcomes improve how they collaborate across tasks.
However, existing multi-agent collaboration leaves this loop open: collaboration is pre-defined at the operator level, topology-only learning keeps verbatim exchange that propagates errors, and reward maximization on a system's own outcomes concentrates on a few teams.
To address these challenges, we propose CollabFlow, an RSI system of Learned Agent Collaboration: a trainable Collab-Director constructs teams of complete Agents, a frozen executor runs them, and each round's outcomes retrain the director.
Within each round, the edges of a collaboration graph carry protocols of Evidence-Conditioned Communication: a receiver adopts a differing answer only when the sender's evidence is stronger by a margin, so the director learns who communicates and how.
Across rounds, we further propose Collaborative Trajectory Balance (CTB), a flow-based objective that credits each team once across its construction orders and targets a reward-proportional distribution over teams, so several good teams stay in play.
We also bound how far this self-generated target moves between rounds, which shrinks as records accumulate.
On twelve datasets, CollabFlow outperforms all baselines and keeps improving across rounds.
Code is available at \url{https://anonymous.4open.science/r/CollabFlow-631E}.
\end{abstract}

\input{sections/01_introduction}
\input{sections/02_related_work}
\input{sections/03_method}
\input{sections/04_methodology}
\input{sections/04_experiments}
\input{sections/05_conclusion}

\bibliography{references}
\bibliographystyle{iclr2027_conference}

\appendix
\input{sections/appendix}

\end{document}

%% file: math_commands.tex
\usepackage{amsmath,amsfonts,bm}

\def\eqref#1{equation~\ref{#1}}

\def\1{\bm{1}}

\DeclareMathAlphabet{\mathsfit}{\encodingdefault}{\sfdefault}{m}{sl}
\SetMathAlphabet{\mathsfit}{bold}{\encodingdefault}{\sfdefault}{bx}{n}



%% file: sections/01_introduction.tex
\section{Introduction}
\label{sec:introduction}
\suppressfloats[t]
\begin{wrapfigure}[18]{r}{0.50\textwidth}
\centering
\vspace{-5pt}
\includegraphics[width=\linewidth]{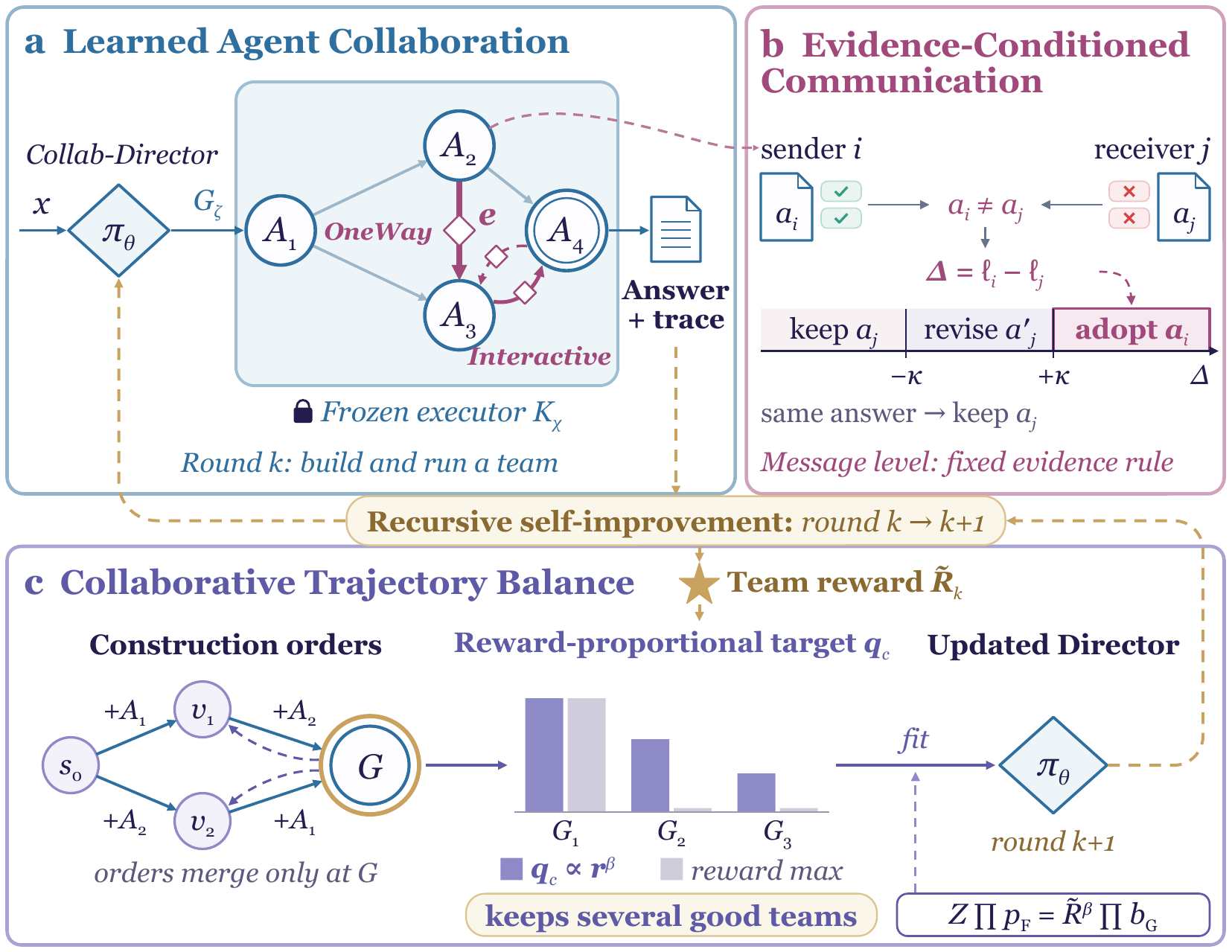}
\vspace{-10pt}
\caption{CollabFlow overview. (a) The director $\pi_\theta$ builds a team for executor $K_\chi$. (b) An evidence rule gates each message. (c) CTB retrains $\pi_\theta$.}
\label{fig:overview}
\vspace{-4pt}
\end{wrapfigure}

Recursive self-improvement (RSI), in which a system improves from its own outcomes, is becoming a concrete systems problem~\citep{chen2026recursive}. LLM-based multi-agent systems~\citep{li2023camel,wu2023autogen,hong2024metagpt} are a natural setting for it: their Agents already refine one another's answers, and every exchange leaves an execution outcome.

Such a system can improve at two levels, as shown in Figure~\ref{fig:overview}. Within a task, communication is the improvement step: specialized Agents supply complementary evidence and verify one another~\citep{qian2024chatdev,liu2024dylan}. Across tasks, the outcomes in turn improve how the team communicates, so the team level improves the message level.
In practice, however, teams still rely on hand-written roles, fixed protocols, and prompt-level rules~\citep{hong2024metagpt,qian2024chatdev}, which are costly to transfer across new tasks, Agent populations, or environments and which no execution outcome revises, so the loop from outcomes back to collaboration stays open.

\begin{figure}[t]
    \centering
    \includegraphics[width=\textwidth]{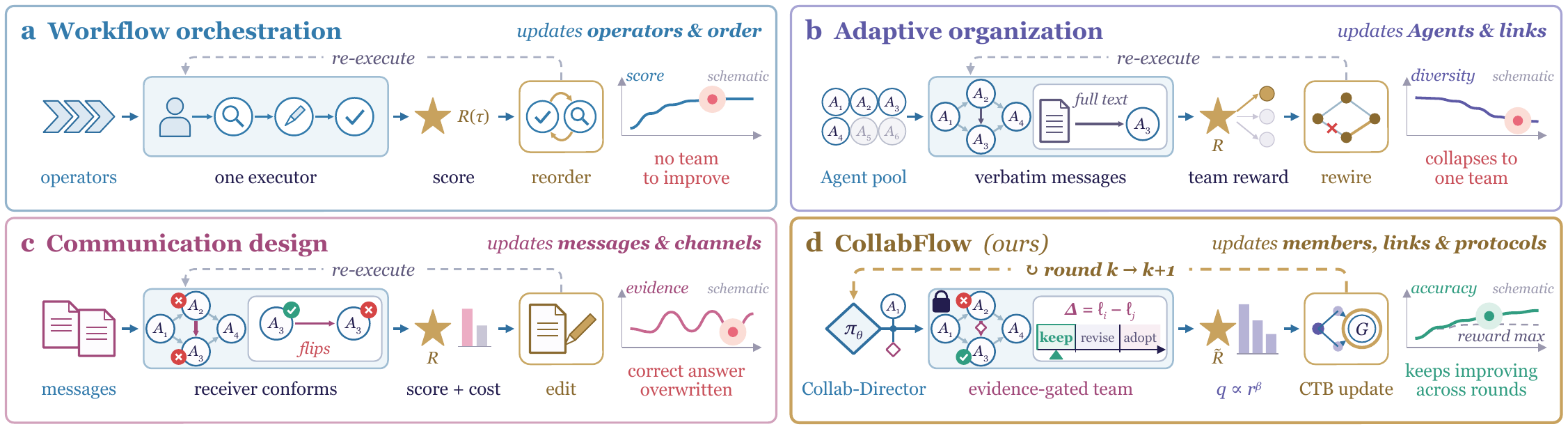}
    \caption{Multi-agent orchestration paradigms as self-improvement loops. Each paradigm runs the same loop of candidate designs, execution, outcome, feedback assessment, update, and re-execution, but updates a different object: (a) workflow orchestration updates operators and their order for one executor; (b) adaptive organization updates which Agents are used and linked; (c) communication design adjusts messages and channels; (d) CollabFlow jointly updates members, links, and edge protocols, trains the Collab-Director with CTB, and admits each message through an evidence gate.}
    \label{fig:taxonomy}
\end{figure}

To address these issues, three main paradigms of multi-agent orchestration have emerged, as shown in Figure~\ref{fig:taxonomy}.
First, workflow orchestration optimizes executable operator sequences for a single executor from execution feedback~\citep{zhang2025aflow,zhang2026flowsteer,zhang2026skillflow}.
Second, adaptive organization trains an orchestrator via policy gradient or group relative policy optimization (GRPO) to activate or connect Agents~\citep{dang2025evolving,tastan2026nexa,chen2026lemon,wang2026agentconductor}.
Third, communication design---such as debate, sparse consensus, multi-order messaging, and latent channels~\citep{du2023debate,gou2026dysco,guan2026moc,liu2026latentsurvey}---studies what information Agents exchange and how densely they interact within a task.

However, these methods leave the self-improvement loop open at three points:
\textbf{(i) Pre-defined, operator-level collaboration}---existing orchestration composes operators for a single executor~\citep{zhang2025aflow,zhang2026flowsteer} or fixes roles and message routes in advance~\citep{hong2024metagpt,dang2025evolving,tastan2026nexa}, so Agents are never the units that communicate;
\textbf{(ii) Topology-only learning under verbatim exchange}---learned topologies decide which Agents exchange messages~\citep{tastan2026nexa,yu2026codebook,jiang2026kgat}, but every received message re-enters the receiver's reasoning in full, spreading conformity and errors~\citep{ann2026interactiontax,bertalanic2026consensus,chun2026whatcommunicate} and letting concurrent writers corrupt the environment~\citep{yang2026concurrency}; the policy learns who talks but not how messages are used;
\textbf{(iii) Collapse of the team-level loop under reward maximization}---trained on its own outcomes, a reward-maximizing policy concentrates on a few teams~\citep{yu2026codebook,li2026dmpo}, leaving later rounds fewer teams to learn from, while the construction orders of one team split its credit.

To address these challenges, we propose \textbf{CollabFlow}, an RSI system of \emph{Learned Agent Collaboration}: a trainable Collab-Director builds teams of complete Agents, a frozen executor runs them, and each round's outcomes retrain the director.
Within a round, Agents refine one another's answers over a collaboration graph whose edges carry protocols of \emph{Evidence-Conditioned Communication}, which the Collab-Director selects together with the topology, so it learns who communicates and how.
On each edge, a receiver adopts a differing answer when the sender's evidence is stronger by a margin, keeps its own in the reverse case, and otherwise makes one bounded revision; on stateful tasks, a single Executor holds write permission.
Across rounds, the Collab-Director is trained via \emph{Collaborative Trajectory Balance} (CTB), a flow-based objective that credits each team once across its construction orders~\citep{malkin2022trajectory,bengio2023foundations} and targets a reward-proportional distribution over teams, so every positive-reward team keeps probability. Each round thus improves both the team and the records that define the next reward, and we bound how far this self-generated target moves between rounds, a bound that shrinks as the sampled teams' records accumulate.

We evaluate on twelve datasets spanning question answering, mathematical and expert reasoning, embodied planning, code generation, and web shopping, with held-out suites probing transfer to unseen task types.
CollabFlow outperforms direct-inference, fine-tuning, workflow-search, and agent-RL baselines in and out of distribution, lifts six other frozen executors, and uses the fewest tokens per item.
Across evolution steps, most of its gains persist and its team distribution stays diverse and stable, whereas reward-maximizing training shifts more between steps and finds fewer distinct solutions.
Ablations confirm each part: removing communication or director training costs the most, and every ablation costs more out of distribution than in it. As Figure~\ref{fig:taxonomy} shows, CollabFlow turns pre-defined teams into a collaboration that recursively improves itself under an explicit budget.

%% file: sections/02_related_work.tex
\section{Related Work}
\label{sec:related_work}

\textbf{Multi-Agent Collaboration and Self-Evolution.} Early multi-agent systems fixed roles and message routes by hand~\citep{hong2024metagpt,qian2024chatdev}. Learned orchestration now picks Agents sequentially~\citep{dang2025evolving}, predicts sparse topologies~\citep{tastan2026nexa,yu2026codebook}, or designs what Agents exchange~\citep{du2023debate,gou2026dysco}. In all three, messages are still exchanged verbatim by convention~\citep{motger2026madsurvey}, which induces conformity~\citep{ann2026interactiontax,bertalanic2026consensus}. Self-evolving systems adapt topologies at inference time through trace auditing~\citep{huang2026manta} or refine latent messages over recursion rounds~\citep{yang2026recursive}. CollabFlow makes team communication itself the object of recursive self-improvement: it jointly selects topology and protocols and retrains this choice on its own outcomes.

\textbf{Recursive Self-Improvement and RL for Agents.} Recursive self-improvement studies systems that produce better versions of themselves, mostly improving prompts, harnesses, skills, or weights against a fixed evaluator~\citep{chen2026recursive}. Agent RL is a common outer step of such loops, using trajectory rewards to sequence Agents~\citep{dang2025evolving}, edit workflow graphs~\citep{zhang2026flowsteer}, or evolve skills~\citep{zhang2026skillflow} under group-relative~\citep{deepseekmath2024} or counterfactual~\citep{chen2026lemon} credit. These objectives maximize reward and collapse to one mode, narrowing what later rounds learn from. Building on trajectory balance~\citep{malkin2022trajectory,bengio2023foundations}, CollabFlow scores the canonical complete-Agent team rather than each construction order.

%% file: sections/03_method.tex
\section{Preliminaries}
\label{sec:preliminaries}
\suppressfloats[t]
\begin{figure}[t]
 \centering
 \includegraphics[width=\textwidth]{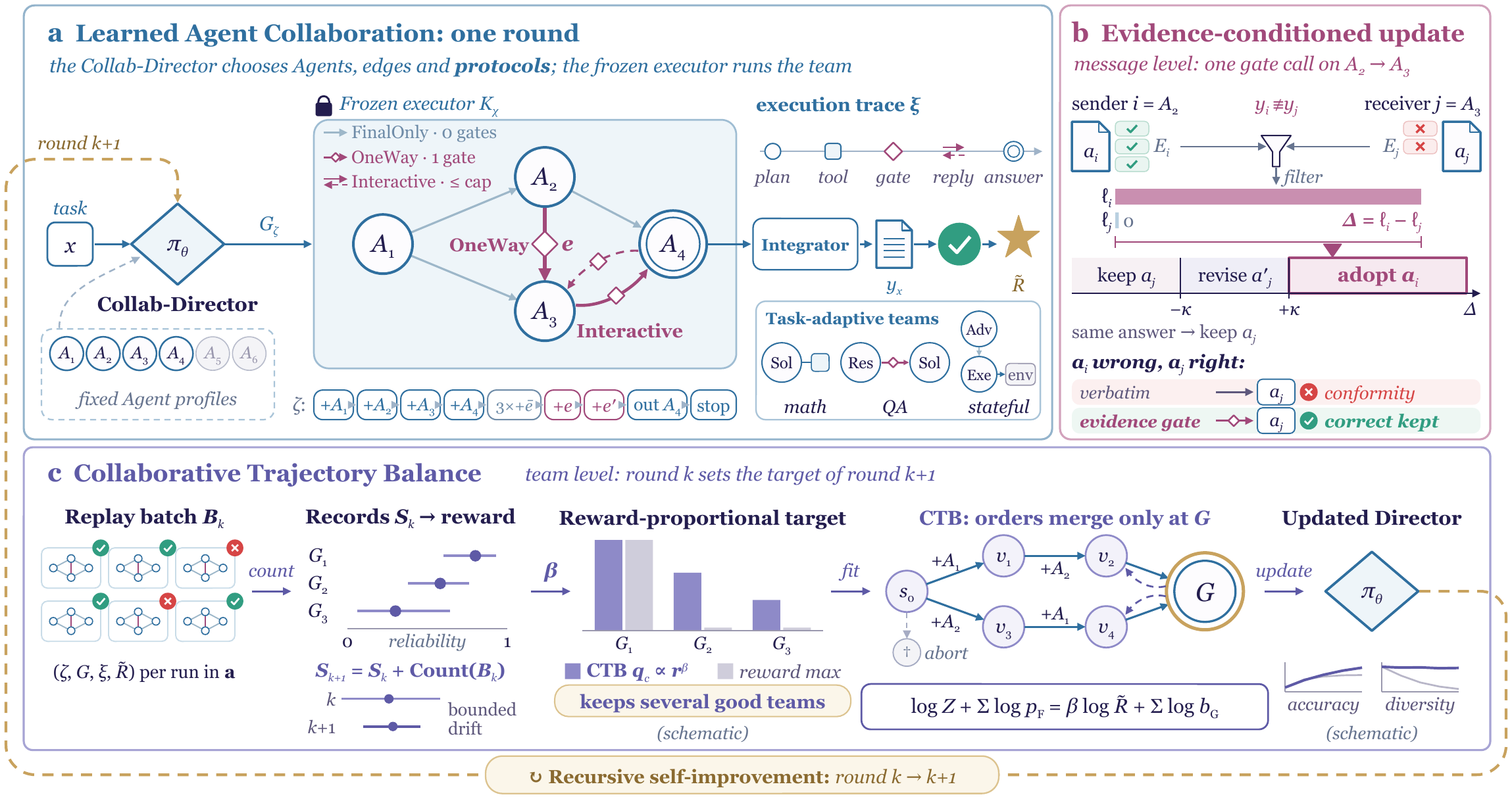}
 \caption{The self-improvement loop of CollabFlow. (a) Within a round, the Collab-Director builds a team from fixed Agent profiles through checked actions, the frozen executor runs it, and the evaluated outcome becomes a replay record (Section~\ref{subsec:paradigm}). (b) On each edge, the protocol sets how many evidence gates run, and a fixed evidence rule keeps, adopts, or revises the receiver's candidate (Section~\ref{subsec:graph}). (c) Across rounds, CTB merges construction histories only at the complete canonical team and updates the director with the length-normalized residual and a KL term (Section~\ref{subsec:rl}).}
 \label{fig:training_pipeline}
\end{figure}

\textbf{Definition 1: Collaboration Graph.} A collaboration graph is an annotated directed graph $G=(V,E,o)$, where $V\subseteq\mathcal A_x$ selects complete Agents with immutable profiles, $E\subseteq V\times V\times\mathcal P$ labels each edge with a protocol from the fixed registry $\mathcal P=\{\textsc{FinalOnly},\textsc{OneWay},\textsc{Interactive}\}$, and $o$ is the output rule. Agent profiles specify roles, tools, and inherited permissions, and stateful teams select exactly one authorized Executor. The loop rebuilds this graph in every round: its edges fix who communicates, and its protocols fix how each message may improve the receiver.

\textbf{Definition 2: Collaboration Trajectory.} Fix a context $c=(x,\chi,k)$ with task $x$, round $k$, and a specification $\chi$ of Agents, tools, contracts, and separate budgets $\mathbf B^{\rm con}+\mathbf B^{\rm exec}\le\mathbf B^{\rm total}$ (Appendix~\ref{app:theory_notation}). The Collab-Director builds each team through a construction trajectory on sufficient states $v_t$,
\begin{equation}
 \zeta=(v_0,e_1,v_1,\ldots,e_T,z),\qquad
 e_t=(v_{t-1},u_t,v_t),\qquad z\in\{g_G\}_{G\in\mathcal G_c}\cup\{\dagger\},
 \label{eq:orchestration_trajectory}
\end{equation}
where $T$ includes the accepted stop or the forced abort $\dagger$, and all accepted stops for one canonical team enter the same absorbing node $g_G$, so a valid trajectory yields $G_\zeta=\operatorname{Can}(\operatorname{Decode}(\zeta))$.

\textbf{Definition 3: Self-Improvement Trajectory.} Across rounds, CollabFlow updates the parameters $\Theta_k=(\theta_k,\phi_k,\psi_k)$ of the director, backward policy, and normalizer, and the reliability snapshot $\mathcal S_k=\{(s_k(G),n_k(G))\}_G$, from the batch $\mathcal B_k$ of round $k$:
\begin{equation}
 \Theta_{k+1}=\mathcal U_{\rm CTB}(\Theta_k;\mathcal B_k,\mathcal S_k),\qquad
 \mathcal S_{k+1}=\mathcal S_k+\operatorname{Count}(\mathcal B_k),
 \label{eq:rsi_trajectory}
\end{equation}
where $\operatorname{Count}$ adds each unique outcome once (Section~\ref{subsec:rl}), so the teams of round $k$ both retrain the director and define the reward of round $k+1$ through their success counts.

\textbf{Problem Statement.} Let $P_\theta$ be the law of construction attempts, $\mathsf V$ the event of valid completion, and $\mu_\theta^{\mathsf V}(G\mid c)=P_\theta(z=g_G\mid\mathsf V,c)$ the resulting team distribution. For a fixed positive reward table $r_G$ and $\beta>0$, each round pursues the team-level objective
\begin{equation}
 \min_\theta\ \mathbb E_{x\sim\mathcal D_X}
 D\big(\mu_\theta^{\mathsf V}(\cdot\mid c)\,\big\|\,q_c\big),\qquad
 q_c(G)\propto r_G^\beta,
 \label{eq:preliminary_objective}
\end{equation}
where $D$ is a divergence such as total variation, with context and reward fixed per round.

%% file: sections/04_methodology.tex
\section{Methodology: CollabFlow}
\label{sec:method}

As illustrated in Figure~\ref{fig:training_pipeline}, CollabFlow nests two improvement steps in the loop of Eq.~\ref{eq:rsi_trajectory}: within a round, Agents improve one another's answers under evidence-conditioned protocols, and across rounds, CTB retrains the Collab-Director on its own outcomes (Algorithm~\ref{alg:app_training}). Section~\ref{subsec:paradigm} defines the team, Section~\ref{subsec:graph} the message-level step, and Section~\ref{subsec:rl} the team-level step.

\subsection{Learned Agent Collaboration: What Each Round Changes}
\label{subsec:paradigm}

As shown in Figure~\ref{fig:training_pipeline}a, CollabFlow follows a Director--Executor paradigm: the trainable Collab-Director designs the team and a frozen executor runs its Agents, so each round changes only how teams are designed, from which Agents they use to how those Agents communicate.

\textbf{Team Construction.} A valid trajectory $\zeta$ yields a team $G_\zeta$, which the frozen executor $K_\chi$ runs:
\begin{equation}
 G_\zeta=\operatorname{Can}(G_0\oplus u_1\oplus\cdots\oplus u_T),\qquad
 (\xi,y_x)\sim K_\chi(\cdot\mid G_\zeta,x;\mathbf B^{\rm exec}).
 \label{eq:design-execute}
\end{equation}
where $\oplus$ applies checked updates, $\xi$ is the execution trace, and failures give $y_x=\bot$. Canonicalization merges histories that differ only in serialization order and keeps Agents, protocols, permissions, and outputs distinct, so a team's histories share its execution reservation and reward snapshot.

\textbf{Graph State and Valid Actions.} The state $s_t$ records the partial team and construction resources, and $v_t$ adds the accepted history and decision index. The director samples only from the legal set $\mathcal U_c(v_t)$ of actions that pass the construction checks at $v_t$, so no sampled proposal is rejected:
\begin{equation}
 \begin{gathered}
 s_t=(G_t,\mathcal A_x,\mathbf b_t^{\rm con}),\qquad v_t=(s_t,H_t,t),\\[2pt]
 \bar\pi_\theta(u\mid v_t,c)
 =\frac{\mathbf1\{u\in\mathcal U_c(v_t)\}\,\pi_\theta(u\mid v_t,c)}
 {\sum_{u'\in\mathcal U_c(v_t)}\pi_\theta(u'\mid v_t,c)}.
 \end{gathered}
 \label{eq:organization_state}
\end{equation}
Actions select Agents, bind skills, add protocol-labeled edges, select the output, or stop; a stop requires a complete valid team, and a timeout or exhausted budget returns an explicit abort (Appendix~\ref{app:action_semantics}).

\textbf{Marginalization and Decoupling.} Conditional on valid construction,
\begin{equation}
 P(dy\mid c,\mathsf V)=
 \sum_{G\in\mathcal G_c}\mu_\theta^{\mathsf V}(G\mid c)
 K_\chi(dy\mid G,x;\mathbf B^{\rm exec}).
 \label{eq:marginal-likelihood}
\end{equation}
Across rounds, only $\mu_\theta^{\mathsf V}$ changes in Eq.~\ref{eq:marginal-likelihood}, so every change in the system's output distribution comes from the teams the director builds, not from updates to the executor or the Agents.

\begin{proposition}
 \label{prop:construction}
 Under checked preservation (C1), every accepted trajectory yields a valid canonical team, and under component completeness and feasible prefixes (C2--C3), every target team keeps a legal construction trajectory in every round (Appendix~\ref{app:construction_soundness}).
\end{proposition}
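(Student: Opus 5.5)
The statement has two halves, and I would prove them separately. Both arguments go by induction along the construction trajectory of Eq.~\ref{eq:orchestration_trajectory}. The conditions (C1)--(C3) are only named here and made precise in Appendix~\ref{app:construction_soundness}, so I will use them in the natural form the text suggests:
\begin{itemize}
\item (C1) every action in the legal set $\mathcal U_c(v_t)$ maps a well-formed partial graph to a well-formed partial graph; it respects immutable profiles, the registry $\mathcal P$, inherited permissions, the single-Executor constraint, and the construction budget $\mathbf b^{\rm con}_t$. An accepted stop is legal only when the partial team is complete and valid.
\item (C2) every component of a target team $G$ can be produced by some action: Agent selections from $\mathcal A_x$, skill bindings, protocol-labeled edges, and the output rule $o$.
\item (C3) every prefix of some fixed serialization of $G$ is a feasible state, i.e., it passes the construction checks and stays within budget.
\end{itemize}

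\textbf{Soundness.} Take an accepted trajectory $\zeta$ that ends in some $g_G$ rather than $\dagger$. I would induct on $t$ with the invariant ``$G_t=G_0\oplus u_1\oplus\cdots\oplus u_t$ is a well-formed partial team within budget.'' The base case is that $G_0$ is empty or initial, and hence well-formed. For the step, the masked policy $\bar\pi_\theta$ of Eq.~\ref{eq:organization_state} puts zero mass outside $\mathcal U_c(v_{t-1})$, so every sampled $u_t$ is a checked action, and (C1) preserves the invariant. The final action is a stop, and by (C1) a stop is legal only at a complete valid team. So $\operatorname{Decode}(\zeta)$ is a valid team. It remains to show that $\operatorname{Can}$ keeps it valid. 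By construction, $\operatorname{Can}$ only quotients by serialization order and keeps Agents, protocols, permissions and outputs distinct, so validity, which depends only on those components, is invariant under it. Hence $G_\zeta\in\mathcal G_c$.

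\textbf{Reachability.} Fix a round $k$ and a target team $G\in\mathcal G_c$. Using (C2), I would build an explicit action sequence $u_1,\dots,u_m$ from $G$'s components in a fixed canonical serialization order: Agents first, then skill bindings, then edges with their protocols, then the output rule, then stop. By (C3), each prefix state $v_{t-1}$ is feasible and $u_t$ passes its checks, so $u_t\in\mathcal U_c(v_{t-1})$. The final stop is legal because the completed graph is $G$, which is valid. Applying $\operatorname{Can}$ returns $G$, since $G$ is already canonical. The claim must hold in every round, so I would then check that nothing round-dependent enters the legal set. The legal set depends on $c=(x,\chi,k)$ only through $\chi$ (fixed profiles, registry, budgets), while $\theta_k$ affects only the probabilities and not the support $\mathcal U_c$. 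Consequently the trajectory is legal for every $k$. If $\pi_\theta$ has full support on legal actions (softmax), the trajectory also has positive probability. I would note this as a remark, since the statement only asks for legality.

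\textbf{Main obstacle.} I expect the hardest step to be reachability under the construction budget and the stateful single-Executor constraint. Intermediate prefixes may violate a check that only the complete team satisfies; for example, a stateful team may be invalid before its Executor is selected. The first thing I would try is to choose the serialization order so that constraint-satisfying components come first: select the authorized Executor before other Agents, and add edges only after both endpoints exist. I would then argue monotonicity of budget consumption, so that a complete team within $\mathbf B^{\rm con}$ has all its prefixes within budget as well. If that fails, I would fall back on (C3) as an assumption about the existence of a feasible order, which is exactly what the proposition's hypothesis supplies.
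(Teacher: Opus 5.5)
Your argument is essentially the paper's proof of Theorem~\ref{thm:cf_construction}. Soundness goes by induction on accepted actions via C1, the checked \texttt{stop}, and the fact that $\operatorname{Can}$ only forgets serialization. Coverage goes by induction along a prerequisite-respecting order, with C2 making each next action available and C3 making it affordable. Validity holds in every round because C1--C3 concern only the fixed registry $\chi$.

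Two points need tightening. First, take the order that C3 supplies from the outset instead of your hand-picked Agents--skills--edges--output order. C3 only guarantees that \emph{some} order is feasible. Budget monotonicity cannot show that your particular order fits, because construction costs such as scoring tokens and model calls depend on the history. Second, leave skill bindings out of the target components, as the paper's theory does by using only its four structural actions. The skill library is edited between evolution steps, which would break your claim that nothing round-dependent enters the legal set.
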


\subsection{Message-Level Improvement: Evidence-Conditioned Communication}
\label{subsec:graph}

As shown in Figure~\ref{fig:training_pipeline}b, communication is the inner improvement step: within a round, Agents revise one another's answers. Verbatim exchange lets every message re-enter the receiver in full, which the DeGroot model~\citep{degroot1974reaching} treats as fixed-weight averaging whose consensus is set by the initial answers rather than by evidence (Lemma~\ref{lem:cf_degroot}). CollabFlow instead lets checkable evidence decide how each message is used, and the director learns how many evidence gates each edge runs.

\textbf{Evidence-Gated Update.} Agent $v$ ends its local episode with a candidate version $a_v$ and evidence records $E_v$. Its score $\ell_v=\max\{w(e):e\in\operatorname{Filter}_c(E_v,a_v)\}\cup\{0\}$ keeps the strongest record that still checks this version, with $w=2$ for a passed executable check and $w=1$ for retrieval support (Appendix~\ref{app:cf_provenance}). A directed update $i\to j$ between differing answers compares the evidence gap $\Delta=\ell_i-\ell_j$ of the sender over the receiver with a margin $\kappa$:
\begin{equation}
 a_j^{+}=a_i\ \ \text{if }\Delta>\kappa,\qquad
 a_j^{+}=a_j\ \ \text{if }\Delta<-\kappa,\qquad
 a_j^{+}=\mathrm{Revise}_c(a_j,a_i)\ \ \text{if }|\Delta|\le\kappa,
 \label{eq:gate}
\end{equation}
where $\mathrm{Revise}_c$ is one bounded, tool-free call in place of a re-execution, and its output is filtered and rescored, so every record stays with the version it checked. With integer scores in $\{0,1,2\}$, we fix $\kappa=1$, the only integer margin that adopts on a two-level gap but revises on a one-level gap. Then a sender whose answer passed a unit test ($\ell_i=2$) overrides a receiver whose answer is only asserted ($\ell_j=0$), while two asserted answers trigger a revision; stated confidence never enters $\Delta$.

\textbf{Who Communicates and How.} A protocol $p$ fixes the edge program $g^{p}_{e}$ that edge $e$ runs: the identity for \textsc{FinalOnly}, one gate for \textsc{OneWay}, and a capped number of gates in alternating directions for \textsc{Interactive}, stopping at agreement or a decisive update. One fixed-order sweep over the edges of $G$, topological for acyclic teams and repeated up to a fixed cap for cyclic ones, passes the candidates $\mathbf a=(a_v)_{v\in V}$ through the team's message-level operator
\begin{equation}
 \mathcal C_G=g^{p_{|E|}}_{e_{|E|}}\circ\cdots\circ g^{p_1}_{e_1},\qquad
 \mathbf a^{+}=\mathcal C_G(\mathbf a),
 \label{eq:message_operator}
\end{equation}
so the director's edges decide who communicates, their protocols decide how, and the recursion of Section~\ref{subsec:rl} improves both (Appendix~\ref{app:cf_runtime}). On stateful tasks, only the Executor may write and Advisors act through its next decision, so internal writes are serial~\citep{papadimitriou1979serializability}.

\textbf{Measurable Harm.} Equation~\ref{eq:gate} also fixes how a correct receiver can be hurt. Let \mbox{$C=\{z_j=1,z_i=0\}$}, \mbox{$J=\{|\Delta|\le\kappa\}$}, the false-adoption rate $\delta_C=P(\Delta>\kappa\mid C,c)$, and the revision error $c_J=P(z_j^{+}=0\mid J,C,c)$. On $C$ the answers differ, so the three branches partition $C$:
\begin{equation}
 P(z_j^{+}=0\mid C,c)=\delta_C+P(J\mid C,c)\,c_J.
 \label{eq:flip_bound}
\end{equation}
Choosing $\kappa=1$ over $\kappa=0$ removes $P(\Delta=1\mid C,c)$ from $\delta_C$ by sending one-level gaps to revision, and the same partition gives the rate at which a gate corrects a wrong receiver (Corollary~\ref{cor:cf_correction}); RQ4 ablates each of these branches. A flipped answer lowers its team's answer score and posterior success rate, so the next round's target moves toward teams whose messages help (Section~\ref{subsec:rl}).

\begin{proposition}
 \label{prop:revision}
 With version-bound evidence, ordered gating, and a single writer (S1--S3), evidence-conditioned communication keeps each record with the version it checked, flips a correct receiver only by false adoption or revision error, and serializes internal writes (Appendices~\ref{app:cf_provenance}, \ref{app:revision_bound}, and~\ref{app:state_serializability}).
\end{proposition}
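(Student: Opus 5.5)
The plan is to prove the three claims of Proposition~\ref{prop:revision} separately, one per assumption (S1--S3). Each is an invariant of the message-level operator $\mathcal C_G$ in Eq.~\ref{eq:message_operator}, so each is established for a single gate application and then extended to the whole sweep. Because $\mathcal C_G$ is a finite composition of edge programs $g^{p}_{e}$, and each edge program is itself a finite, capped sequence of gates (zero for \textsc{FinalOnly}, one for \textsc{OneWay}, boundedly many for \textsc{Interactive}), it suffices to show that every invariant holds before and after one application of Eq.~\ref{eq:gate}. For cyclic teams the fixed cap keeps the number of repetitions finite, so the same induction applies.

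\emph{Provenance.} We use S1 (version-bound evidence), meaning each record $e\in E_v$ carries the version it checked and $\operatorname{Filter}_c(E_v,a_v)$ keeps only records bound to $a_v$. The invariant $I$ is: for every $v$, every record counted in $\ell_v$ checks the current $a_v$. We verify $I$ on each branch of Eq.~\ref{eq:gate}.
\begin{itemize}
\item On the keep branch nothing changes.
\item On the adopt branch, $a_j^+=a_i$ together with the sender's records bound to $a_i$. Since $a_i$ is the exact version those records checked, $I$ is preserved.
\item On the revise branch, the output of $\mathrm{Revise}_c$ is filtered and rescored. Records bound to $a_j$ or $a_i$ no longer match the new version unless it coincides with one of them, so only valid records survive.
\end{itemize}
Induction over the gates then gives $I$ for $\mathbf a^+$.

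\emph{Flip characterization.} We use S2 (ordered gating). On the event $C=\{z_j=1,z_i=0\}$ the answers differ, so exactly one of the three mutually exclusive branches $\{\Delta>\kappa\}$, $\{\Delta<-\kappa\}$, $J$ fires. On the keep branch $a_j^+=a_j$, hence $z_j^+=1$ and there is no flip. On the adopt branch $a_j^+=a_i$, hence $z_j^+=z_i=0$, which is exactly a false adoption. On $J$ a flip is a revision error by definition. The law of total probability then gives Eq.~\ref{eq:flip_bound}. To carry this from one gate to the sweep, ordered gating fixes a deterministic order, so the harm event of the sweep is the union over the gates that touch $j$ of these per-gate events. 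Any flip of a correct receiver must therefore first occur at some gate through one of these two mechanisms.

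\emph{Serializability.} We use S3 (single writer). The Executor is the only Agent with write permission, and by Proposition~\ref{prop:construction} every valid stateful team has exactly one authorized Executor. Advisors reach the environment only through the Executor's subsequent decisions. All environment writes therefore come from one sequential process, and their history is trivially conflict-equivalent to a serial schedule in the sense of \citet{papadimitriou1979serializability}.

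I expect the main obstacle to be the revise branch of the provenance invariant. We must argue that rescoring never reattaches a record to a version it did not check, for example when $\mathrm{Revise}_c$ returns a string equal to $a_i$ or $a_j$. I would handle this by defining version identity by content hash, so that such coincidences legitimately retain the matching records. A smaller subtlety is that \textsc{Interactive} edges alternate direction. The flip argument must apply to each gate with the roles of $i$ and $j$ swapped as appropriate, and the per-gate events must be defined conditional on the current state rather than on the initial answers.
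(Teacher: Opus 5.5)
Your provenance and flip arguments are essentially the paper's. The first is the induction of Lemma~\ref{lem:cf_evidence_invariant}: keeping changes nothing, a copy carries the whole version with its filtered records, and a revision is refiltered. The second is the three-branch partition of $C$ plus the law of total probability in Theorem~\ref{thm:cf_flip}.

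Two small corrections there. First, ``ordered gating'' in the paper means the priority order of Eq.~\ref{eq:cf_ordered_gate}, not a deterministic sweep order. The agreement branch of that gate is excluded on $C$ because the answer equivalence $\equiv$ is assumed to preserve correctness, and that is what makes the remaining three branches a partition. The paper proves the identity per scheduled gate invocation, conditional on the pre-invocation history; your sweep-level extension is harmless but not needed. Second, the revise-branch ``obstacle'' does not arise in the paper's model. A candidate $a=(\iota,y,\nu)$ carries an immutable version identifier, a revision always creates a new version, and a parent's record survives only if the applicability predicate $A_c$ certifies the transfer. Re-identifying versions by content hash is a valid certification only if the hash covers the environment version $\nu$ and the check is a deterministic function of artifact and environment.

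The genuine gap is in the serialization claim, and it comes from misreading the hypotheses. S1--S3 are not one assumption per claim; all three are single-writer conditions:
\begin{itemize}
\item S1, enforced exclusive write permission, including indirect requests through delegated tools;
\item S2, serial completion: the Executor issues its next mutation only after the previous one and all its asynchronous effects have completed;
\item S3, scope: only team-controlled components are covered, with external interleavings excluded.
\end{itemize}
Your argument uses only the existence of one writer and then asserts that all writes ``come from one sequential process.'' That assertion is exactly S2, and it does not follow from there being a single writer. A sole Executor that launches an asynchronous mutating tool call, and issues a second mutation before the first has taken effect, produces overlapping writes whose outcome need not equal any serial composition. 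The paper orders the internal mutations by issue time and uses S2 to obtain $w_1\prec\cdots\prec w_m$, hence $S_m=f_{w_m,\xi_m}\circ\cdots\circ f_{w_1,\xi_1}(S_0)$ (Theorem~\ref{prop:single_writer}).

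Your conclusion about ``all environment writes'' is also too strong. When an external participant such as the WebShop server mutates state, only internal writes are totally ordered. The composition formula additionally requires that no external mutation interleaves, which is what S3 handles.

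Finally, the structural ``exactly one Executor'' from Proposition~\ref{prop:construction} does not by itself give S1. The permission must be enforced at runtime so that Advisors cannot mutate state through delegated tools.
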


\subsection{Team-Level Improvement: Collaborative Trajectory Balance}
\label{subsec:rl}

As shown in Figure~\ref{fig:training_pipeline}c, the team-level step retrains the Collab-Director across rounds on the outcomes of the teams it built and must keep several good teams in play. CTB builds on generative flow networks~\citep{bengio2021flow,bengio2023foundations} and trajectory balance (TB)~\citep{malkin2022trajectory}, with the complete-Agent team as its scored object: all construction orders of a team end at one canonical terminal, so the team receives its reward once, read from the loop's own records.

\textbf{Path Kernels.} The backward kernel decides how a team's credit is shared among the construction orders that reach it~\citep{shen2023towards}. Let $e_t=(v_{t-1},u_t,v_t)$ be a complete construction transition. With deterministic validation, the forward kernel is the legal-set policy of Eq.~\ref{eq:organization_state}, $p_F(e_t\mid v_{t-1},c)=\bar\pi_\theta(u_t\mid v_{t-1},c)$. The learned backward kernel $b_\phi(e_t\mid v_t,c)$ normalizes over the actual incoming parent--action pairs in the $G$-ancestral subgraph. Interior states carry their history and thus have one parent (Lemma~\ref{lem:cf_dag}), so $b_\phi$ acts only at the canonical terminal, where it picks one of the construction orders $\mathcal O(G)$ of $G$ by removing one component at a time; each removal is normalized, so $b_\phi$ sums to one over these orders (Appendix~\ref{app:likelihoods}). The uniform kernel $1/|\mathcal O(G)|$, by contrast, would give all construction orders of a team the same share of its credit.

\textbf{Balance and Training Surrogate.} In round $k$, for a fixed positive graph reward $r_G$, ideal balance asks every valid complete path $\zeta$ that ends at $g_G$ to satisfy
\begin{equation}
 \log Z_\psi(c)+\sum_{t=1}^{T_\zeta}\log p_F(e_t\mid v_{t-1},c)
 =\beta\log r_G+\sum_{t=1}^{T_\zeta}\log b_\phi(e_t\mid v_t,c).
 \label{eq:balance}
\end{equation}
In product form, $Z_\psi(c)$ times the forward probability of $\zeta$ equals $r_G^\beta$ times the backward probability of this order. Since backward probabilities sum to one over the orders of $G$, summing gives $Z_\psi(c)P_\theta(z=g_G\mid c)=r_G^\beta$, which fixes each team's probability and leaves its split over orders to $b_\phi$. Let $\rho_{\rm TB}(\zeta)$ be the left minus the right side of Eq.~\ref{eq:balance}. For the fixed replay batch $\mathcal B$ collected in round $k$ and its declared state weighting $d_{\mathcal B}$, CTB minimizes a length-normalized loss with KL weight $\alpha_{\rm KL}$
\begin{equation}
 \mathcal L(\theta,\phi,\psi)=
 \mathbb E_{\zeta\sim\mathcal B}(\rho_{\rm TB}(\zeta)/T_\zeta)^2
 +\alpha_{\rm KL}\mathbb E_{v\sim d_{\mathcal B}}
 D_{\rm KL}(\bar\pi_\theta(\cdot\mid v)\|\pi_{\rm ref}(\cdot\mid v)).
 \label{eq:organization_objective}
\end{equation}
Dividing the residual by path length puts short and long constructions on one scale without changing the zero-residual solutions. With $\pi_{\rm ref}=\bar\pi_{\theta_k}$, the director that collected the batch, the KL term is a proximal penalty as in TRPO and PPO~\citep{schulman2015trust,schulman2017proximal}: it damps each update, while every balanced director stays a fixed point (Lemma~\ref{lem:cf_proximal}).

\textbf{Reward-Proportional Teams.} At exact valid-path balance with fixed rewards,
\begin{equation}
 \mu_\theta^{\mathsf V}(G\mid c)=\frac{r_G^\beta}{W_{\mathsf V}},\qquad
 W_{\mathsf V}=\sum_{G'}r_{G'}^\beta,\qquad
 Z_\psi(c)P_\theta(\mathsf V\mid c)=W_{\mathsf V}.
 \label{eq:team_distribution}
\end{equation}
At this fixed point, every positive-reward team keeps probability, so the next round starts from a spread of teams. For example, with $\beta=1$ and rewards $0.6$, $0.4$, and $0.2$, the director samples the three teams with probabilities $1/2$, $1/3$, and $1/6$, the purple bars of Figure~\ref{fig:overview}c, whereas reward maximization places all mass on the first. Under inexact balance, a residual of magnitude at most $\eta$ on every valid path keeps $\mu_\theta^{\mathsf V}$ within total variation $\tanh(\eta/2)$ of $q_c$ (Theorem~\ref{thm:cf_approx}).

\textbf{Self-Generated Team Reward.} The reward of each round is partly generated by the loop itself. With counters frozen before batch $k$, an execution $\xi$ of team $G$ on task $x$ supplies
\begin{equation}
 \widetilde R_k(G,x,\xi)=\varepsilon_{\min}+r_{\rm ans}(G,x,\xi)
 \frac{s_k(G)+1/2}{n_k(G)+1}\exp[-L_{\rm rel}(G)].
 \label{eq:terminal_reward}
\end{equation}
Here $\varepsilon_{\min}>0$ keeps $\log\widetilde R_k$ finite, $0\le r_{\rm ans}\le1$, and $0\le s_k\le n_k$; the fixed structural code, measured in nats, and the reference class define $L_{\rm rel}\ge0$. In words, the reward multiplies the answer score by two factors with no tuned weight: the team's Jeffreys-smoothed posterior success rate~\citep{jeffreys1946invariant} and a minimum-description-length structural prior~\citep{rissanen1978modeling}, so past records decide which teams the next round favors. Counters add each unique outcome once after the batch, so each round generates the target of the next, and the more records a team has, the less one round can move its target; with fresh $\widetilde R_k$, the ideal $r_G$ is the team's geometric-mean reward $\exp\mathbb E[\log\widetilde R_k]$, which favors teams that succeed consistently (Theorem~\ref{thm:cf_random_reward}).

\begin{proposition}
 \label{prop:terminal_distribution}\label{prop:bounded_target}
 Under exact valid-path balance at fixed rewards (T1--T3, T4$'$), CTB samples each canonical team in proportion to its tempered reward among valid completions, and with a fixed answer-score table and structural code, the targets of consecutive rounds differ by a total-variation bound that shrinks as the sampled teams' records accumulate (Theorem~\ref{thm:cf_raw_valid}, Corollary~\ref{cor:cf_target_drift}).
\end{proposition}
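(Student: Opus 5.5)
The statement has two parts, and I plan to prove them separately. The first is a fixed-point identity that follows from the balance condition Eq.~\ref{eq:balance}. The second is a perturbation bound on the self-generated reward Eq.~\ref{eq:terminal_reward}.

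\emph{Part 1: reward-proportional sampling.} Fix a context $c$ and a canonical team $G$ with $r_G>0$. I exponentiate Eq.~\ref{eq:balance} on a valid path $\zeta$ ending at $g_G$, which gives
\[
Z_\psi(c)\prod_t p_F(e_t\mid v_{t-1},c)=r_G^\beta\prod_t b_\phi(e_t\mid v_t,c).
\]
Interior states carry their history and so have a unique parent (Lemma~\ref{lem:cf_dag}). Hence each valid path to $g_G$ is determined by its construction order in $\mathcal O(G)$, and the backward product along that path is exactly the probability that $b_\phi$ assigns to the order. The terminal normalization (Appendix~\ref{app:likelihoods}) says these probabilities sum to one over $\mathcal O(G)$. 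Summing the displayed identity over all valid paths into $g_G$ therefore gives $Z_\psi(c)P_\theta(z=g_G\mid c)=r_G^\beta$.

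Two points need care in this sum. First, I will invoke T1--T3 so that only legal actions carry forward mass, via the legal-set policy $\bar\pi_\theta$. Second, I will use T4$'$ so that every order counted by $b_\phi$ is realized by an accepted forward path; Proposition~\ref{prop:construction} (C2--C3) guarantees that such a legal trajectory exists. Summing over $G$ then gives $Z_\psi P_\theta(\mathsf V\mid c)=W_{\mathsf V}$. Dividing yields $\mu_\theta^{\mathsf V}(G\mid c)=r_G^\beta/W_{\mathsf V}$, which is Eq.~\ref{eq:team_distribution}. Aborts reach $\dagger$ and never a $g_G$, so they affect only the normalizer; this is why I condition on $\mathsf V$.

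\emph{Part 2: target drift.} With $r_{\rm ans}$ and $L_{\rm rel}$ fixed, rounds $k$ and $k+1$ differ only through the Jeffreys factor $f_k(G)=(s_k+1/2)/(n_k+1)$. Suppose a team receives $m$ new unique outcomes, $j\le m$ of them successes. Its factor changes by
\[
f_{k+1}-f_k=\frac{(j+s_k+1/2)(n_k+1)-(s_k+1/2)(n_k+m+1)}{(n_k+1)(n_k+m+1)}.
\]
The numerator is at most $m(n_k+1)$ in absolute value, so $|f_{k+1}-f_k|\le m/(n_k+m+1)$.

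I will turn this into a log-ratio bound. Write $\widetilde R=\varepsilon_{\min}+a f$ with $a\in[0,1]$. The map $f\mapsto\log(\varepsilon_{\min}+af)$ has derivative at most $a/(\varepsilon_{\min}+af)$, so
\[
|\log r^{(k+1)}_G-\log r^{(k)}_G|\le \frac{|f_{k+1}-f_k|}{\min(f_k,f_{k+1})}.
\]
Since $f\ge 1/(2(n+1))$, this is $O(m/\sqrt{\cdot})$ at worst. A cleaner route is the multiplicative ratio $f_{k+1}/f_k\in[(n_k+1)/(n_k+m+1)\cdot(\tfrac12)/(s_k+\tfrac12),\dots]$. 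I expect this step to be the main obstacle: a naive ratio is not uniformly small when $s_k=0$. I will first try the additive bound divided by the floor $\varepsilon_{\min}$, giving
\[
\epsilon_G:=\beta\,|\log r^{(k+1)}_G-\log r^{(k)}_G|\le \beta m_G/(\varepsilon_{\min}(n_k(G)+1)).
\]
This bound decays as $1/n_k$, which is the claimed shrinkage.

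Finally, for two tempered targets with pointwise log-ratio bound $\epsilon=\max_G\epsilon_G$, I apply the same argument as in Theorem~\ref{thm:cf_approx}: normalized distributions whose unnormalized log-weights differ by at most $\epsilon$ satisfy $\mathrm{TV}\le\tanh(\epsilon/2)$. I can instead use the refined weighted version, $\mathrm{TV}\le\tfrac12\sum_G q_c(G)\,|e^{\pm\epsilon_G}-1|$ up to normalization. Either version gives Corollary~\ref{cor:cf_target_drift}, with the bound going to zero as the counts $n_k(G)$ of the sampled teams grow. Teams not sampled in batch $k$ have $\epsilon_G=0$ and contribute nothing.
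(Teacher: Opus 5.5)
Your proposal is correct and follows the paper's proof. Part~1 is the valid-path summation of Theorem~\ref{thm:cf_raw_valid}, with your unique-parent and terminal-normalization argument specializing Lemma~\ref{lem:cf_reverse_mass}. Part~2 is Lemma~\ref{lem:cf_counter_drift} followed by the tilt argument of Corollary~\ref{cor:cf_target_drift}; your floor-based bound $\beta m_G/(\varepsilon_{\min}(n_k(G)+1))$ is slightly looser than the paper's $\beta c_G a_G/((n_k+a_G+1)\min\{r_k,r_{k+1}\})$ but equally valid.

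Two small clean-ups:
\begin{itemize}
\item The fact that the backward orders are exactly the forward paths into $g_G$ is the matched-support premise T3. It does not follow from T4$'$ or from C2--C3: the existence of one legal trajectory is not what the sum needs.
\item Drop the abandoned $|f_{k+1}-f_k|/\min(f_k,f_{k+1})$ detour. With $s_k=0$ and $m$ new successes it tends to $2m$ as $n_k$ grows, so it does not decay and the ``$O(m/\sqrt{\cdot})$'' remark is wrong. Only the $\varepsilon_{\min}$ floor gives the shrinking bound.
\end{itemize}
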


%% file: sections/04_experiments.tex
\providecommand{\sd}[1]{\,{$_{\pm\text{#1}}$}}
\providecommand{\dl}[1]{\,\textcolor{deltapink}{(#1)}}
\definecolor{deltapink}{HTML}{D6337F}
\makeatletter
\providecommand{\fittab}[2]{%
  \sbox\z@{\tabcolsep=\z@ #2}%
  \dimen@=\textwidth \advance\dimen@ by -\wd\z@
  \divide\dimen@ by \numexpr 2*#1-2\relax
  \ifdim\dimen@<0.5pt \dimen@=0.5pt\fi
  \tabcolsep=\dimen@ #2}
\makeatother
\makeatletter\providecommand\captionof[1]{\def\@captype{#1}\caption}\makeatother

\section{Experiments}
\label{sec:experiments}

Six research questions test the claims of Sections~\ref{sec:introduction} and~\ref{sec:method}, asking whether learned collaboration wins in distribution, out of distribution and across frozen executors (RQ1--RQ3), and what the paradigm and components, the team-level objective and the recursion contribute (RQ4--RQ6).
\label{sec:experimental_setup}
\begin{table*}[t]
\centering
{\fontsize{6.9}{8.3}\selectfont
\tabcolsep=1pt
\renewcommand{\arraystretch}{1.32}
\fittab{11}{\begin{tabular}{@{}l@{\hspace{2pt}}l|cccccc|cc|c@{}}
\toprule
 & & \textbf{Baseline} & \textbf{SFT} & \textbf{GRPO} & \textbf{AFlow} & \multicolumn{2}{c|}{\textbf{Agent+RL}} & \multicolumn{2}{c|}{\textbf{Skill evolution}} & \textbf{Ours} \\
\cmidrule(lr){3-3}\cmidrule(lr){4-4}\cmidrule(lr){5-5}\cmidrule(lr){6-6}\cmidrule(lr){7-8}\cmidrule(lr){9-10}\cmidrule(lr){11-11}
\textbf{Dataset} & \textbf{Metric} & \textbf{Qwen3.5} & \textbf{Qwen3.5} & \textbf{Qwen3.5} & \textbf{Qwen3.5} & \textbf{FlowSteer} & \textbf{Evolving} & \textbf{SkillFlow} & \textbf{SkillRL} & \textbf{CollabFlow ($\Delta\uparrow$)} \\
\midrule
\multicolumn{11}{@{}l}{\textit{(a) In-Distribution (IID) benchmarks}} \\
\midrule
\textbf{HotpotQA} & EM & 46.25\sd{0.86} & 47.34\sd{0.89} & 49.53\sd{1.62} & 53.75\sd{1.78} & 53.28\sd{2.02} & 60.16\sd{2.14} & 60.78\sd{0.86} & 58.28\sd{2.04} & \textbf{63.91}\sd{1.69}\dl{+17.7} \\
 & F1 & 65.59\sd{0.89} & 67.32\sd{1.46} & 67.80\sd{2.02} & 69.97\sd{1.50} & 71.38\sd{2.18} & 72.95\sd{1.05} & 78.37\sd{1.95} & 75.71\sd{1.96} & \textbf{79.72}\sd{1.96}\dl{+14.1} \\
\textbf{TriviaQA} & EM & 69.69\sd{1.28} & 69.53\sd{1.66} & 69.84\sd{1.96} & 87.34\sd{1.60} & 87.34\sd{1.02} & 86.88\sd{1.16} & 86.56\sd{1.50} & 82.97\sd{0.35} & \textbf{87.97}\sd{1.88}\dl{+18.3} \\
 & F1 & 74.67\sd{2.11} & 77.03\sd{2.10} & 76.62\sd{1.52} & 89.04\sd{0.43} & 88.55\sd{1.87} & 89.71\sd{1.10} & 89.41\sd{1.18} & 86.28\sd{1.62} & \textbf{91.23}\sd{2.14}\dl{+16.6} \\
\textbf{AIME 2026} & Acc. & 49.33\sd{1.49} & 47.33\sd{2.79} & 49.33\sd{2.79} & 58.67\sd{2.98} & 66.67\sd{2.36} & 70.67\sd{1.49} & 71.33\sd{1.83} & 51.33\sd{2.98} & \textbf{75.33}\sd{1.83}\dl{+26.0} \\
\textbf{MedXpertQA} & Acc. & 65.78\sd{0.86} & 68.59\sd{1.69} & 75.78\sd{1.10} & 95.47\sd{1.50} & 96.09\sd{1.83} & 95.62\sd{0.89} & 90.00\sd{1.69} & 89.84\sd{2.21} & \textbf{97.97}\sd{0.70}\dl{+32.2} \\
\textbf{ALFWorld} & SR & 33.59\sd{1.56} & 40.47\sd{2.10} & 43.44\sd{1.62} & 79.69\sd{1.46} & 93.59\sd{2.02} & 95.47\sd{1.87} & 92.81\sd{1.40} & 87.34\sd{1.02} & \textbf{98.44}\sd{1.99}\dl{+64.8} \\
\textbf{MBPP+} & Pass@1 & 74.53\sd{1.80} & 75.47\sd{2.57} & 76.56\sd{1.75} & 77.66\sd{1.52} & 78.12\sd{1.35} & 77.34\sd{1.91} & 81.41\sd{1.40} & 75.78\sd{1.24} & \textbf{84.84}\sd{1.80}\dl{+10.3} \\
\cmidrule(lr){1-11}
\multirow{3}{*}{\textbf{Avg.\,(IID)}} & EM & 57.97\sd{0.77} & 58.44\sd{0.94} & 59.69\sd{1.27} & 70.55\sd{1.20} & 70.31\sd{1.13} & 73.52\sd{1.22} & 73.67\sd{0.86} & 70.62\sd{1.03} & \textbf{75.94}\sd{1.27}\dl{+18.0} \\
 & F1 & 70.13\sd{1.15} & 72.18\sd{1.28} & 72.21\sd{1.27} & 79.50\sd{0.78} & 79.97\sd{1.44} & 81.33\sd{0.76} & 83.89\sd{1.14} & 81.00\sd{1.27} & \textbf{85.47}\sd{1.45}\dl{+15.3} \\
 & Acc. & 55.81\sd{0.73} & 57.97\sd{1.16} & 61.28\sd{0.96} & 77.87\sd{0.99} & 83.62\sd{0.96} & 84.78\sd{0.80} & 83.89\sd{0.79} & 76.08\sd{1.01} & \textbf{89.15}\sd{0.83}\dl{+33.3} \\
\midrule
\multicolumn{11}{@{}l}{\textit{(b) Out-of-Distribution (OOD) benchmarks}} \\
\midrule
\textbf{MuSiQue} & EM & 38.59\sd{2.57} & 38.75\sd{1.80} & 39.38\sd{1.05} & 41.41\sd{1.46} & 41.72\sd{1.52} & 44.53\sd{1.10} & 43.59\sd{1.40} & 39.84\sd{0.78} & \textbf{46.41}\sd{1.88}\dl{+7.8} \\
 & F1 & 38.69\sd{1.10} & 39.49\sd{1.40} & 39.08\sd{1.42} & 44.41\sd{0.86} & 43.84\sd{1.62} & 53.59\sd{2.02} & 46.23\sd{1.96} & 43.69\sd{1.66} & \textbf{59.93}\sd{2.21}\dl{+21.2} \\
\textbf{NQ-Open} & EM & 27.97\sd{1.87} & 30.78\sd{1.71} & 28.44\sd{2.04} & 75.78\sd{1.66} & 78.44\sd{0.89} & 88.28\sd{2.14} & 84.53\sd{1.60} & 81.72\sd{1.96} & \textbf{93.44}\sd{0.70}\dl{+65.5} \\
 & F1 & 49.41\sd{1.91} & 54.40\sd{1.05} & 48.49\sd{1.71} & 82.39\sd{1.40} & 84.85\sd{1.71} & 91.62\sd{1.95} & 88.27\sd{1.88} & 85.38\sd{1.62} & \textbf{93.73}\sd{1.16}\dl{+44.3} \\
\textbf{MATH-Hard} & Acc. & 77.81\sd{1.71} & 77.50\sd{1.02} & 71.72\sd{1.28} & 85.78\sd{0.86} & 93.44\sd{1.18} & 87.34\sd{1.40} & 85.31\sd{1.78} & 78.75\sd{0.86} & \textbf{94.69}\sd{1.95}\dl{+16.9} \\
\textbf{GPQA-D} & Acc. & 66.25\sd{1.28} & 70.31\sd{2.21} & 69.53\sd{1.46} & 74.69\sd{2.04} & 77.03\sd{1.31} & 73.75\sd{1.05} & 74.06\sd{1.95} & 73.28\sd{1.60} & \textbf{79.84}\sd{1.69}\dl{+13.6} \\
\textbf{HumanEval} & Pass@1 & 82.34\sd{1.88} & 80.00\sd{1.62} & 82.81\sd{0.55} & \textbf{86.41}\sd{1.18} & 84.22\sd{1.40} & 83.91\sd{1.42} & 83.75\sd{2.02} & 82.03\sd{1.83} & \textbf{86.41}\sd{1.80}\dl{+4.1} \\
\textbf{WebShop} & SR & 44.69\sd{1.40} & 46.88\sd{0.55} & 48.12\sd{1.52} & 65.94\sd{1.31} & 80.62\sd{1.78} & 78.28\sd{1.69} & 81.88\sd{1.28} & 76.41\sd{1.28} & \textbf{84.69}\sd{1.52}\dl{+40.0} \\
\cmidrule(lr){1-11}
\multirow{3}{*}{\textbf{Avg.\,(OOD)}} & EM & 33.28\sd{1.59} & 34.77\sd{1.24} & 33.91\sd{1.15} & 58.59\sd{1.10} & 60.08\sd{0.88} & 66.41\sd{1.20} & 64.06\sd{1.06} & 60.78\sd{1.06} & \textbf{69.92}\sd{1.00}\dl{+36.6} \\
 & F1 & 44.05\sd{1.10} & 46.95\sd{0.87} & 43.79\sd{1.11} & 63.40\sd{0.82} & 64.35\sd{1.18} & 72.61\sd{1.40} & 67.25\sd{1.36} & 64.53\sd{1.16} & \textbf{76.83}\sd{1.25}\dl{+32.8} \\
 & Acc. & 67.77\sd{0.79} & 68.67\sd{0.74} & 68.05\sd{0.63} & 78.20\sd{0.71} & 83.83\sd{0.72} & 80.82\sd{0.70} & 81.25\sd{0.89} & 77.62\sd{0.72} & \textbf{86.41}\sd{0.87}\dl{+18.6} \\
\bottomrule
\end{tabular}}
}
\vspace{2pt}
\caption{Main results on twelve IID and OOD benchmarks. SFT and GRPO are applied on Qwen3.5-9B; all trainable orchestration methods use Qwen3.5-9B as the frozen executor under the same per-task budget. $\Delta\uparrow$ is the gap to direct Qwen3.5-9B. Every cell is the mean over five runs; the subscript is the standard deviation across those five, propagated across benchmarks for the Avg.\ rows. QA averages are taken separately for EM and F1, and Acc.\ averages the primary metrics (accuracy, success rate, or Pass@1) of the other four benchmarks in each part.}
    \label{tab:main_results}
\end{table*}
\begin{table*}[!tp]
\centering
{\fontsize{6.6}{7.9}\selectfont
\tabcolsep=1pt
\renewcommand{\arraystretch}{1.08}
\fittab{11}{\begin{tabular}{@{}l@{\hspace{4pt}}|@{\hspace{4pt}}cccccc@{\hspace{4pt}}|@{\hspace{4pt}}cccccc@{}}
\toprule
\textbf{Variant} & \multicolumn{6}{c|}{\textbf{IID}} & \multicolumn{6}{c}{\textbf{OOD}} \\
\cmidrule(lr){2-7}\cmidrule(lr){8-13}
 & HotpotQA & TriviaQA & AIME & MedXpertQA & ALFWorld & MBPP+ & MuSiQue & NQ-Open & MATH & GPQA & HumanEval & WebShop \\
 & EM & EM & Acc. & Acc. & SR & Pass@1 & EM & EM & Acc. & Acc. & Pass@1 & SR \\
\midrule
Single agent with tools & 47.66 & 78.12 & 56.67 & 82.03 & 85.16 & 66.41 & 25.78 & 72.66 & 78.91 & 56.25 & 70.31 & 32.81 \\
Parallel vote & 58.59 & 85.94 & 73.33 & 96.88 & 89.06 & 82.03 & 42.19 & 89.84 & 92.97 & 75.78 & 82.81 & 50.78 \\
Fixed chain & 53.91 & 81.25 & 73.33 & 84.38 & 96.09 & 83.59 & 39.06 & 82.03 & 94.53 & 68.75 & 84.38 & 53.91 \\
Fully connected debate & 52.34 & 86.72 & 70.00 & 94.53 & 92.19 & 80.47 & 43.75 & 87.50 & 93.75 & 78.91 & 81.25 & 44.53 \\
Evolving orchestrator & 57.03 & 85.16 & 63.33 & 91.41 & 95.31 & 78.91 & 35.94 & 84.38 & 89.84 & 73.44 & 78.91 & 60.16 \\
\midrule
$-$ Director training & 50.78 & 78.91 & 53.33 & 89.84 & 82.81 & 71.88 & 28.91 & 66.41 & 73.44 & 64.84 & 75.78 & 58.59 \\
$-$ Recursion & 57.81 & 83.59 & 60.00 & 94.53 & 89.06 & 78.91 & 32.03 & 76.56 & 79.69 & 69.53 & 80.47 & 65.62 \\
Fixed team & 61.72 & 87.50 & 70.00 & 97.66 & 93.75 & 82.81 & 44.53 & 88.28 & 92.97 & 76.56 & 84.38 & 78.12 \\
Random legal graph & 58.59 & 84.38 & 63.33 & 93.75 & 90.62 & 79.69 & 38.28 & 82.03 & 88.28 & 71.88 & 82.81 & 72.66 \\
\midrule
$-$ Communication & 46.88 & 75.78 & 50.00 & 88.28 & 78.91 & 68.75 & 21.09 & 63.28 & 70.31 & 60.94 & 78.91 & 55.47 \\
$-$ Evidence gate & 57.03 & 86.72 & 63.33 & 96.09 & 92.19 & 81.25 & 30.47 & 72.66 & 78.12 & 67.97 & 81.25 & 68.75 \\
All \textsc{OneWay} & 55.47 & 82.81 & 60.00 & 94.53 & 87.50 & 77.34 & 27.34 & 75.00 & 75.78 & 70.31 & 80.47 & 62.50 \\
Always revise & 53.91 & 85.16 & 56.67 & 92.97 & 85.94 & 75.78 & 24.22 & 68.75 & 86.72 & 65.62 & 79.69 & 67.19 \\
\midrule
\textbf{CollabFlow (full)} & \textbf{63.28} & \textbf{88.28} & \textbf{76.67} & \textbf{98.44} & \textbf{97.66} & \textbf{85.16} & \textbf{46.88} & \textbf{92.97} & \textbf{95.31} & \textbf{79.69} & \textbf{85.94} & \textbf{84.38} \\
\bottomrule
\end{tabular}}
}
\vspace{-2pt}
\caption{Paradigm comparison and ablation of CollabFlow on the six IID and six OOD benchmarks (Qwen3.5-9B executor; EM for QA; one run). Paradigm rows fix how Agents collaborate before execution: one tool-using Agent, parallel voting, a fixed chain, fully connected debate, and the evolving orchestrator. Team-level rows remove director training or recursion across rounds, or replace the learned team with a fixed team or a random legal graph; message-level rows remove communication or the evidence gate, force every edge to \textsc{OneWay}, or revise on every update.}
    \label{tab:ablation}
    \vspace{2pt}
    \begin{minipage}[t]{0.695\textwidth}\vspace{0pt}
        \centering
        \includegraphics[width=\linewidth]{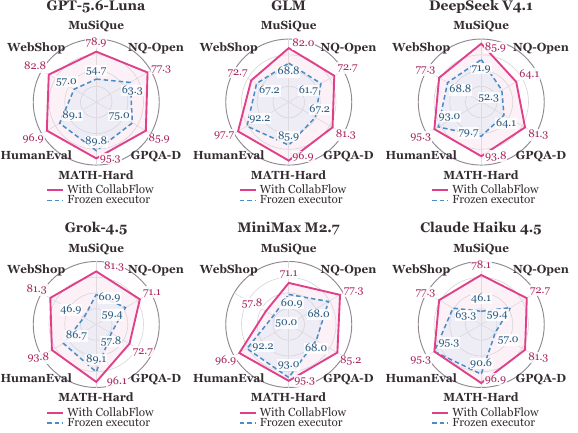}\\[1pt]
        {\small (a) OOD benchmark profile on each frozen executor}
    \end{minipage}\hfill
    \begin{minipage}[t]{0.295\textwidth}\vspace{0pt}
        \centering
        \includegraphics[width=\linewidth]{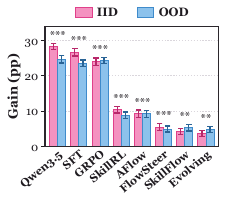}\\[-1pt]
        {\small (b) Significance of the gains}\\[3pt]
        \includegraphics[width=\linewidth]{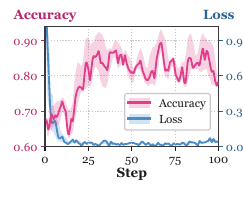}\\[-1pt]
        {\small (c) Training curve}
    \end{minipage}\\[0pt]
    \includegraphics[width=\textwidth]{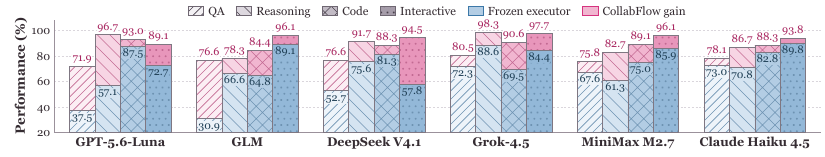}\\[-3pt]
    {\small (d) IID scores aggregated by task domain}\vspace{0pt}
    \captionof{figure}{Other frozen executors, significance of the gains, and training dynamics. \textbf{(a)} OOD scores of each frozen executor used directly (dashed) and with CollabFlow (solid); each axis is linear over its own score range (values in Table~\ref{tab:executors}). \textbf{(b)} Gain of CollabFlow over each baseline in the main average, five-run mean with standard-error bars; Welch $t$-test: *** $p<10^{-3}$, ** $p<10^{-2}$. \textbf{(c)} Training accuracy and CTB loss; bands give the spread over runs. \textbf{(d)} IID scores averaged per task domain (QA: HotpotQA, TriviaQA; Reasoning: AIME 2026, MedXpertQA; Code: MBPP+; Interactive: ALFWorld); blue bars give the frozen executor alone and pink bars the gain from CollabFlow.}
    \label{fig:executors}
\end{table*}

\textbf{Setup.} Six in-distribution (IID) datasets train the Collab-Director and six out-of-distribution (OOD) datasets are held out, 128 items each (30 for AIME 2026) (Table~\ref{tab:main_results}; \citealp{wu2026spatialscore,hotpotqa,triviaqa,medxpertqa,alfworld,musique,naturalquestions,gpqa,humaneval}). Baselines are direct Qwen3.5-9B, SFT and GRPO~\citep{deepseekmath2024} on it, AFlow~\citep{zhang2025aflow}, FlowSteer~\citep{zhang2026flowsteer}, the evolving orchestrator~\citep{dang2025evolving}, SkillFlow~\citep{zhang2026skillflow} and SkillRL~\citep{xia2026skillrl}. Qwen3.5-9B serves as both the Collab-Director and the frozen executor, and all methods share one Agent population, tool set and per-task budget, which also charges messages. We report exact match (EM, primary) and F1 for QA and accuracy, success rate or Pass@1 elsewhere; Table~\ref{tab:main_results} and Figure~\ref{fig:executors}b use five runs with Welch's $t$-test, and Figure~\ref{fig:executors}a,d uses six other executors (Appendices~\ref{app:datasets}--\ref{app:implementation}).

\textbf{In distribution (RQ1).} CollabFlow leads every IID dataset and metric (Table~\ref{tab:main_results}), 2.27 EM and 4.37 accuracy points above the strongest baseline on average, and every average gain is significant at $p<10^{-2}$ (Figure~\ref{fig:executors}b). SFT and GRPO add at most 5.5 accuracy points to the backbone, whereas learned teams add 33.3: as Eq.~\ref{eq:marginal-likelihood} states, with the executor and Agents frozen, every gain comes from the teams the director builds and how their members exchange evidence.

\input{sections/fig_mechanism}

\textbf{Out of distribution (RQ2).} With everything frozen, CollabFlow again leads every OOD row but HumanEval, where it ties AFlow, and its margins widen to 3.51 EM and 4.22 F1, while fine-tuning moves the backbone by under one accuracy point. The director learns which roles to combine and how messages are used rather than task answers, and held-out tasks also yield checkable evidence.

\textbf{Across executors (RQ3).} CollabFlow raises every IID and OOD average of six other frozen executors (Figure~\ref{fig:executors}a,d; Table~\ref{tab:executors}), and weaker executors gain more (Pearson $r=-0.96$ in distribution), narrowing the IID spread across executors from 23.3 to 9.3 points: gated teams of complete Agents supply the checks that a weak executor lacks alone.

\textbf{Paradigms and components (RQ4).} Every paradigm and ablated variant falls below the full model on all twelve datasets (Table~\ref{tab:ablation}), each losing more out of distribution; a single tool-using Agent is lowest, and fully connected debate trails parallel voting. Removing director training costs 13.7 IID and 19.5 OOD points, of which recursion across rounds supplies 7.6 and 13.6. Removing communication costs most, while verbatim exchange, forced revision and all-\textsc{OneWay} edges cost 5.5--9.8 IID but 14.3--15.6 OOD points: by Proposition~\ref{prop:revision}, a correct receiver flips only by false adoption or revision error, and each variant widens one route, most under shift, where more candidates err.

\textbf{Team-level objective (RQ5).} Under matched rollouts and reward, CTB beats GRPO, PPO and CollabFlow without CTB on every team-level measure, with the smallest team-distribution shift (Figure~\ref{fig:mechanism}b). Its reward-proportional target (Eq.~\ref{eq:team_distribution}) keeps every positive-reward team in play and Proposition~\ref{prop:bounded_target} bounds how far that target moves, so CollabFlow finds 26 distinct successful paths against at most 15 (Figure~\ref{fig:mechanism}a), and training accuracy passes 0.80 within 25 steps (Figure~\ref{fig:executors}c).

\textbf{Recursive self-improvement (RQ6).} Between evolution steps, the loop edits the skills the director binds to Agents (Appendix~\ref{app:profile_schema}). Over the twenty steps with the largest one-step gain, success on the evolved task rises from 34.4\% to 75.0\%, 15 steps stay ahead five steps later, and they carry 83.6\% of the 55 edits (Figure~\ref{fig:mechanism}c--e), as the reward of Eq.~\ref{eq:terminal_reward} counts each outcome once and smooths it with the team's record. It also uses the fewest tokens per item with the lowest spread (Figure~\ref{fig:mechanism}f; Table~\ref{tab:token_cost}).

%% file: sections/fig_mechanism.tex
\makeatletter
\newsavebox\mechTabA \newsavebox\mechTabB \newlength\mechGap
\newcommand\mechRow{\rule{0pt}{\dimexpr\ht\@arstrutbox+\mechGap\relax}}
\makeatother
\newcommand\mechTableA{{\fontsize{6.4}{7.6}\selectfont\renewcommand{\arraystretch}{1.12}\setlength{\tabcolsep}{2.6pt}%
\begin{tabular}{@{}l|c|ccc|cc@{}}
\toprule
 & & \multicolumn{3}{c|}{\textbf{Successful paths}} & \multicolumn{2}{c}{\textbf{Successful tool use}} \\
\cmidrule(lr){3-5}\cmidrule(l){6-7}
 & SR & Distinct & Path & SW path & Tool & Tool comb. \\
\textbf{Method} & (\%)\,$\uparrow$ & paths\,$\uparrow$ & disp.\,$\uparrow$ & diff.\,$\uparrow$ & trans.\,$\uparrow$ & entropy\,$\uparrow$ \\
\midrule
FlowSteer & 35.33 & 13 & 0.9156 & 0.0640 & 13 & 1.4878 \\
SkillFlow & 86.30 & 14 & 0.7541 & 0.4146 & 15 & 1.4635 \\
SkillOpt & 86.30 & 14 & 0.7572 & 0.4128 & 15 & 1.4675 \\
Untrained director & 78.44 & 15 & 0.7234 & 0.5619 & 16 & 1.6460 \\
\midrule
\textbf{CollabFlow} & \textbf{90.19} & \textbf{26} & \textbf{0.9253} & \textbf{0.5743} & \textbf{18} & \textbf{1.7033} \\
\bottomrule
\end{tabular}}}
\newcommand\mechTableB{{\fontsize{6.4}{7.6}\selectfont\renewcommand{\arraystretch}{1.12}\setlength{\tabcolsep}{1.6pt}%
\begin{tabular}{@{}l|ccc|cc|cc@{}}
\toprule
 & \multicolumn{3}{c|}{\textbf{Reward and success}} & \multicolumn{2}{c|}{\textbf{Consistency}} & \multicolumn{2}{c}{\textbf{Stability}} \\
\cmidrule(lr){2-4}\cmidrule(lr){5-6}\cmidrule(l){7-8}
\textbf{Team-level} & Avg. & Succ. & 4/4 & Reward & Worst & Succ. & Team \\
\textbf{step} & reward\,$\uparrow$ & (\%)\,$\uparrow$ & (\%)\,$\uparrow$ & std\,$\downarrow$ & task\,$\uparrow$ & std\,$\downarrow$ & TV\,$\downarrow$ \\
\midrule
\mechRow GRPO & 0.8226 & 85.16 & 70.31 & 0.1107 & 67.19 & 8.38 & 0.1631 \\[\mechGap]
\mechRow PPO & 0.8336 & 86.72 & 70.31 & 0.1191 & 75.00 & 8.66 & 0.1506 \\[\mechGap]
\mechRow w/o CTB & 0.8578 & 88.28 & 81.25 & 0.0496 & 78.13 & 9.47 & 0.1491 \\[\mechGap]
\midrule
\mechRow \textbf{CollabFlow} & \textbf{0.8775} & \textbf{92.58} & \textbf{85.94} & \textbf{0.0353} & \textbf{85.94} & \textbf{6.04} & \textbf{0.0937} \\[\mechGap]
\bottomrule
\end{tabular}}}
\begin{figure*}[t]
\centering
\setlength{\mechGap}{0pt}%
\sbox\mechTabA{\mechTableA}\sbox\mechTabB{\mechTableB}%
\setlength{\mechGap}{\dimexpr(\ht\mechTabA+\dp\mechTabA-\ht\mechTabB-\dp\mechTabB)/8\relax}%
\sbox\mechTabB{\mechTableB}%
\typeout{MECHTAB A=\the\dimexpr\ht\mechTabA+\dp\mechTabA\relax\space B=\the\dimexpr\ht\mechTabB+\dp\mechTabB\relax\space gap=\the\mechGap}%
\begin{minipage}[b]{0.545\textwidth}
\centering\usebox\mechTabA\\[2pt]
{\small (a) Diversity of successful solutions}
\end{minipage}\hfill
\begin{minipage}[b]{0.435\textwidth}
\centering\usebox\mechTabB\\[2pt]
{\small (b) Team-level step on the first 64 prompt groups}
\end{minipage}\\[5pt]
\includegraphics[width=\textwidth]{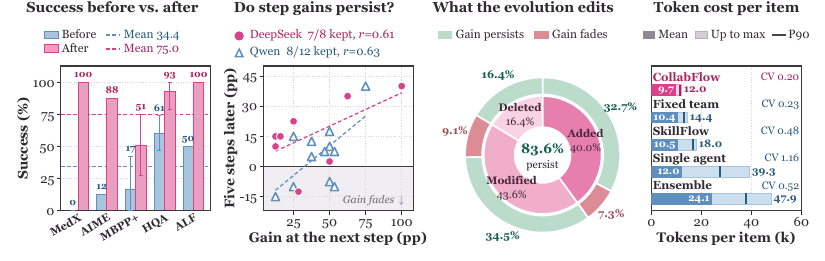}\\[-1pt]
{\small\makebox[0.25\textwidth]{\hspace{0.12in}(c) Evolution success}\makebox[0.25\textwidth]{\hspace{0.06in}(d) Gain persistence}\makebox[0.25\textwidth]{\hspace{0.22in}(e) Skill edits}\makebox[0.25\textwidth]{\hspace{0.06in}(f) Token cost}}
\vspace{-3pt}
\caption{Solution diversity, team-level training, skill evolution and cost. \textbf{(a)} Successful solutions on 1,438 items, with SkillOpt~\citep{yang2026skillopt}; SW: success-weighted. \textbf{(b)} First 64 prompt groups of training (8 steps, 256 rollouts per method); TV between consecutive team distributions. \textbf{(c)} Success over 8 rollouts before and after the twenty evolution steps with the largest one-step gain; dashed: means. \textbf{(d)} Gain at the next step against five steps later. \textbf{(e)} The 55 skill edits by type (inner) and by whether their step still gains five steps later (outer). \textbf{(f)} Tokens per item.}
\label{fig:mechanism}
\end{figure*}

%% file: sections/05_conclusion.tex
\section{Conclusion}
\label{sec:conclusion}

CollabFlow makes agent collaboration recursively self-improving: evidence-conditioned protocols let Agents refine one another's answers within a round, and Collaborative Trajectory Balance retrains the Collab-Director on its own records across rounds while keeping several good teams in play, and the target it generates moves between rounds by a bound that shrinks as records accumulate. On twelve datasets it leads every baseline in and out of distribution, beats every fixed paradigm and ablated variant, lifts six other frozen executors, keeps most of its step gains, and uses the fewest tokens per item (limitations and future work in Appendices~\ref{app:limitations} and~\ref{app:future_work}).

\section*{AI Use Statement}
In this work, generative AI tools were used to edit human-written text for clarity and readability, assist with \LaTeX{} typesetting and page layout, and adjust the placement and sizing of existing figures and tables. They were not used to generate or modify experimental data or scientific-figure content, generate synthetic data, formulate the research hypotheses or mathematical claims, develop the theoretical or conceptual framework, design the methodology or experiments, implement the proposed method, clean or reformat datasets, perform qualitative data analysis, or interpret the experimental results. The authors manually reviewed and verified all AI-assisted text, figures, numbers, citations, and formatting against the implementation and experimental records. The authors take full responsibility for the final content of the paper and all AI-assisted artifacts.

\section*{Ethics Statement}
This work follows the principles of the ICLR Code of Ethics. We aim to conduct and report our research with scientific integrity, transparency, and reproducibility. Experimental results are reported without fabrication, falsification, or intentional misrepresentation, and every reported number is produced by the released code. We appropriately acknowledge prior work and use existing public benchmarks, models, in accordance with their intended purposes and licenses. CollabFlow's Agents act only through permitted tools, a single Executor holds write permission on stateful tasks, and results on medical questions do not authorize clinical use (Appendix~\ref{app:applicability}). This study does not involve human participants or the collection of personal information.

\section*{Reproducibility Statement}
We release the code at \url{https://anonymous.4open.science/r/CollabFlow-631E}, including the training and evaluation code, the trained Collab-Director weights, and the training and test splits of all twelve datasets. Section~\ref{sec:method} specifies the team construction, the evidence-gated update and the CTB objective, and Section~\ref{sec:experiments} specifies the benchmarks, baselines, shared budget and five-run protocol. The appendix states the assumptions and proofs of the three propositions (Appendix~\ref{app:theory}), the Collab-Director interface and Agent population (Appendix~\ref{app:interfaces}), the training procedure (Appendix~\ref{app:algorithms}), the datasets and evaluation protocol (Appendix~\ref{app:datasets}), baselines and ablations (Appendix~\ref{app:baselines}), metrics (Appendix~\ref{app:metric_details}), and per-benchmark results with other executors and token costs (Appendix~\ref{app:additional_results}).

%% file: sections/appendix.tex
%
%
%
%


\numberwithin{equation}{section}
\numberwithin{table}{section}
\numberwithin{algorithm}{section}

\providecommand{\appTBD}{\textemdash}
\providecommand{\appPending}[1]{\textit{[TBD: #1]}}


\section{Collab-Director Prompts, Agent Population, and Runtime Interfaces}
\label{app:interfaces}

This appendix supplements the collaboration graph, the communication
semantics, and the training procedure in the main text.
Appendix~\ref{app:theory} states the assumptions of the three
propositions and proves them; the remaining sections specify the
algorithms and evaluation protocol.

\subsection{Collab-Director Input and Output Contract}
\label{app:supervisor_contract}

The Collab-Director receives the task $x$, compatible Agent profiles
$\mathcal A_x$, the partial graph $G_t$, the protocols permitted for the task
family, and the remaining construction budget $\mathbf b_t^{\rm con}$.
Its context $H_t$ also contains the accepted construction decisions.
Reference answers, hidden evaluation tests, and evaluator-only labels are
not exposed as construction inputs.
Feedback produced by a permitted tool during ordinary execution is
distinct from hidden evaluator information.

The Collab-Director selects one structural action at a time.
The following is a \emph{schematic prompt contract}, not a verbatim prompt
recovered from a training run:

\begin{quote}
Construct an Agent team for the supplied task using the available
profiles.
Select one currently legal construction action.
Choose Agents, typed communication relations, and an output mechanism;
leave task solving to the executor.
Respect compatibility, write-permission, and resource constraints.
Stop only when the team is valid.
Do not invent Agent identifiers, and do not use evaluator-only
information.
\end{quote}

The frozen system prompt, user template, structured-action schema, and
invalid-action correction prompt are released with the code.

\subsection{Agent Profiles}
\label{app:profile_schema}

An Agent is a complete execution role rather than an individual operator.
Table~\ref{tab:app_profile_schema} lists the profile fields that the
Collab-Director can read when constructing a team.
Static profile capability tags describe what an Agent may do; they are
not modified during training.
Agent profile definitions and executor parameters remain fixed; the
Collab-Director changes which profiles are selected, which skills they
carry, and how their instances communicate.

\begin{table}[!htbp]
    \centering
    \small
    \caption{
        Conceptual schemas used by the method.
        Exact registry identifiers and serialization are
        configuration-dependent.
    }
    \label{tab:app_profile_schema}
    \begin{tabularx}{\linewidth}{
        @{}p{0.24\linewidth}X@{}
    }
        \toprule
        \textbf{Object / field}
        & \textbf{Meaning}
        \\
        \midrule

        Agent identity and role
        & Stable profile identifier, role description,
          and local objective.
        \\

        Compatibility
        & Supported task families and capability labels.
        \\

        Tool permissions
        & Permitted tools and whether an operation can modify
          protected environment state.
        \\

        Visibility
        & Task evidence, messages, and observations visible
          to the Agent.
        \\

        Execution mode
        & Stateless Agent, mutable \textsc{Executor},
          or read-only \textsc{Advisor}.
        \\

        \bottomrule
    \end{tabularx}
\end{table}

The construction interface also supports a
\texttt{bind\_skill} action that attaches a retrieved textual strategy to
one selected Agent; it grants no tool permissions and never overrides
write permission.
Between evolution steps, the loop adds, modifies or deletes library
skills from the outcomes of the teams that used them; Figure~\ref{fig:mechanism}c--e
reports these edits.

\subsection{Structural Actions and Validation}
\label{app:action_semantics}

Table~\ref{tab:app_actions} expands the action space of Section~\ref{subsec:paradigm}.
Its arguments are mathematical objects.
The legal finite action registry is exposed through a constrained schema.
Schema legality and probability normalization are separate contracts:
the director scores every legal action string and samples from the
renormalized kernel of Eq.~\ref{eq:organization_state}, whereas a
grammar-masked token sampler would use its own normalized likelihood
(Appendix~\ref{app:cf_likelihood_contract}).
The runtime checks each sampled action before changing the graph, so every
sampled action is accepted; scoring tokens, latency, and model calls count
toward the construction budget, and a timeout follows the declared abort rule.

\begin{table*}[!htbp]
    \centering
    \small
    \caption{
        Structural-action semantics and local validity conditions.
        Terminal constraints are checked separately from
        partial-state invariants.
    }
    \label{tab:app_actions}
    \begin{tabularx}{\textwidth}{
        @{}p{0.20\textwidth}p{0.27\textwidth}X@{}
    }
        \toprule
        \textbf{Action}
        & \textbf{Structural effect}
        & \textbf{Required conditions}
        \\
        \midrule

        \texttt{add\_agent}$(a)$
        & Add $a$ to $V$.
        & $a\in\mathcal A_x\setminus V$;
          profile, permission, capacity, and budget checks pass.
        \\

        \texttt{add\_edge}$(a_i,a_j,p)$
        & Add $(a_i,a_j,p)$ to $E$.
        & Both endpoints have been selected;
          $p\in\mathcal P$ is supported for the task and endpoint
          modes; the edge is not already present; relation and budget checks pass.
        \\

        \texttt{bind\_skill}$(a,k)$
        & Attach library skill $k$ to Agent $a$.
        & $a\in V$; the skill grants no tool permission and never
          overrides write permission.
        \\

        \texttt{set\_output}$(o')$
        & Set output configuration to $o'$.
        & The output is not yet set; the configuration is supported,
          all referenced Agents exist, and its execution is compatible
          with task permissions and budget.
        \\

        \texttt{stop}
        & Commit the terminal organization.
        & The team is nonempty, output is defined, and all
          task-specific terminal constraints are satisfied.
        \\

        \bottomrule
    \end{tabularx}
\end{table*}

Partial-state validity permits an unfinished graph.
For example, the output configuration can be unset before termination,
and a stateful partial team may have \emph{at most} one mutable Executor,
whereas a completed stateful team requires \emph{exactly} one.
Enforcing every terminal condition on the empty initial graph would
prevent construction.
Candidate resources are held fixed within a construction episode;
a changed population defines a different resource context.

\subsection{Communication and Output Semantics}
\label{app:communication_semantics}

\begin{table}[!htbp]
    \centering
    \small
    \caption{
        Communication semantics.
        Bounds and task-specific availability are set by the
        frozen runtime configuration.
    }
    \label{tab:app_protocols}
    \begin{tabularx}{\linewidth}{
        @{}p{0.22\linewidth}X@{}
    }
        \toprule
        \textbf{Protocol}
        & \textbf{Execution meaning}
        \\
        \midrule

        \textsc{FinalOnly}
        & Make a completed result available without rerunning
          the receiver.
        \\

        \textsc{OneWay}
        & Send evidence or a proposal to the receiver, which may
          revise once under the configured resource bound.
        \\

        \textsc{Interactive}
        & Allow a bounded sequence of mutual correction rounds
          where the task and runtime support it.
        \\

        Stateful restriction
        & Internal Advisors provide bounded one-way guidance to
          the sole mutable Executor, and write permission stays
          with the Executor.
        \\

        \bottomrule
    \end{tabularx}
\end{table}

The execution model first produces stateless candidates and then processes
communication under the fixed schedule of Appendix~\ref{app:cf_runtime}.
The resulting revision trace records actual messages and responses rather
than inferring communication from graph edges alone.
For organization-class statistics, a graph with multiple Agents but no
enabled intermediate communication is treated as independent, even when
final results are routed to an output component.
Report both planned edges and messages actually executed.

Within \textsc{OneWay} and \textsc{Interactive}, $\mathrm{Revise}$ is a single
model call with tools disabled and at most $\tau_{\mathrm{rev}}$ tokens, so an
one visit to an edge program costs at most $r_{\max}$ revision calls of
$\tau_{\mathrm{rev}}$ tokens. A cyclic schedule may visit the program up to
$R_{\rm comm}$ times; all such calls are charged to its execution quota.
A round denotes one elementary directed gate evaluation. Revision never
re-executes the receiver's tool episode. Each evidence item is bound to a
candidate version and carries a provenance identifier. The context supplied
to \textsc{Revise} may contain the union of peer evidence, but after a
revision only evidence that remains valid for the returned candidate is kept
and duplicate identifiers are removed, so a revised candidate earns its own
verification score.
Under the stateful restriction an Advisor's message is a structured handoff
(state, evidence, next subgoal, completion check, recovery) produced without
tools, and edges are restricted to \textsc{Advisor}$\to$\textsc{Executor}
\textsc{OneWay}.

A \textsc{Single} output selects one Agent result.
An Integrator deduplicates records, filters them for its chosen candidate
representative, and applies the declared score rule of
Section~\ref{subsec:graph}; a Verifier evaluates candidates
under its configured contract, and a Formatter changes submission format
without resolving the task again.
\textsc{Single} and Formatter are counted as single-output modes, and the
configuration declares which output modes are available.

\subsection{Execution Records and Information Boundaries}
\label{app:trace_schema}

An episode records the construction trajectory $\zeta$, canonical graph
$G_\zeta$, execution trace $\xi$, output $\hat y$, factual score
$r_{\mathrm{ans}}$, and resource ledger.
Construction decisions are the organization-policy learning targets.
Child-Agent messages, tool results, and environment observations are
execution evidence, separate from the Collab-Director's action tokens.
The evaluator receives the information required for factual scoring,
while inference-time Agents receive only task-permitted inputs.

For reproducibility, records should include the task/split identifier,
resource versions, accepted and rejected proposals, communication protocol
and round, sender/receiver identifiers, before/after revisions, tool
outcomes, stopping reason, and measured costs.
The exact logging keys are given in the released code.
This separation also prevents a post-execution reference answer from
being reintroduced as a Collab-Director input.


\input{sections/appendix_theory}

\section{Algorithmic Details}
\label{app:algorithms}

\subsection{Training Procedure}
\label{app:training_algorithm}

Algorithm~\ref{alg:app_training} expands the self-improvement trajectory of
Eq.~\ref{eq:rsi_trajectory} and the procedure specified in
Section~\ref{subsec:rl}.
Each round keeps the Agent population and executor fixed, collects several team
rollouts for each sampled task, and updates the director, backward
policy, and normalizer; its unique outcomes then update the reliability
counters used by the next round.

The released configuration fixes batching, retry handling, and replay.

\begin{algorithm}[!ht]
\caption{CollabFlow self-improvement rounds: raw valid-only replay}
\label{alg:app_training}
\begin{algorithmic}[1]
\Require Tasks, fixed resource registry and executor, separate
$\mathbf B^{\rm con},\mathbf B^{\rm exec}$, rollout count $N$.
\Ensure Updated parameters $(\theta,\phi,\psi)$ and versioned records.
\For{each self-improvement round $k$}
  \State Freeze the resource context, pre-batch counters, and reward-code version.
  \State Initialize empty replay batch $\mathcal B$ and unique-outcome buffer.
  \For{each sampled task $x$ and rollout $i=1,\ldots,N$}
    \State Construct by sampling from the legal-action kernel of
    Eq.~\ref{eq:organization_state}; log every transition, stop, and abort
    with its log probability.
    \If{construction returns a valid canonical team $G_i$}
      \State Enter its canonical absorbing terminal with the full legal parent--action support.
      \State Reset the execution instance and use the fixed execution reservation.
      \State Run the version-filtered protocols under the fixed bounded schedule;
      obtain $(\hat y_i,\xi_i)$, including declared execution failure.
      \State Compute observed reward $\widetilde R_k(G_i,x,\xi_i)$ using the
      pre-batch snapshot; retain the snapshot key.
      \State Add the complete valid path and observed reward to $\mathcal B$;
      buffer the unique execution outcome for the later counter update.
    \Else
      \State Record construction abort; do not relabel it as a valid team.
    \EndIf
  \EndFor
  \If{$\mathcal B\ne\varnothing$}
    \State Evaluate current forward log probabilities and the backward log
    probability $\log b_\phi$ of each replayed construction order.
    \State Optimize Eq.~\ref{eq:organization_objective} with
    $\pi_{\rm ref}=\bar\pi_{\theta_k}$, treating the replay distribution and
    recorded reward targets as fixed during the update.
  \EndIf
  \State Update counters once per unique buffered outcome, after the batch.
\EndFor
\end{algorithmic}
\end{algorithm}

This pseudocode specifies raw valid-only replay. Its ideal fixed-reward,
all-valid-path zero-residual interpretation (T4$'$) is Theorem~\ref{thm:cf_raw_valid}:
$ZP_\theta(\mathsf V)=W_{\mathsf V}$ and only the valid-conditional graph law
is reward-proportional. Appendix~\ref{app:cf_training_limits} treats fresh
execution rewards and the KL term. Empty replay skips the loss update;
failed execution of a valid graph remains a valid graph sample with its
observed reward.

At inference, freeze the Collab-Director, construct a legal team, and
execute it with the frozen executor.
Ordinary environment observations may influence an executing Agent's
local actions within the task.

\subsection{Reliability Statistics and Reward Bookkeeping}
\label{app:reward_details}

For a fixed graph/context key with $s$ successes in $n$ unique outcomes,
conditionally independent Bernoulli outcomes with one fixed success parameter
and a Beta$(1/2,1/2)$ prior give
\begin{equation}
 p\mid\text{records}\sim\operatorname{Beta}(s+1/2,\,n-s+1/2).
 \label{eq:app_beta_posterior}
\end{equation}
The posterior mean under this model is
\begin{equation}
 \widehat p_k(G)=\frac{s_k(G)+1/2}{n_k(G)+1},
 \label{eq:app_reliability_mean}
\end{equation}
with counts keyed by the graph and its task cluster $c_x$.
It is the posterior mean under the Bernoulli model of
Lemma~\ref{lem:cf_reward_support}, and event deduplication keeps a copied
record from counting twice.
Task-cluster pooling uses a fixed documented key. Binary success counts are
separate from a continuous factual score such as F1.

At update $k$, all rollouts use pre-batch counts, so each execution is
scored against the records that preceded it. The unique-outcome buffer updates
counts only after the batch. Each record binds the task key, graph identity,
resource and runtime versions, execution reservation, factual score, and
snapshot identifier. A fresh execution reward is denoted
$\widetilde R_k(G,x,\xi)$; the fixed $r_G$ in a theorem is a different,
explicitly fixed graph functional or immutable table.

For a fixed graph code length $L_\chi$ and class map $a_\chi$, define
$L_{\rm rel}(G)=L_\chi(G)-\min_{G':a_\chi(G')=a_\chi(G)}L_\chi(G')$.
The nonempty reference class, code, and minima are fixed before the reward
snapshot. This definition makes the structural factor nonnegative in
length and at most one after exponentiation with $\lambda\ge0$ (the method
uses $\lambda=1$);
Appendix~\ref{app:cf_reward_snapshot} proves support and counter-drift bounds.
The proof holds for any specified finite code.

\subsection{Action Likelihoods, Masks, and Backward Terms}
\label{app:likelihoods}

For a canonical serialized structural action with tokens
$(y_1,\ldots,y_m)$ and context $H$, the autoregressive sequence
log-probability is
\begin{equation}
\log p_\theta(u\mid H)
=
\sum_{j=1}^{m}
\log p_\theta(y_j\mid H,y_{<j}).
\label{eq:app_action_likelihood}
\end{equation}

The action's representation includes any required completion delimiter.
When the legal policy is a renormalization of raw action probabilities,
its probability is obtained by dividing by the total mass of legal
actions.
A schema-constrained sampler or rejection sampler records its own
probability convention, so the stored log score equals its normalized
sampling probability.

Only intended Collab-Director action targets, including termination,
contribute to organization-action likelihoods.
Task text, runtime feedback, child-Agent outputs, and evaluator labels
are context or auxiliary evidence.
If sampled reflection text affects an action, its probability must
either be included in the modeled path or be handled by an explicitly
defined conditional/marginal model.
Report the exact mask, serialization, reduction, and reference-policy
conventions.

The action log probability is the sum of its token log probabilities,
which normalizes over actions; this is the probability that
Proposition~\ref{prop:terminal_distribution} uses. Log-space evaluation
avoids underflow and keeps this definition.

\paragraph{Backward kernel.} Under legal-set sampling, the complete paths that
reach the canonical terminal $g_G$ are exactly the construction orders
$\mathcal O(G)$ of $G$: the prerequisite-respecting orders of its Agent and edge
insertions and its output selection, followed by \texttt{stop}. The learned
kernel $b_\phi$ scores an order by unbuilding $G$: starting from $G$, it removes
one component whose removal leaves a legal prefix (an edge, the output
selection, or an Agent that no remaining edge or output references) through a
softmax over these removable components, and multiplies the choices. Every step
is normalized, so $b_\phi(\cdot\mid G)$ sums to one over $\mathcal O(G)$, which
is T3 at the terminal; interior history states have one parent and backward
probability one. As in TB \citep{malkin2022trajectory}, $b_\phi$ is learned;
\citet{shen2023towards} analyze how the backward policy shares credit among
paths. The uniform ablation sets
$b(\zeta\mid G)=1/|\mathcal O(G)|$,
where $|\mathcal O(G)|$ is computed exactly by dynamic programming over subsets
of the components of $G$.

\subsection{Trajectory Contrasts and Diagnostic Evidence}
\label{app:diagnostic_details}

The transition statistic is the reward-weighted occupancy
$H_c(a)=\sum_\zeta r_{G_\zeta}^\beta Q_{G_\zeta}(\zeta)N_a(\zeta)$ over
valid paths, with abort contributions set to zero. Appendix~\ref{app:cf_occupancy}
defines the state and transition occupancies, the endpoint-conditioned
forward bridge, the on-target frequency estimator, and the importance-weighted
estimator for a declared proposal. No conditional average over invoking
trajectories is substituted for an unconditional occupancy. At finite
training error, raw forward frequencies describe the learned policy.

For scored paths, store $d_t=\log p_F-\log b_G$ and the complete-path
identity in Eq.~\ref{eq:cf_path_diagnostic}. Cached probabilities can be reused;
new backward evaluations, parent normalizers, sampling, and arithmetic
remain chargeable computation.

An optional same-task contrast uses two recorded runs under the same context,
orders them by the recorded factual score (ties are retained as ties), and
reports their longest identical sequence of accepted structural actions,
first divergence, subsequent messages, evidence versions, costs, and outcomes.
All available distinct record pairs may be reported as descriptive error
analysis.

\subsection{Structural Complexity and Cost Accounting}
\label{app:complexity}

For a partial team with $n$ selected Agents, $A$ compatible candidates,
$P$ communication protocols, and at most $C_o$ output configurations, a conservative candidate-action
bound is
\begin{equation}
|\mathcal U(s)|
\le
(A-n)+Pn^2+nK+C_o+1.
\label{eq:app_action_bound}
\end{equation}

Binding one of $K$ library skills to a selected Agent adds the $nK$ term. The bound permits
all endpoint pairs and therefore remains valid when
self-links or additional duplicates are subsequently excluded.
A naive complete enumerator over $T$ decisions consequently inspects
at most
\[
O\!\left(
T(A+Pn^2+nK+C_o+1)
\right)
\]
candidates when using the maximum $n$ along the trajectory.

These are structural counting bounds on construction.
End-to-end cost must additionally include Agent execution, revisions,
tool calls, and evaluator work, using the accounting conventions in Appendix~\ref{app:metric_details}.


\section{Dataset and Evaluation Protocol Details}
\label{app:datasets}

\subsection{Benchmark Inventory and Splits}

Table~\ref{tab:app_datasets} lists the twelve benchmarks, their metrics,
and the number of evaluation items; the released code contains the
training and test splits.
IID suites keep their evaluation examples out of training.

\begin{table*}[!htbp]
    \centering
    \small
    \caption{Benchmarks, metrics, and evaluation items.}
    \label{tab:app_datasets}
    \begin{tabularx}{\textwidth}{@{}p{0.22\textwidth}p{0.08\textwidth}Xp{0.16\textwidth}c@{}}
        \toprule
        \textbf{Benchmark} & \textbf{Suite} & \textbf{Task} & \textbf{Metric} & \textbf{Eval. items} \\
        \midrule
        HotpotQA & IID & Multi-hop question answering & EM / F1 & 128 \\
        TriviaQA & IID & Factual question answering & EM / F1 & 128 \\
        AIME 2026 & IID & Mathematical reasoning & Accuracy & 30 \\
        MedXpertQA-Text & IID & Medical reasoning & Accuracy & 128 \\
        ALFWorld & IID & Embodied task completion & Success rate & 128 \\
        MBPP+ & IID & Function-level code generation & Pass@1 & 128 \\
        \midrule
        MuSiQue & OOD & Multi-hop question answering & EM / F1 & 128 \\
        NQ-Open & OOD & Open-domain factual answers & EM / F1 & 128 \\
        MATH-Hard & OOD & Mathematical reasoning & Accuracy & 128 \\
        GPQA Diamond & OOD & Scientific reasoning & Accuracy & 128 \\
        HumanEval & OOD & Function-level code generation & Pass@1 & 128 \\
        WebShop & OOD & Web shopping in a simulated store & Success rate & 128 \\
        \bottomrule
    \end{tabularx}
\end{table*}

Train and test identifiers are disjoint, including derived prompts and
duplicated source instances.
Dataset citations remain those in the main paper's bibliography.

\subsection{Task Adapters and Factual Scoring}

QA adapters specify supplied versus retrieved evidence, retrieval
access, answer extraction, and accepted reference aliases.
Mathematical and expert-reasoning adapters specify the answer schema,
parsing failures, and correctness matcher.
Code-generation adapters execute submitted functions against their
permitted test harness and record the test outcome.
Embodied and conversational adapters expose their respective
observation/action interfaces and terminal evaluators.

HumanEval and MBPP+ use separate submission formats, and the
transfer from ALFWorld to WebShop changes the interaction interface as
well as the task text.
Adapter revisions, timeouts, and retry policies are given in the released code.
Evaluator-only labels and hidden tests must not enter Agent prompts.

\subsection{Frozen Held-out Evaluation}
\label{app:ood_protocol}

The OOD protocol freezes the Collab-Director, Agent population, and
training-derived configuration.
No OOD evaluation instance is used for policy training, prompt
selection, or hyperparameter tuning.
Benchmark-specific adapters were fixed before outcome inspection.
The six OOD comparisons include within-family changes and stronger
domain or interface shifts.

For WebShop, the single-Executor restriction covers CollabFlow's
internal Agents only.
The benchmark-controlled web server is external, and the Advisor is
consulted between bounded execution phases with a fresh page snapshot.
We record both policy-controlled browser actions and server-side state.


\section{Baselines, Ablations, and Comparison Controls}
\label{app:baselines}

\subsection{Baseline Inventory}

The main comparison covers direct Qwen3.5-9B inference, SFT, GRPO,
AFlow, FlowSteer, the evolving orchestrator, SkillFlow, and SkillRL;
Table~\ref{tab:app_baselines} lists the configuration of each.

\begin{table*}[!htbp]
    \centering
    \small
    \caption{
        Baselines and their configurations.
    }
    \label{tab:app_baselines}
    \begin{tabularx}{\textwidth}{
        @{}p{0.19\textwidth}p{0.27\textwidth}X@{}
    }
        \toprule
        \textbf{Method}
        & \textbf{Role in comparison}
        & \textbf{Configuration}
        \\
        \midrule

        Direct Qwen3.5-9B
        & Shared-backbone reference
        & Checkpoint, tools, prompt, decoding and submission limits.
        \\

        SFT
        & Task-level supervised baseline
        & Demonstration source, training split, trainable
          components, optimization budget.
        \\

        GRPO
        & Task-level RL baseline
        & Policy target, reward, sample group, optimizer
          and rollout budget.
        \\

        AFlow
        & Workflow-search baseline
        & Search budget, backend, candidate components
          and task adapter.
        \\

        FlowSteer
        & Workflow-orchestration baseline
        & Policy/backend versions, action interface, reward
          and adapter changes.
        \\

        Evolving orchestrator
        & Adaptive-organization baseline
        & Orchestrator policy, Agent pool, reward
          and adapter changes.
        \\

        SkillFlow
        & Orchestration/Skill baseline
        & Policy, Skill lifecycle, workspace initialization
          and adapter changes.
        \\

        SkillRL
        & Policy/Skill baseline
        & Training data, Skill initialization/evolution,
          backend and adapter changes.
        \\

        CollabFlow
        & Proposed full specification
        & Released code and configuration;
          all components enabled.
        \\

        \bottomrule
    \end{tabularx}
\end{table*}

Where interfaces permit, match backbone, tool access,
token/model-call/tool-call limits, and interaction horizon.
Report unmatched settings explicitly rather than calling all systems
identical.
Include output aggregation and formatting in the ledger.
Search, sampling, and training budgets are separate quantities; one
must not substitute for another when attributing a difference to the
organization policy.

\subsection{Fixed Teams and Random Legal Graphs}

The structural controls are a single Agent, independent parallel
Agents, a fixed researcher--solver--verifier chain, a stateful
Executor--Advisor team, and randomly sampled legal graphs.
A fixed topology is selected using training or validation data and
reused within its task family; no per-test-instance oracle chooses
the topology.

The random-graph proposal and stopping distribution are recorded,
because they define the distribution over terminal graphs.

Independent Agent outputs require a specified aggregation mechanism.
An ensemble that votes is different from running several Agents and
selecting one unchanged answer.
Fix the output mechanism when comparing communication, and charge all
aggregation computation.
Controls requiring three Agents report their search space.

\subsection{Component Ablations}

Table~\ref{tab:app_ablation_protocol} defines the paradigm and
ablation rows of Table~\ref{tab:ablation}; every row keeps the frozen
executor, Agent population, and per-task budget of the full model.

\begin{table*}[!htbp]
    \centering
    \small
    \caption{Paradigm and ablation rows of Table~\ref{tab:ablation}.}
    \label{tab:app_ablation_protocol}
    \begin{tabularx}{\textwidth}{@{}p{0.26\textwidth}X@{}}
        \toprule
        \textbf{Row} & \textbf{Intervention} \\
        \midrule
        Single agent with tools & One tool-using Agent answers alone. \\
        Parallel vote & Independent Agents answer; a majority vote selects the output. \\
        Fixed chain & A fixed researcher--solver--verifier chain with verbatim messages. \\
        Fully connected debate & Every Agent receives every other Agent's full output each round. \\
        Evolving orchestrator & The orchestrator of \citet{dang2025evolving} activates Agents in sequence. \\
        \midrule
        $-$ Director training & The untrained Collab-Director builds the teams. \\
        $-$ Recursion & The director is trained once and not retrained on later rounds. \\
        Fixed team & The best hand-built team per task family, chosen on training data. \\
        Random legal graph & Teams are sampled uniformly from the legal action set. \\
        \midrule
        $-$ Communication & Agents answer without intermediate messages; the output rule is kept. \\
        $-$ Evidence gate & Eq.~\eqref{eq:gate} is replaced by verbatim exchange. \\
        All \textsc{OneWay} & Every edge runs the \textsc{OneWay} protocol. \\
        Always revise & Every update takes the revision branch of Eq.~\eqref{eq:gate}. \\
        \bottomrule
    \end{tabularx}
\end{table*}


\section{Evaluation Metrics}
\label{app:metric_details}

The definitions below make the manuscript's reporting conventions
explicit.
Use the chosen benchmark evaluator and record any deviation,
especially answer normalization, multiple references, partial task
credit, and simulator-specific success rules.

\subsection{Task Performance and Main-Table Averages}

For QA item $i$ with prediction $\hat y_i$ and accepted reference set
$\mathcal Y_i^\star$, normalized exact match is
\begin{equation}
\mathrm{EM}_i
=
\max_{y\in\mathcal Y_i^\star}
\mathbb I[\nu(\hat y_i)=\nu(y)],
\label{eq:app_em}
\end{equation}
where $\nu$ is the benchmark's declared normalizer.

Token overlap uses multiset counts, not a set that discards repeated
tokens.
If $c(a,b)$ is the sum of minimum token counts between normalized
answers $a$ and $b$, token F1 for a nonempty combined token length is
$2c(a,b)/(|a|+|b|)$, maximized over accepted references.
Empty-answer behavior follows the frozen evaluator.

Accuracy, embodied and web success, and a single-attempt pass
score are means of their declared binary item-success indicators,
multiplied by $100$ when reported as percentages.
Keep continuous QA F1 separate from a binary success flag used for
reliability counters and success-normalized cost.

For each suite, average EM and F1 separately over its two QA benchmarks;
Acc./Pass is the unweighted mean of the primary scores of its four
remaining benchmarks.
QA EM and QA F1 stay separate from the other columns.
The main table's $\Delta$ remains the absolute difference from direct
Qwen3.5-9B in the corresponding metric, expressed in percentage points.

\subsection{Resources, Failures, and Variability}

Count total model input and output tokens for Collab-Director construction,
Executors, Advisors, revisions, aggregation, and formatting.
Record model calls, tool calls, wall-clock latency, invalid attempts,
and caching conventions.
Evaluator and external simulator computation are recorded separately, with total service cost reported when relevant.
Training additionally records rollout collection time, optimizer time,
and peak memory.

For a declared cost unit $c_i$, cost per successful task is
$\sum_i c_i/\sum_i q_i$, reported as undefined when no task succeeds.
Invalid-action rate uses attempted environment actions as its
denominator; repetition, conflicting modification, and recovery-step
rules use fixed event definitions and windows.

Compute each suite score within each independent run before
summarizing across runs.
Report the arithmetic mean and sample standard deviation using
denominator $N_{\rm run}-1$; a single run does not have an estimated
run-to-run standard deviation.
Separate training seeds from evaluation samples of one checkpoint.
Use paired tasks and declared resource limits for comparative analysis.
Any confidence interval or significance procedure must state its
resampling unit and be added only after the corresponding analysis
is performed.


\section{Implementation and Reproducibility Details}
\label{app:implementation}

\subsection{Configuration}

Qwen3.5-9B serves as both the Collab-Director and the frozen executor, and every method shares one Agent population, tool set and per-task budget. The released code contains the full configuration, prompts, seeds and checkpoints of every reported run.

\subsection{Implementation Checks}

For the communication claims, the evaluated configuration enables the
conditional update of Eq.~\eqref{eq:gate}, the revision token cap, the
no-tool constraint on revision and Advisor calls, and the single-writer
checks, and the run logs record each call. Advisor usage is charged
against the single total budget.

For the terminal distribution, the implementation scores every legal action,
fixes the sufficient state and canonical terminal, normalizes $b_\phi$ over the
construction orders of each team, and stores scores equal to the sampling log
probabilities, so the premises T1--T3 of
Proposition~\ref{prop:terminal_distribution} hold by construction.

For stateful execution, the permission gate and commit mechanism are
documented and tested with Advisor attempts to call a mutating operation
and with attempts to overlap internal state-changing tool effects.

\subsection{Release Artifacts}

The anonymous repository (\url{https://anonymous.4open.science/r/CollabFlow-631E}) releases the training and evaluation code, the trained Collab-Director weights, the training and test splits of all twelve datasets, and the scripts that regenerate every table and figure.


\section{Additional Analyses and Complete Reporting Tables}
\label{app:additional_results}

This appendix lists the per-benchmark scores behind Figure~\ref{fig:executors} and the token costs behind Figure~\ref{fig:mechanism}f.

\subsection{Per-Benchmark Scores with Other Executors}
\label{app:executor_scores}

Table~\ref{tab:executors} lists the per-benchmark scores behind Figure~\ref{fig:executors}, together with DeepSeek V4, which was evaluated on the six IID benchmarks only and is therefore absent from the figure.

\begin{table*}[!ht]
    \centering
{\fontsize{6.9}{8.3}\selectfont
\tabcolsep=1pt
\renewcommand{\arraystretch}{1.32}
\fittab{9}{\begin{tabular}{@{}l|l|l|l|l|l|l|cl@{}}
        \toprule
        \multicolumn{9}{l}{\textit{(a) In-Distribution (IID) benchmarks}} \\
        \midrule
        & \multicolumn{1}{c|}{\textbf{HotpotQA}} & \multicolumn{1}{c|}{\textbf{TriviaQA}} & \multicolumn{1}{c|}{\textbf{AIME 2026}} & \multicolumn{1}{c|}{\textbf{MedXpertQA}} & \multicolumn{1}{c|}{\textbf{ALFWorld}} & \multicolumn{1}{c|}{\textbf{MBPP+}} & \multicolumn{2}{c}{\textbf{Avg. (IID)}} \\
        \textbf{Executor} & \multicolumn{1}{c|}{EM} & \multicolumn{1}{c|}{EM} & \multicolumn{1}{c|}{Acc.} & \multicolumn{1}{c|}{Acc.} & \multicolumn{1}{c|}{SR} & \multicolumn{1}{c|}{Pass@1} & Direct & \multicolumn{1}{c}{CollabFlow} \\
        \midrule
        GPT-5.6-Luna & 55.47\dl{+33.59} & 88.28\dl{+35.16} & 93.33\dl{+16.67} & 100.00\dl{+62.50} & 89.06\dl{+16.41} & 92.97\dl{+5.47} & 58.22 & \textbf{86.52}\dl{+28.30} \\
        DeepSeek V4 & 58.59\dl{+28.91} & 90.63\dl{+14.84} & 83.33\dl{+33.33} & 100.00\dl{+21.09} & 81.25\dl{+65.63} & 89.84\dl{+3.91} & 55.99 & \textbf{83.94}\dl{+27.95} \\
        GLM & 65.63\dl{+38.28} & 87.50\dl{+53.13} & 56.67\dl{+13.33} & 100.00\dl{+10.16} & 96.09\dl{+7.03} & 84.38\dl{+19.53} & 58.13 & \textbf{81.71}\dl{+23.58} \\
        DeepSeek V4.1 & 64.06\dl{+24.22} & 89.06\dl{+23.44} & 83.33\dl{+13.33} & 100.00\dl{+18.75} & 94.53\dl{+36.72} & 88.28\dl{+7.03} & 65.96 & \textbf{86.55}\dl{+20.58} \\
        Grok-4.5 & 69.53\dl{+10.94} & 91.41\dl{+5.47} & 96.67\dl{+10.00} & 100.00\dl{+9.38} & 97.66\dl{+13.28} & 90.63\dl{+21.09} & 79.29 & \textbf{90.98}\dl{+11.69} \\
        MiniMax M2.7 & 61.72\dl{+9.38} & 89.84\dl{+7.03} & 70.00\dl{+20.00} & 95.31\dl{+22.66} & 96.09\dl{+10.16} & 89.06\dl{+14.06} & 69.79 & \textbf{83.67}\dl{+13.88} \\
        Claude Haiku 4.5 & 66.41\dl{+6.25} & 89.84\dl{+3.91} & 73.33\dl{+20.00} & 100.00\dl{+11.72} & 93.75\dl{+3.91} & 88.28\dl{+5.47} & 76.73 & \textbf{85.27}\dl{+8.54} \\
        \midrule
        \multicolumn{9}{l}{\textit{(b) Out-of-Distribution (OOD) benchmarks}} \\
        \midrule
        & \multicolumn{1}{c|}{\textbf{MuSiQue}} & \multicolumn{1}{c|}{\textbf{NQ-Open}} & \multicolumn{1}{c|}{\textbf{MATH-Hard}} & \multicolumn{1}{c|}{\textbf{GPQA-D}} & \multicolumn{1}{c|}{\textbf{HumanEval}} & \multicolumn{1}{c|}{\textbf{WebShop}} & \multicolumn{2}{c}{\textbf{Avg. (OOD)}} \\
        \textbf{Executor} & \multicolumn{1}{c|}{EM} & \multicolumn{1}{c|}{EM} & \multicolumn{1}{c|}{Acc.} & \multicolumn{1}{c|}{Acc.} & \multicolumn{1}{c|}{Pass@1} & \multicolumn{1}{c|}{SR} & Direct & \multicolumn{1}{c}{CollabFlow} \\
        \midrule
        GPT-5.6-Luna & 78.91\dl{+24.22} & 77.34\dl{+14.06} & 95.31\dl{+5.47} & 85.94\dl{+10.94} & 96.88\dl{+7.81} & 82.81\dl{+25.78} & 71.48 & \textbf{86.20}\dl{+14.71} \\
        GLM & 82.03\dl{+13.28} & 72.66\dl{+10.94} & 96.88\dl{+10.94} & 81.25\dl{+14.06} & 97.66\dl{+5.47} & 72.66\dl{+5.47} & 73.83 & \textbf{83.85}\dl{+10.03} \\
        DeepSeek V4.1 & 85.94\dl{+14.06} & 64.06\dl{+11.72} & 93.75\dl{+14.06} & 81.25\dl{+17.19} & 95.31\dl{+2.34} & 77.34\dl{+8.59} & 71.61 & \textbf{82.94}\dl{+11.33} \\
        Grok-4.5 & 81.25\dl{+20.31} & 71.09\dl{+11.72} & 96.09\dl{+7.03} & 72.66\dl{+14.84} & 93.75\dl{+7.03} & 81.25\dl{+34.38} & 66.80 & \textbf{82.68}\dl{+15.89} \\
        MiniMax M2.7 & 71.09\dl{+10.16} & 77.34\dl{+9.38} & 95.31\dl{+2.34} & 85.16\dl{+17.19} & 96.88\dl{+4.69} & 57.81\dl{+7.81} & 72.01 & \textbf{80.60}\dl{+8.59} \\
        Claude Haiku 4.5 & 78.13\dl{+32.03} & 72.66\dl{+13.28} & 96.88\dl{+6.25} & 81.25\dl{+24.22} & 95.31\dl{+0.00} & 77.34\dl{+14.06} & 68.62 & \textbf{83.59}\dl{+14.97} \\
        \bottomrule
\end{tabular}}
}
    \caption{CollabFlow with other frozen executors. Each cell gives the score with CollabFlow and, in pink, its gain over direct inference by the same executor; averages cover each part's six benchmarks.}
    \label{tab:executors}
\end{table*}

\subsection{Token Cost per Item}
\label{app:token_cost}

Table~\ref{tab:token_cost} lists the tokens used per item by CollabFlow and the four comparison systems, counting the input and output tokens of every model call in an episode.

\begin{table}[!ht]
    \centering
{\fontsize{6.9}{8.3}\selectfont
\tabcolsep=1pt
\renewcommand{\arraystretch}{1.32}
\fittab{8}{\begin{tabular}{@{}l|ccc|ccc|c@{}}
        \toprule
        & \multicolumn{3}{c|}{\textbf{Mean per item}} & \multicolumn{3}{c|}{\textbf{Spread of the total}} & \\
        \textbf{Method} & Total\,$\downarrow$ & Input\,$\downarrow$ & Output\,$\downarrow$ & P90\,$\downarrow$ & Max\,$\downarrow$ & Std\,$\downarrow$ & CV\,$\downarrow$ \\
        \midrule
        Single agent & 12,010.0 & 10,889.8 & 1,120.2 & 27,507.4 & 39,279 & 13,892.5 & 1.157 \\
        Fixed team & 10,404.4 & 9,119.6 & 1,284.8 & 13,091.0 & 14,437 & 2,373.8 & 0.228 \\
        Ensemble & 24,081.8 & 20,894.2 & 3,187.6 & 37,974.8 & 47,942 & 12,557.5 & 0.521 \\
        SkillFlow & 10,451.6 & 9,396.0 & 1,055.6 & 16,639.4 & 18,039 & 5,063.7 & 0.484 \\
        \midrule
        \textbf{CollabFlow} & \textbf{9,676.6} & \textbf{8,645.8} & \textbf{1,030.8} & \textbf{11,490.0} & \textbf{11,986} & \textbf{1,903.2} & \textbf{0.197} \\
        \bottomrule
\end{tabular}}
}
    \caption{Tokens per item. Single agent: one tool-using Agent; Fixed team: the original hand-built team; Ensemble: independent Agents whose answers are combined. CV is the standard deviation divided by the mean total.}
    \label{tab:token_cost}
\end{table}


\section{Limitations}
\label{app:limitations}

CollabFlow is a bounded form of recursive self-improvement: its answer
evaluator, executor, Agent population, and update rules stay fixed across
rounds, and only the director (with its backward kernel and normalizer), the
skill library, and the reliability records change.
The representation is limited by its fixed Agent population,
protocols, output modes, and resource caps.
Construction coverage is relative to this admissible space, and does
not imply a capable Agent exists for every task.
Additional organization choices can also increase sampling cost
without improving task performance.

The single-writer design prevents conflicts between different internal
writers under enforced permissions and ordered commits, but does not
guarantee correct plans, fresh observations, safe action semantics,
or consistency with external participants.
Diagnostic traces are observational and may associate a message with
an improvement that also depends on extra computation or other
changes.

The conditional update compares tool and retrieval support with fixed
rules, so authentic but incomplete or misleading evidence can still favor
an incorrect candidate. Fabricated event records are excluded under the
authentication contract; accepting them constitutes a violation of that
contract.
Reliability estimates from sparse or nonstationary executions need
not be calibrated, and finite training does not ensure the exact
terminal distribution proved under fixed conditions.

Runtime likelihood surrogates, incomplete budget accounting,
and unimplemented adapters or extensions limit which claims can be
supported by a particular evaluated system.


\section{Future Work}
\label{app:future_work}

Further work can investigate controlled message interventions that
distinguish peer information from additional reasoning, while retaining
explicit resource and permission accounting.

Another direction is open-ended recursion: learning the update rule
itself, or parts of the evaluator, jointly with the team, rather than
fixing them in advance.

Larger Agent populations, better-calibrated reliability estimates,
and mid-episode advice in more environments also merit evaluation
under comparable total costs.


\section{Applicability and Broader Impact}
\label{app:applicability}

The framework targets tasks for which candidate Agents, permissible
communication, execution interfaces, and factual evaluators can be
specified.
Its organization/execution separation permits explicit inspection of
selected roles, messages, and resource use.

Potential applications nevertheless require domain-specific validation:
benchmark success, including on medical reasoning questions, does not
authorize unsupervised clinical decisions or unsafe execution in a
real environment.

Repository and tool access should use limited permissions, isolated
execution, and appropriate human approval for consequential actions.
Trace releases should respect confidentiality and avoid distributing
sample-specific sensitive information.

%% file: sections/appendix_theory.tex
\providecommand{\cfPr}{\mathbb P}
\providecommand{\cfE}{\mathbb E}
\providecommand{\cfone}{\mathbf 1}
\providecommand{\cfTV}{\operatorname{TV}}
\providecommand{\cfKL}{\operatorname{KL}}
\providecommand{\cfIn}{\operatorname{In}}
\providecommand{\cfOut}{\operatorname{Out}}
\providecommand{\cfCan}{\operatorname{Can}}
\providecommand{\cfFilter}{\operatorname{Filter}}
\providecommand{\cfDedup}{\operatorname{Dedup}}
\providecommand{\cfVar}{\operatorname{Var}}
\providecommand{\cfSupp}{\operatorname{supp}}
\newtheorem{cftheorem}{Theorem}[section]
\newtheorem{cflemma}[cftheorem]{Lemma}
\newtheorem{cfcorollary}[cftheorem]{Corollary}
\theoremstyle{definition}
\newtheorem{cfdefinition}[cftheorem]{Definition}
\newtheorem{cfexample}[cftheorem]{Example}
\theoremstyle{remark}
\newtheorem{cfremark}[cftheorem]{Remark}

\section{Theoretical Foundations and Execution Contracts}
\label{app:theory}

This appendix proves the results used in the main text: the invariants of
checked construction, the identities of evidence-conditioned communication,
trajectory-balance consistency for canonical teams, the bounded movement of
the self-generated target between rounds, and the path diagnostics.
Trajectory balance (TB) and reward-proportional terminal sampling build on
GFlowNet results \citep{bengio2021flow,malkin2022trajectory,bengio2023foundations}; we carry
them to complete-Agent teams with checked construction and terminal
interfaces and explicit communication and execution contracts.

\subsection{Objects, conditioning, and the two sources of randomness}
\label{app:theory_notation}

\paragraph{Fixed resource context.}
Fix a task $x$ and a resource context $\chi$. The context specifies the finite
compatible Agent population $\mathcal A_x$; immutable Agent identifiers,
profiles, tool versions, and permissions; the protocol set
$\mathcal P=\{\textsc{FinalOnly},\textsc{OneWay},\textsc{Interactive}\}$;
the permitted output configurations; the validator and action encoding;
message scheduling and tie-breaking rules; and the executor and initial
environment law. It also specifies a construction decision cap
$T_{\max}\ge1$, bounded action encodings, and separate resource vectors
$\mathbf B^{\rm con}$ and $\mathbf B^{\rm exec}$. Inequalities between resource
vectors are componentwise. A reward snapshot $k$ fixes the reliability
counters, reward evaluation rule, structural reference code, and temperature
$\beta>0$. We write $c=(x,\chi,k)$ when all three are needed. Conditioning on
$c$ is suppressed within an argument in which it is fixed.

The two budgets are separate reservations:
$\mathbf B^{\rm con}+\mathbf B^{\rm exec}\le\mathbf B^{\rm total}$.
Construction never borrows from the execution reservation. Consequently,
identical terminal graphs are evaluated under the same execution budget,
regardless of their construction order. Changing an executor, permission,
budget, scheduler, or reward snapshot defines a new conditioned problem.

\begin{cfdefinition}[Team and canonical identity]
A team is an annotated directed graph $G=(V,E,o)$, where
$\varnothing\ne V\subseteq\mathcal A_x$,
$E\subseteq V\times V\times\mathcal P$, and $o$ is a permitted output
configuration. Agent identifiers identify complete, fixed profiles rather
than interchangeable, unlabeled vertices. A canonicalizer $\cfCan$ is
idempotent and identifies exactly the representations that differ only in
serialization of the same vertices, protocol-labeled edges, and output
configuration. Different permissions, output rules,
Agent implementations, and execution contracts remain distinct. Write $\mathcal G_c$ for the
finite set of reachable, valid canonical teams at context $c$.
\end{cfdefinition}

All discrete configuration domains in this finite model are either finite
registries or finite strings over a finite alphabet with declared length
caps. For finite-DAG results, construction resource debits are fixed functions of these records or have a finite discretization. These restrictions give finiteness. All results
below use the four structural actions of the current CollabFlow interface.

\paragraph{Construction state and complete transition.}
The display state is $s_t=(G_t,\mathcal A_x,\mathbf b_t^{\rm con})$.
The proof state $v_t$ additionally retains every variable needed by the
validator, sampler, and transition rule, including the decision index $t$
and the construction history $H_t$. A sufficient choice is
$v_t=(s_t,H_t,t)$. A smaller quotient state is permissible when histories
identified by that quotient induce the same legal actions, conditional
transition probabilities, and successor quotient states.

A transition $e=(v,u,v')$ includes an accepted action and its successor;
parallel transitions are distinguished by the action. The active actions are
\texttt{add\_agent}, \texttt{add\_edge}, \texttt{set\_output}, and
\texttt{stop}. The director samples from the legal-action kernel of
Appendix~\ref{app:cf_likelihood_contract}, so every sampled action is accepted
and advances the decision index. Timeouts, empty admissible supports, and exhaustion before
valid completion lead to an explicit abort outcome $\dagger$.

A complete trajectory is the ordered sequence
$\zeta=(v_0,e_1,v_1,\ldots,e_T,z)$, with $1\le T\le T_{\max}$;
$T$ includes the transition to the absorbing outcome $z$. Every accepted
\texttt{stop} for graph $G$ enters the \emph{same} absorbing node $g_G$.
The terminal node retains the canonical graph and its fixed context, shared
by every incoming history. Nonterminal history states can
remain distinct. Denote complete paths to $g_G$ by $\Omega_G$, and let
$\mathsf V=\{z=g_G:G\in\mathcal G_c\}$ be valid completion. Each path in
$\Omega_G$ is one construction order of $G$: a prerequisite-respecting order of
its Agent and edge insertions and its single output selection, followed by
\texttt{stop}.

\paragraph{Runtime execution and reward.}
For a valid graph, let $K_\chi(d\xi,dy\mid G,x;\mathbf B^{\rm exec})$ be the
normalized joint law of all runtime randomness $\xi$ and the final answer
$y$. The trace $\xi$ contains model draws, tool outcomes, message events,
and any permitted environmental randomness. A bounded failure or timeout
returns $y=\bot$ and is included in this law. Freezing executor parameters
fixes this kernel, whose draws remain stochastic. The construction history affects execution only through
$(G,x,\chi,\mathbf B^{\rm exec})$ under this contract.

Write $\widetilde R_k(G,x,\xi)>0$ for an observed execution reward and $r_G$
for the \emph{fixed} graph reward used in a particular ideal TB argument.
A deterministic evaluator or an immutable
cached reward table defines $r_G$, and a fresh evaluation supplies the noisy
training signal $\widetilde R_k$. Sections~\ref{app:cf_reward_snapshot}
and~\ref{app:cf_training_limits} make the reward construction and the role
of noise precise.

\begin{cflemma}[Two probability spaces]
\label{lem:cf_two_spaces}
Suppose the forward construction law $P_\theta$ is normalized and
$p_{\mathsf V}:=P_\theta(\mathsf V)>0$. Then
\begin{equation}
 \mu_\theta^{\mathsf V}(G\mid c)
 =\frac{\sum_{\zeta\in\Omega_G}P_\theta(\zeta\mid c)}{p_{\mathsf V}},
 \qquad \sum_{G\in\mathcal G_c}\mu_\theta^{\mathsf V}(G\mid c)=1.
 \label{eq:cf_valid_graph_law}
\end{equation}
Conditional on valid completion, the answer law is
$\sum_G\mu_\theta^{\mathsf V}(G\mid c)K_\chi(dy\mid G,x)$.
Unconditionally, it additionally includes the probability
$1-p_{\mathsf V}$ assigned to construction abort.
\end{cflemma}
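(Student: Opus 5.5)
The plan is to decompose the forward construction law over the absorbing outcomes and use their disjointness. By construction every complete trajectory $\zeta$ ends in exactly one absorbing outcome $z$. This outcome is either the abort $\dagger$ or a canonical terminal $g_G$ with $G\in\mathcal G_c$. The cap $T\le T_{\max}$, the finite registries and the bounded action encodings make the set of complete paths finite. Together with normalization of $P_\theta$, this gives $\sum_\zeta P_\theta(\zeta\mid c)=1$, with no convergence issues to handle.

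\textbf{Valid graph law.} The path sets $\Omega_G$ for distinct canonical teams are pairwise disjoint, because every accepted \texttt{stop} for $G$ enters the single node $g_G$ and distinct canonical teams have distinct terminals. Their union is exactly the event $\mathsf V$, so $P_\theta(z=g_G\mid c)=\sum_{\zeta\in\Omega_G}P_\theta(\zeta\mid c)$. Summing over $G$ gives $p_{\mathsf V}=\sum_{G\in\mathcal G_c}P_\theta(z=g_G\mid c)$. Since $p_{\mathsf V}>0$, the definition $\mu_\theta^{\mathsf V}(G\mid c)=P_\theta(z=g_G\mid\mathsf V,c)$ and the elementary formula for conditional probability give the first identity of \eqref{eq:cf_valid_graph_law}. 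Dividing the decomposition of $p_{\mathsf V}$ by $p_{\mathsf V}$ gives the normalization. The only point to verify is that no path in $\Omega_G$ ends in $\dagger$ and no aborted path is relabelled. The abort rule and the "do not relabel" clause of Algorithm~\ref{alg:app_training} guarantee both.

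\textbf{Answer law.} I would form the joint law of $(\zeta,\xi,y)$ as $P_\theta(\zeta\mid c)\,K_\chi(d\xi,dy\mid G_\zeta,x;\mathbf B^{\rm exec})$ on valid paths, and as a point mass at the abort outcome otherwise. The key input is the execution contract: the history influences execution only through $(G,x,\chi,\mathbf B^{\rm exec})$, and the separate reservation keeps $\mathbf B^{\rm exec}$ independent of the construction order. Hence the kernel is constant on each $\Omega_G$. Marginalizing out $\xi$ and summing over $\zeta\in\Omega_G$ factors the path mass out of each group. Conditioning on $\mathsf V$ then gives $\sum_G\mu_\theta^{\mathsf V}(G\mid c)K_\chi(dy\mid G,x)$. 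The unconditional law is $p_{\mathsf V}$ times this mixture plus the mass $1-p_{\mathsf V}$ on the abort outcome.

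\textbf{Main obstacle.} The hardest step is not the algebra but justifying that the kernel is constant across construction orders. That property is exactly the execution contract combined with budget separation, so I would state explicitly where each is invoked. Everything else is finite bookkeeping.
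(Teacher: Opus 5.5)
Your proposal is correct and follows the paper's proof. Both arguments use the same steps: complete paths are disjoint events, and terminal identity makes $\Omega_G$ exactly the event of returning $G$, with these events partitioning $\mathsf V$; dividing by $p_{\mathsf V}$ gives the graph law, and the answer law comes from conditioning on $G$ and integrating $\xi$ under the normalized execution kernel. Your explicit appeal to the execution contract (history enters execution only through $(G,x,\chi,\mathbf B^{\rm exec})$) is the step the paper's one-line ``conditioning first on $G$'' uses without stating it.
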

\begin{proof}
The events consisting of different complete paths are disjoint. By terminal
identity, their union over $\Omega_G$ is precisely the event of returning
$G$, and those events partition $\mathsf V$. Division by $p_{\mathsf V}$
gives the first claim. The answer law follows by conditioning first on $G$
and then integrating $\xi$ under its normalized execution kernel.
\end{proof}
\noindent\emph{Example.} Terminal merging makes the target graph-level. For
two unit-reward graphs with two and one construction paths, equal mass per
path would give graph masses $2/3$ and $1/3$; the graph law of
Eq.~\ref{eq:cf_valid_graph_law} gives $1/2$ and $1/2$, the target of the
Preliminaries.

\subsection{Checked construction: soundness and coverage}
\label{app:construction_soundness}

Let $\mathcal I_c(v)$ be the partial invariant: selected profiles are
compatible, edge endpoints exist, protocols and permissions are permitted,
output references are well formed when present, and remaining construction
resources are nonnegative. On stateful tasks it permits at most one internal
writer. The stronger terminal predicate $\mathcal V_c(v)$ additionally
requires a nonempty team, a permitted output, and all terminal constraints,
including exactly one writer when the task requires one.

\paragraph{C1--C3.}
\textbf{C1 (checked preservation)} requires $\mathcal I_c(v_0)=1$,
$\mathcal I_c(v)=1\Rightarrow\mathcal I_c(\mathcal T_c(v,u))=1$ for every
accepted nonterminal action, and acceptance of \texttt{stop} only when
$\mathcal V_c(v)=1$.
\textbf{C2 (component completeness)} requires that every missing component
of a target expressible by the active interface can be added once its
prerequisites hold, with no extra mask excluding all such orders.
\textbf{C3 (feasible prefixes)} requires, for each target covered by the
claim, a prerequisite-respecting order of its component insertions, output
selection, and stop whose every prefix fits the construction budget and cap;
component identities and meanings remain fixed along this order.
C3 is a resource-feasibility assumption on the targets.

\begin{cftheorem}[Soundness and relative coverage]
\label{thm:cf_construction}
Under C1, every accepted completed trajectory yields a valid canonical team.
Under C1--C3, every target $G^\star=(V^\star,E^\star,o^\star)$ in the
stated feasible target class has a legal construction trajectory. With one
insertion per vertex and edge, one output selection, and one stop, it has
length
\begin{equation}
 T^\star\le |V^\star|+|E^\star|+2.
 \label{eq:app_construction_bound}
\end{equation}
If the sampler gives every transition of this witness positive probability,
that witness also has positive sampling probability.
\end{cftheorem}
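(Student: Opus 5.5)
The proof has two independent parts: soundness by induction along the trajectory, and coverage by explicitly building a witness order.

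\emph{Soundness.} Take an accepted completed trajectory $\zeta=(v_0,e_1,\ldots,e_T,z)$ with $z=g_G$. By C1, $\mathcal I_c(v_0)=1$. Every accepted nonterminal action maps a state satisfying $\mathcal I_c$ to another such state, so induction on $t$ gives $\mathcal I_c(v_t)=1$ for every $t<T$. Since $z\neq\dagger$, the last transition is an accepted \texttt{stop}. C1 accepts \texttt{stop} only when $\mathcal V_c(v_{T-1})=1$, so the decoded graph is a nonempty team with a permitted output that meets all terminal constraints, including exactly one writer on stateful tasks. Applying the idempotent canonicalizer $\cfCan$ changes only serialization, so validity is preserved and the result lies in $\mathcal G_c$. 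Terminal identity, as used in Lemma~\ref{lem:cf_two_spaces}, makes $G_\zeta$ well defined.

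\emph{Coverage.} Fix a target $G^\star$ in the feasible class, and let C3 supply a prerequisite-respecting order. The canonical choice is: all \texttt{add\_agent} insertions for $V^\star$, then all \texttt{add\_edge} insertions for $E^\star$ (whose endpoints already exist), then \texttt{set\_output}$(o^\star)$, then \texttt{stop}. I will show by induction on the prefix that each action is legal at its state, i.e. lies in $\mathcal U_c(v_t)$.
\begin{itemize}
\item C2 ensures that each missing component can be added once its prerequisites hold, with no extra mask.
\item C3 ensures that each prefix fits the construction budget and the cap $T_{\max}$.
\item Fixed component identities ensure that the state reached is exactly the intended partial team.
\end{itemize}
The partial invariant holds at every prefix, because every prefix is a subgraph of a valid team: it has at most one writer, its edge endpoints exist, and its protocols are permitted. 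After the output selection, the state is $G^\star$ up to serialization, so $\mathcal V_c=1$ and \texttt{stop} is accepted. Counting the actions, one per vertex, one per edge, one output selection and one stop, gives $T^\star=|V^\star|+|E^\star|+2$ for the action steps. If skill binding is also included, the count is adjusted accordingly; the stated inequality holds because the active interface has four actions. Finally, the witness probability is the product of its transition probabilities under $\bar\pi_\theta$, which is positive whenever each factor is positive.

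\emph{Main obstacle.} The delicate step is showing that the legal set at each prefix actually contains the next witness action. Two points need care. First, terminal-only constraints such as ``exactly one writer'' must not block intermediate states; the partial-invariant relaxation to ``at most one'' handles this. Second, budget debits must be order-consistent. I would rely directly on C3 for budgets and on C2 for the absence of masks, and would state explicitly that adding components of a valid target never violates monotone partial constraints.
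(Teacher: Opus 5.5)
Your proposal is correct and follows the paper's own proof. For soundness, C1 supplies the base invariant and its preservation, and the $\mathcal V_c$ check happens at \texttt{stop}; for coverage, you induct along a prerequisite-respecting order with C2 for availability and C3 for feasibility, then count actions and multiply positive transition probabilities. One small caveat: run the induction along the order that C3 supplies, not the agents-then-edges-then-output order you call canonical, because C3 certifies budget feasibility of every prefix only for its own order, and state-dependent construction debits need not make every prerequisite-respecting order feasible.
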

\begin{proof}
For soundness, induct on accepted actions. C1 establishes the base invariant
at $v_0$. An accepted vertex insertion preserves compatibility and inherited
permissions; an edge insertion is checked against existing endpoints and
available protocols; output selection is checked against the selected team.
More generally, preservation follows directly from C1 for each accepted
transition. The final stop additionally checks $\mathcal V_c$. Canonical
serialization preserves the validated annotations, so the decoded terminal
is valid. A rejected attempt preserves structural invariants, and the terminal
predicate comes from the accepted stop.

For coverage, take the prerequisite order supplied by C3. C2 makes the first
required component available. Assuming the first $j$ components have been
inserted, the next component's prerequisites hold by the chosen order; C2
makes its action available and C3 makes its cost feasible. C1 accepts the
checked transition. Induction constructs all target components without
changing their meanings. Output selection and stop complete the target.
Counting these actions gives the bound. Multiplying finitely many positive
transition probabilities proves the final statement.
\end{proof}
\noindent Theorem~\ref{thm:cf_construction} proves
Proposition~\ref{prop:construction}; C1--C3 concern the fixed registry $\chi$,
so the witness exists in every round. Single-Agent and independent teams are
covered whenever their output and permission configurations satisfy the
same predicates.

\begin{cflemma}[Finite construction DAG]
\label{lem:cf_dag}
With bounded encodings, a finite context, a finite decision cap, and sufficient
states retaining the decision index, the construction multigraph is finite
and acyclic. Canonical absorbing terminals can merge histories without
introducing a cycle. At an interior full-history state, if the appended
record uniquely identifies the previous history and action, there is exactly
one incoming parent--action pair.
\end{cflemma}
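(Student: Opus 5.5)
The plan proves the three claims of Lemma~\ref{lem:cf_dag} in order (finiteness, acyclicity, unique interior parents) directly from the sufficient-state construction of Appendix~\ref{app:theory_notation}. No earlier theorem is needed beyond the finiteness conventions stated there.

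\emph{Finiteness.} Every interior node is a sufficient state $v_t=(s_t,H_t,t)$ with $0\le t\le T_{\max}$. The history $H_t$ is a sequence of at most $t$ accepted actions, and each action is drawn from a finite registry or is a bounded-length string over a finite alphabet. Hence there are finitely many histories of each length. The remaining components are also finite: $s_t=(G_t,\mathcal A_x,\mathbf b_t^{\rm con})$ has $G_t$ ranging over subgraphs of the finite space $\mathcal A_x\times\mathcal A_x\times\mathcal P$ with finitely many output configurations, and $\mathbf b_t^{\rm con}$ is a fixed function of the records or is discretized. So the interior node set is finite. The terminals are the finitely many canonical nodes $g_G$, $G\in\mathcal G_c$, plus the single abort node $\dagger$. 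Each node has finitely many outgoing transitions $(v,u,v')$, because the legal action set $\mathcal U_c(v)$ is finite. The multigraph is therefore finite.

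\emph{Acyclicity.} I would use the decision index as a potential. Every accepted nonterminal transition maps a state with index $t$ to one with index $t+1$, so along any directed path among interior nodes the index strictly increases. This rules out interior cycles. Terminal nodes $g_G$ and $\dagger$ are absorbing, with no outgoing edges, so no cycle can pass through them. Merging all accepted stops for $G$ into one $g_G$ only adds incoming edges to a sink. To make this uniform, assign each terminal the rank $T_{\max}+1$; then every edge strictly increases rank, and the whole multigraph is acyclic.

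\emph{Unique parent.} Take an interior node $v_t$ with $t\ge1$, reached by an edge $(v_{t-1},u,v_t)$. The hypothesis says the appended record in $H_t$ identifies $(H_{t-1},u)$. The parent state is then recovered from $H_{t-1}$: $G_{t-1}$ and $\mathbf b^{\rm con}_{t-1}$ are obtained by replaying the deterministic transition rule $\mathcal T_c$ from $v_0$, and the index is $t-1$. The action $u$ is the last record itself, so the parent--action pair is unique. The root $v_0$ has no parent.

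The step most likely to cause trouble is determinism of the parent reconstruction. Validation is deterministic, so $v_{t-1}$ depends only on $H_{t-1}$ and the fixed context $c$. If construction debits (latency, scoring tokens) were genuinely random, the same history could correspond to different budgets $\mathbf b^{\rm con}$. I would resolve this in one of two ways: use the finite-DAG convention that debits are functions of the records, or include the realized debit in each appended record, so that $H_t$ still pins down the parent.
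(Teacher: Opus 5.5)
Your proposal is correct and takes essentially the same route as the paper: finiteness from bounded records and the decision cap, acyclicity from the strictly increasing decision index with all absorbing terminals assigned rank $T_{\max}+1$, and a unique interior parent--action pair obtained by stripping the last history record, which also identifies the action. Your added care about recovering $s_{t-1}$ deterministically from $H_{t-1}$ is exactly what the paper's standing convention covers, namely that construction debits are fixed functions of the records or are discretized.
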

\begin{proof}
There are finitely many bounded records and at most $T_{\max}$ decisions,
so finitely many histories and states. Each interior transition strictly
increases the decision index. Assign all absorbing terminals rank
$T_{\max}+1$; every edge increases rank, ruling out a directed cycle.
An interior full-history state has a unique predecessor by removing its last
record, and that record identifies the action. Canonical terminals merge several
histories by design, because their definition forgets the final history.
\end{proof}
\noindent At a unique-parent interior state, the backward kernel equals one.

\subsection{Action likelihoods and legal-support normalization}
\label{app:cf_likelihood_contract}

Let $\mathcal U_c(v)$ be a nonempty finite legal action set at an interior
state, and $a_\theta(u\mid v)\ge0$ a base action mass. The action-level
masked kernel is
\begin{equation}
 \bar\pi_\theta(u\mid v,c)
 =\frac{\cfone\{u\in\mathcal U_c(v)\}a_\theta(u\mid v)}
 {\sum_{u'\in\mathcal U_c(v)}a_\theta(u'\mid v)},
 \qquad \sum_{u'\in\mathcal U_c(v)}a_\theta(u'\mid v)>0.
 \label{eq:cf_mask}
\end{equation}
An empty or zero-mass support uses the declared abort transition.
For deterministic construction, $p_F(v,u,v'\mid v,c)=\bar\pi_\theta(u\mid v,c)$
on the prescribed successor. When several serializations encode the same
semantic action, $a_\theta(u\mid v)$ must sum their masses, or the encoding
must be canonical and unique. CollabFlow scores every legal action string and
samples from Eq.~\ref{eq:cf_mask}, so no proposal is rejected; a
propose-and-validate sampler with retries defines a different kernel, and its
own likelihood would then enter the residual.

\begin{cflemma}[Sequence probability of an action string]
\label{lem:cf_likelihood}
For a completed, uniquely encoded action string $w_{1:K}$, including its
termination symbol, an autoregressive sampler assigns probability
$\prod_{j=1}^K p_\theta(w_j\mid w_{<j},v)$ and log probability equal to the
sum of those token log probabilities. This gives a normalized action kernel
when the actual token sampler terminates almost surely in the admissible
encoding set. The action log probability is therefore the sum of token log probabilities
rather than their mean $K^{-1}\sum_j\log p_\theta(w_j\mid\cdot)$.
\end{cflemma}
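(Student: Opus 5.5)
The plan is to treat the action string as a path in the finite prefix tree of tokens and to use the chain rule together with a mass-conservation argument. First, for a fixed completed string $w_{1:K}$ that ends in its termination symbol, the sampler emits $w_1$, then $w_2$ given $w_1$, and so on. The event of producing exactly $w_{1:K}$ is therefore the intersection of the events $\{\text{token }j=w_j\}$ for $j=1,\ldots,K$. By the chain rule its probability is $\prod_{j=1}^K p_\theta(w_j\mid w_{<j},v)$. Taking logarithms gives the sum of token log probabilities. The termination symbol is one of the factors, so a string and its proper prefix-extensions are distinct events: the delimiter makes the encoding prefix-free.

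Second, I need normalization over actions. Let $\mathcal W$ be the set of completed admissible strings. Unique encoding gives a bijection between actions and elements of $\mathcal W$, with no two strings for one action; otherwise masses would have to be summed as required by Eq.~\ref{eq:cf_mask}. Because the completed strings are prefix-free, the events ``the sampler outputs $w$'' for $w\in\mathcal W$ are pairwise disjoint. Their union is the event that the sampler terminates in the admissible set. If this event has probability one, which is exactly the hypothesis, then $\sum_{w\in\mathcal W}\prod_j p_\theta(w_j\mid w_{<j},v)=1$. The induced action kernel is then normalized and can serve as the base mass $a_\theta$, or directly as $\bar\pi_\theta$ when $\mathcal W$ coincides with the legal set.

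The main obstacle is this normalization step, since the sample space of token sequences may be infinite. I would handle it by truncation. For each $n$, the probabilities of the tree's depth-$n$ frontier, including strings that terminated earlier, sum to one by induction on $n$, using the fact that each conditional token law sums to one. Letting $n\to\infty$ and applying monotone convergence shows that the completed-string mass equals the probability of almost-sure termination. With declared length caps this limit is trivial, because the tree is finite.

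Finally, the contrast with the mean $K^{-1}\sum_j\log p_\theta(w_j\mid\cdot)$ follows directly. The exponential of the mean is a geometric mean of per-token probabilities, and in general it does not sum to one over $\mathcal W$. One counterexample suffices: take two strings of lengths $1$ and $2$ with unequal token probabilities, and their length-normalized scores will not sum to one. The mean therefore does not equal the sampling probability needed in the forward term of Eq.~\ref{eq:balance}.
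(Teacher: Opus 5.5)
Your proposal is correct and follows the same route as the paper's proof: the chain rule gives the product, terminated strings are mutually exclusive and almost-sure completion exhausts their sample space, and the exponentiated token mean $p_\theta(w_{1:K}\mid v)^{1/K}$ need not sum to one. Your truncation and monotone-convergence step only spells out what the paper compresses into ``almost-sure completion exhausts their sample space.'' Your counterexample also works, and it is a little different from the paper's two length-two strings of probability $1/2$, whose transformed masses sum to $\sqrt2$: two completed strings of lengths $1$ and $2$ with probabilities $p$ and $1-p$ give scores $p+\sqrt{1-p}>1$ for any $p\in(0,1)$, so the ``unequal token probabilities'' condition is not needed.
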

\begin{proof}
The chain rule gives the product. Completed strings with their termination
symbols are mutually exclusive, and almost-sure completion exhausts their
sample space. Taking logarithms gives the sum. Exponentiating the token
average instead gives $p_\theta(w_{1:K}\mid v)^{1/K}$. These transformed
masses need not sum to one. For two length-two actions of probability $1/2$
each, their transformed masses sum to $\sqrt2$. An additional action-level
normalization defines a different sampler whose normalizer must enter TB.
\end{proof}

\paragraph{Grammar masking as an exactly scored sampler.}
Constraining tokens to legal continuations yields its own normalized kernel,
which can differ from Eq.~\ref{eq:cf_mask} formed from original complete-string
masses. For legal strings \texttt{aa} and \texttt{ba}, let first-token
probabilities be $0.9,0.1$, and let the original probability of second-token
\texttt{a} be $0.1$ after \texttt{a} and $0.9$ after \texttt{b}.
The complete legal strings both have mass $0.09$, so complete-action
conditioning gives $1/2,1/2$. Prefix masking forces the second token but
leaves the first-token distribution $0.9,0.1$. Either sampler is usable
when its own temperature, truncation, token masks, stop rules, and
normalization enter its likelihood.

\paragraph{Sampled reasoning and construction feedback.}
If a reflection $r$ is sampled before an action, then the joint transition
mass includes $p_\theta(r\mid v)\pi_\theta(u\mid r,v)$, or $r$ must be
marginalized when it is absent from the state. If feedback $o$ is stochastic,
the full transition also includes its kernel $K(o\mid v,r,u)$. Frozen
parameters make the derivative of that kernel zero, and its probability
stays in the transition mass. An action-only model is exact when all omitted variables
are deterministic functions of the modeled state and action, or when the
stated marginal kernel actually integrates them. 

\subsection{Candidate versions, evidence applicability, and duplication}
\label{app:cf_provenance}

A candidate $a=(\iota,y,\nu)$ contains an immutable candidate-version
identifier $\iota$, its answer or executable artifact $y$, and the relevant
environment version $\nu$. An evidence record
$e=(\mathrm{id},\iota_e,\nu_e,\mathrm{check},\mathrm{result},\mathrm{scope},
\mathrm{source})$ has an immutable event identifier. Copying a record copies
its identifier, so a copy adds no new verification event. The runtime
authenticates each record's source, so a message that merely claims a
passed test adds no evidence.

Let $A_c(e,a)\in\{0,1\}$ be the applicability predicate. It can be one only
when the recorded check concerns this candidate and environment version,
or a sound, explicitly certified transfer of that check. Applicability
means that the record still expresses the same check on the current
artifact. Define
\begin{equation}
 \cfFilter_c(E,a)=\cfDedup\{e\in E:A_c(e,a)=1\},
 \quad \ell(a,E)=s_c(a,\cfFilter_c(E,a)).
 \label{eq:cf_filter}
\end{equation}
Here $\cfDedup$ retains one authentic record per event identifier, and
$s_c$ is the declared finite real-valued evidence score, with
$s_c(a,\varnothing)=0$. Missing or nonfinite scores trigger a declared
fallback.

\begin{cflemma}[Evidence-validity invariant]
\label{lem:cf_evidence_invariant}
Assume initial evidence is filtered, copying preserves the entire candidate
version, evidence records are immutable, and every changed candidate is
filtered by Eq.~\ref{eq:cf_filter}. Then after any finite sequence of
copy, keep, and tool-free revision operations, every retained record is
applicable to its current candidate and appears at most once per evidence
set. A new candidate with no applicable incoming record has empty inherited
evidence and inherited score zero.
\end{cflemma}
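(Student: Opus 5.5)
The proof is an induction on the length of the operation sequence, with a two-part invariant maintained at every stage. The invariant $\mathcal J$ asks, for each Agent $v$ with current candidate $a_v$ and evidence set $E_v$, that
\begin{enumerate}
\item every $e\in E_v$ satisfies $A_c(e,a_v)=1$ (applicability), and
\item the event identifiers in $E_v$ are pairwise distinct (no duplication).
\end{enumerate}
Equivalently, $E_v=\cfFilter_c(E_v,a_v)$, because $\cfFilter_c$ keeps exactly the applicable records and $\cfDedup$ leaves a set with distinct identifiers unchanged. The base case is the assumption that initial evidence is filtered: each $E_v$ is the output of Eq.~\ref{eq:cf_filter}, and $\cfFilter_c$ is idempotent, so $\mathcal J$ holds before the first operation.

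For the inductive step, assume $\mathcal J$ holds before one operation and split into the three cases the gate of Eq.~\ref{eq:gate} can produce.
\begin{itemize}
\item \textbf{Keep} ($\Delta<-\kappa$). The receiver's pair $(a_j,E_j)$ is unchanged. Records are immutable, so every predicate $A_c(e,a_j)$ has the same value as before, and $\mathcal J$ persists.
\item \textbf{Copy} ($\Delta>\kappa$). The receiver takes the entire candidate version $a_i=(\iota,y,\nu)$ together with its records. By hypothesis, copying preserves the whole version, and copied records keep their identifiers. Applicability therefore transfers verbatim from the sender's state, where it held by $\mathcal J$. If the receiver also pools its old records with the copied ones, the hypothesis that every changed candidate is filtered makes the result $\cfFilter_c(\cdot,a_i)$, which satisfies $\mathcal J$ by the definition of $\cfFilter_c$.
\item \textbf{Tool-free revision} ($|\Delta|\le\kappa$). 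This returns a new version $a'$ with a fresh identifier. Its retained evidence is by assumption $\cfFilter_c(E,a')$, where $E$ is whatever union of peer evidence was supplied in context. Both clauses of $\mathcal J$ then follow directly from the two stages of Eq.~\ref{eq:cf_filter}: applicability from the predicate filter and uniqueness from $\cfDedup$.
\end{itemize}
Agents not touched by the operation keep their pair, so immutability again preserves $\mathcal J$ for them. Since composite edge programs $g^p_e$ and the sweep $\mathcal C_G$ are finite compositions of these elementary gates, induction on the number of operations gives the first two claims.

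The last claim follows from the revision case. A new candidate with no applicable incoming record has $\{e\in E:A_c(e,a')=1\}=\varnothing$, so $\cfFilter_c(E,a')=\varnothing$. The convention $s_c(a,\varnothing)=0$ then gives $\ell=0$.

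The step I expect to be the main obstacle is the copy case. One must argue that copying a candidate preserves applicability rather than silently creating a new version that old checks no longer certify. The point is that $A_c$ depends only on the recorded check together with the pair (version identifier, environment version). Copy preserves both, so $A_c$ takes the same value on the copy. I would state this dependence explicitly, as a consequence of the definition of applicability, before the induction, so that the copy case reduces to equality of arguments.
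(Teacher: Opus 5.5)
Your proof is correct and follows the paper's argument. You induct over operations with the invariant that each evidence set is applicable to, and deduplicated for, its current candidate; you check keep, copy (whole version and filtered set preserved) and revision (new version, filtered) separately; and you obtain the final claim from the convention $s_c(a,\varnothing)=0$.
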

\begin{proof}
The base case follows from initialization. Keeping a candidate preserves
the invariant. Copying the complete version and its filtered set preserves
applicability and event identifiers. A revision creates a new version;
filtering deletes every nonapplicable record and deduplication deletes
repetitions. These observations establish the inductive step. The last
statement follows directly from the empty-set convention for $s_c$.
Equivalently, filtering is idempotent, and filtering a union agrees with
deduplicating the union of its individually filtered sets, because the same
candidate-specific predicate is applied to every immutable record.
\end{proof}
\noindent\emph{Example.} A revised program may fail two tests that its two
parent programs passed. The lemma keeps those pass records with the parents,
so the revision earns its own checks.

\paragraph{Scores and likelihood-ratio interpretation.}
The gate compares scores. The method uses the maximum
applicable evidence category: $2$ for an applicable successful executable
check, $1$ for applicable retrieval support if there is no category-2 record,
and $0$ otherwise. Repeated copies leave that maximum unchanged. The flip identity below holds
for this surrogate and for a calibrated log-likelihood ratio (LLR) alike.

For an LLR interpretation, let $Z_a$ indicate candidate correctness and let
$E_1,\ldots,E_m\in\{0,1\}$ be outputs of a fixed check schedule.
Conditional on the candidate-selection context, assume the checks are
independent given $Z_a$, with known rates
$0<t_j=P(E_j=1\mid Z_a=1)<1$ and
$0<f_j=P(E_j=1\mid Z_a=0)<1$. Then
\begin{equation}
 \log\frac{P(E_{1:m}\mid Z_a=1)}{P(E_{1:m}\mid Z_a=0)}
 =\sum_{j=1}^m\left[
 E_j\log\frac{t_j}{f_j}+(1-E_j)\log\frac{1-t_j}{1-f_j}\right].
 \label{eq:cf_llr}
\end{equation}
The equality follows by conditional factorization and the logarithm of a
product. Adaptive check selection, omission of failed tests, and candidate
selection must either be included in the likelihood or satisfy corresponding
conditional assumptions. The finite-rate form above keeps the LLR finite.

\paragraph{Aggregation contract.}
To implement an Integrator, first select a deterministic representative
$a_y$ for each normalized-answer class, using a declared tie rule. Pool
records from that class, deduplicate them, and filter again for $a_y$ before
computing $s_c(a_y,\cdot)$. Select the maximum-scoring representative with a
fixed tie rule; an empty candidate set returns $\bot$. The normalization
relation preserves task correctness, and evidence transfer uses $A_c$. This
rule is implementable and keeps a copied event from counting as a new
verification event. A formatter or verifier that generates a different artifact
creates a new version and is subject to the same applicability contract.

\subsection{One-edge revision identity}
\label{app:revision_bound}

Fix one scheduled directed gate invocation $i\to j$. Let $a_i,a_j$ be its
input candidates, $z_i,z_j\in\{0,1\}$ their correctness indicators, and
$\Delta=\ell(a_i,E_i)-\ell(a_j,E_j)$. Fix a finite margin $\kappa\ge0$.
The results below hold for every such margin. The method uses $\kappa=1$:
scores lie in $\{0,1,2\}$, so $\kappa=0$ adopts on every positive gap,
$\kappa\ge2$ never adopts, and $\kappa=1$ is the only integer margin that
permits adoption but not on a one-level gap.
The normalized-answer equivalence $\equiv$ is assumed to preserve
correctness. To avoid overlapping branches, apply the following priority:
\begin{equation}
 a_j^+=
 \begin{cases}
 a_j, & y_i\equiv y_j,\\
 a_i, & y_i\not\equiv y_j\ \text{and }\Delta>\kappa,\\
 a_j, & y_i\not\equiv y_j\ \text{and }\Delta<-\kappa,\\
 \mathrm{Revise}_c(a_j,a_i,E_i\cup E_j), &\text{otherwise}.
 \end{cases}
 \label{eq:cf_ordered_gate}
\end{equation}
Copies retain their candidate identity and evidence. The final branch
creates a version and applies Eq.~\ref{eq:cf_filter}. Revision randomness
and a declared tool-free failure fallback are part of $\mathrm{Revise}_c$;
$\bot$ counts as incorrect. Self-reported confidence enters neither $s_c$ nor the threshold
comparison; the revision kernel may read it among its permitted textual
inputs.

\begin{cftheorem}[Single-edge, conditionally selected flip identity]
\label{thm:cf_flip}
Assume the preceding gate contract and $P(C\mid c)>0$, where
$C=\{z_j=1,z_i=0\}$. Put $J=\{|\Delta|\le\kappa\}$ and define
\[
 \delta_C=P(\Delta>\kappa\mid C,c),\qquad
 c_J=P(z_j^+=0\mid J,C,c).
\]
When $P(J\mid C,c)=0$, set $c_J=0$ by convention. Then
\begin{equation}
 P(z_j^+=0\mid C,c)
 =\delta_C+P(J\mid C,c)c_J.
 \label{eq:cf_flip_identity}
\end{equation}
If $c_J\le\bar c$ is a justified upper bound, the right-hand side is at
most $\delta_C+P(J\mid C,c)\bar c$.
\end{cftheorem}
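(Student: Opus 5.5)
The plan is to partition the conditioning event $C$ by the branch that Eq.~\ref{eq:cf_ordered_gate} takes and then apply the law of total probability. On $C$ we have $z_j=1\neq 0=z_i$. Because $\equiv$ preserves correctness, $y_i\equiv y_j$ would force $z_i=z_j$, so the first branch has conditional probability zero on $C$. The remaining events $A=\{\Delta>\kappa\}$, $K=\{\Delta<-\kappa\}$ and $J=\{|\Delta|\le\kappa\}$ are disjoint and cover $C$, since $\Delta$ is finite by the finite-score contract of Appendix~\ref{app:cf_provenance}. The priority order of the gate assigns exactly one branch to each outcome, so
\[
P(z_j^+=0\mid C,c)=P(z_j^+=0,A\mid C,c)+P(z_j^+=0,K\mid C,c)+P(z_j^+=0,J\mid C,c).
\]

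Next, evaluate each term. On $A\cap C$ the gate copies $a_i$. Copies retain candidate identity, so $z_j^+=z_i=0$ and the first term equals $P(A\mid C,c)=\delta_C$. On $K\cap C$ the gate keeps $a_j$, so $z_j^+=z_j=1$ and the second term vanishes. On $J\cap C$ the receiver is revised. If $P(J\mid C,c)>0$, the third term factors as $P(J\mid C,c)\,P(z_j^+=0\mid J,C,c)=P(J\mid C,c)\,c_J$. If $P(J\mid C,c)=0$, the term is zero, which agrees with the convention $c_J=0$. The revision's internal randomness, including the tool-free failure fallback with $\bot$ counted as incorrect, is already part of the law of $z_j^+$ under $\mathrm{Revise}_c$, so it needs no separate treatment. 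Adding the three terms gives Eq.~\ref{eq:cf_flip_identity}.

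The upper bound then follows because $P(J\mid C,c)\ge0$: replacing $c_J$ by $\bar c\ge c_J$ can only increase the right-hand side.

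The main obstacle is the correctness-preservation step. I must make sure that ``copy'' genuinely transfers correctness, and that $y_i\equiv y_j$ cannot occur on $C$. Both depend on the stated assumptions that $\equiv$ preserves task correctness and that copies retain the full candidate version, including the environment version $\nu$. I would state these dependencies explicitly rather than leave them implicit. Everything else is bookkeeping with conditional probabilities under $P(C\mid c)>0$.
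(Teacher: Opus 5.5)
Your proposal is correct and follows the paper's proof essentially step for step. Both exclude the agreement branch on $C$ by correctness-preserving equivalence, partition $C$ into $\{\Delta>\kappa\}$, $\{\Delta<-\kappa\}$ and $J$ with flip probabilities one, zero and $c_J$, apply total probability conditional on $C,c$, and obtain the bound by substituting $\bar c$. Your explicit treatment of finiteness of $\Delta$ and of the null-$J$ convention only spells out what the paper leaves implicit.
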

\begin{proof}
On $C$, correctness-preserving equivalence excludes $y_i\equiv y_j$.
Thus the three disjoint events $\Delta>\kappa$, $\Delta<-\kappa$, and $J$
partition $C$. In the first branch the incorrect peer is copied and the
conditional flip probability is one. In the second branch the correct
receiver is retained and it is zero. In the third branch it is $c_J$ by
definition. Applying the law of total probability \emph{conditional on
$C,c$} proves the identity, including null branches. Substitution of a valid
upper bound proves the inequality.
\end{proof}

\paragraph{Conditioning of the identity.}
This proves the flip part of Proposition~\ref{prop:revision} with the same $c_J$,
$\delta_C$, and conditioning context $c$. It requires no independence
between the two candidates. The quantities are evaluated under the actual
pair-selection and evidence-generation law, and the same proof conditions on
a pre-invocation selection history whenever the stated conditional event has
positive probability.

\begin{cfcorollary}[Correction rate and the margin]
\label{cor:cf_correction}
Under the same gate contract, let $C'=\{z_j=0,z_i=1\}$ with $P(C'\mid c)>0$,
$\delta^+=P(\Delta>\kappa\mid C',c)$, and $c^+_J=P(z_j^+=1\mid J,C',c)$. Then
\begin{equation}
 P(z_j^+=1\mid C',c)=\delta^++P(J\mid C',c)\,c^+_J.
 \label{eq:cf_correction}
\end{equation}
As $\kappa$ increases, $\delta_C$ and $\delta^+$ are nonincreasing, while
$P(J\mid C,c)$ and $P(J\mid C',c)$ are nondecreasing.
\end{cfcorollary}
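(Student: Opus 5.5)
The correction identity follows the proof of Theorem~\ref{thm:cf_flip}, with the roles of sender and receiver reversed in the correctness labels. On $C'=\{z_j=0,z_i=1\}$ the two candidates differ in correctness. Because the normalized-answer equivalence preserves correctness, the first branch of Eq.~\ref{eq:cf_ordered_gate} ($y_i\equiv y_j$) is excluded on $C'$. The three events $\{\Delta>\kappa\}$, $\{\Delta<-\kappa\}$ and $J$ are then disjoint and exhaust $C'$, so they partition it. I treat each branch in turn:
\begin{itemize}
\item On $\{\Delta>\kappa\}$ the receiver copies $a_i$. The copy keeps the candidate identity and therefore its correctness, so $z_j^+=z_i=1$ and the conditional correction probability is one.
\item On $\{\Delta<-\kappa\}$ the receiver keeps $a_j$, so $z_j^+=0$ and the probability is zero.
\item On $J$ the probability is $c^+_J$ by definition. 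If $P(J\mid C',c)=0$, I use the convention $c^+_J=0$, as in Theorem~\ref{thm:cf_flip}.
\end{itemize}
The law of total probability conditional on $(C',c)$ then gives Eq.~\ref{eq:cf_correction}. As before, no independence between the candidates is needed.

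For the monotonicity claims I fix the joint law of $(z_i,z_j,\Delta)$ under $c$. I will argue that this law does not depend on $\kappa$, because $\Delta$ is computed from the candidates' evidence before the gate runs. For $\kappa_1\le\kappa_2$ we then have the set inclusion $\{\Delta>\kappa_2\}\subseteq\{\Delta>\kappa_1\}$, so $\delta_C$ and $\delta^+$ are nonincreasing in $\kappa$. Similarly $\{|\Delta|\le\kappa_1\}\subseteq\{|\Delta|\le\kappa_2\}$, so $P(J\mid C,c)$ and $P(J\mid C',c)$ are nondecreasing. Both conditioning events $C$ and $C'$ are fixed in advance and do not depend on $\kappa$, so these inclusions carry over directly to the conditional probabilities.

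The one delicate step is the claim that the pre-gate law of $(z_i,z_j,\Delta)$ is invariant in $\kappa$. In a multi-edge sweep, earlier gates run with the same $\kappa$ and can change which pairs reach a given invocation. I will therefore state the corollary for a single scheduled invocation whose input candidates and evidence are taken as given, as the one-edge setup of Appendix~\ref{app:revision_bound} already does. Equivalently, I condition on the pre-invocation history, so that only this invocation's threshold varies. Under that reading the monotonicity is pure set inclusion. Across a full sweep it would need an additional coupling argument, which I will not claim.
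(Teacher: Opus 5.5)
Your proof is correct and follows the paper's argument. The identity comes from the same three-branch partition of $C'$ (adoption copies the correct sender, keeping retains the wrong receiver, revision contributes $c^+_J$) combined with the law of total probability, and monotonicity comes from the inclusions $\{\Delta>\kappa_2\}\subseteq\{\Delta>\kappa_1\}$ and $\{|\Delta|\le\kappa_1\}\subseteq\{|\Delta|\le\kappa_2\}$ for $\kappa_1\le\kappa_2$; by holding one invocation's input candidates and evidence fixed, you make explicit an assumption that the paper's one-line monotonicity step leaves implicit in the single-edge setup of Appendix~\ref{app:revision_bound}.
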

\begin{proof}
On $C'$ the answers differ, so the three branches again partition the event.
Adoption copies the correct sender, keeping retains the incorrect receiver,
and revision is correct with probability $c^+_J$; the law of total probability
gives the identity. The events $\{\Delta>\kappa\}$ shrink and the events
$\{|\Delta|\le\kappa\}$ grow as $\kappa$ increases, which proves monotonicity.
\end{proof}
\noindent The margin therefore trades false adoption for revision on $C$ and
true adoption for revision on $C'$, and the two identities split the net
effect of a gate into four rates that revision logs record.

\begin{cfexample}[Event deduplication in voting]
\label{ex:cf_flip_boundaries}
After $a_i$ is copied into $a_j$, their outputs and evidence are shared.
With weights $3$ on a correct answer and $2$ on a wrong answer, a third
candidate that copies the wrong answer would turn the vote into $3$ versus
$2+2$ if copies were counted separately. Deduplication by event identity
keeps the vote at $3$ versus $2$.
\end{cfexample}

\subsection{Runtime scheduling, finite execution, and a linear baseline}
\label{app:cf_runtime}

Construction edges add team components; runtime edges exchange messages
between selected Agents. These are different graphs. Fix a total order on
Agent identifiers and on protocol-labeled edges. For an acyclic runtime
$E$, choose the lexicographically resolved topological order and visit each
outgoing edge program once, with a fixed order among edges having the same
source. For a cyclic runtime $E$, visit edge programs in fixed order for at
most $R_{\rm comm}<\infty$ sweeps. Budget exhaustion stops either schedule
with the declared current-output or failure rule. The schedule is
asynchronous and serial: every elementary update sees the state left by
previous updates, rather than an unspecified mixture of snapshots.

\textsc{FinalOnly} performs no revision invocation. \textsc{OneWay}
evaluates the directed gate once. For definiteness, one
\textsc{Interactive} round is one elementary gate evaluation; its directions
alternate, starting with the edge's declared source-to-receiver direction.
Agreement stops that local program before revision; a decisive copy or keep
is applied and then stops it; otherwise it continues for at most
$r_{\max}<\infty$ elementary rounds. Reciprocal updates are internal to
this bounded program. A single topological pass visits each \emph{edge program} once, and an
interactive program replies locally within that visit.

Stateful Advisor messages are tool-disabled and feed the Executor's next
decision; they are delivered at declared decision boundaries, between state
mutations. Team profiles, output kernels,
and all elementary calls have declared finite token, tool, and timeout
limits. A consistent implementation reserves an execution quota for each
Agent, each edge program, and the output operation. For integer resources,
$\lfloor\mathbf B^{\rm exec}/(|V|+|E|+1)\rfloor$ is one such allocation;
rounding leaves unused resources.
All nested calls are charged to the responsible quota.

\begin{cflemma}[Finite runtime and budget accounting]
\label{lem:cf_runtime}
Under the preceding bounded-call, quota, and scheduling contracts, the
number of elementary gate evaluations is at most
$|E|r_{\max}$ for the one-pass schedule and
$R_{\rm comm}|E|r_{\max}$ for the cyclic schedule, taking $r_{\max}\ge1$.
Execution consumes at most its reserved resource vector and returns an
answer or a declared failure in finite time when timeouts are enforced.
\end{cflemma}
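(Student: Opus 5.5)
The argument counts elementary gate evaluations schedule by schedule, then charges every call to a reserved quota and argues termination.

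\emph{Counting gate evaluations.} Under the scheduling contract, the one-pass schedule for an acyclic runtime $E$ visits each edge program exactly once. The cyclic schedule visits the edge programs in fixed order for at most $R_{\rm comm}$ sweeps, so each program is visited at most $R_{\rm comm}$ times. It then suffices to bound the gate evaluations performed in a single visit to one edge program, case by case on its protocol:
\begin{itemize}
\item \textsc{FinalOnly} performs none.
\item \textsc{OneWay} performs exactly one.
\item \textsc{Interactive} performs at most $r_{\max}$ elementary rounds, each being one gate evaluation. Agreement or a decisive copy or keep only stops it earlier.
\end{itemize}
Since $r_{\max}\ge1$, every protocol uses at most $r_{\max}$ evaluations per visit. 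Summing over the $|E|$ edges gives the bound $|E|r_{\max}$ for the one-pass schedule. Multiplying by the sweep cap gives $R_{\rm comm}|E|r_{\max}$ for the cyclic schedule.

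\emph{Budget accounting.} Every elementary call is charged to a quota: each gate evaluation together with its $\mathrm{Revise}$ call, each Agent episode, each Advisor message, and the output operation. By contract, these quotas are drawn from a reservation whose components sum to at most $\mathbf B^{\rm exec}$; the allocation $\lfloor\mathbf B^{\rm exec}/(|V|+|E|+1)\rfloor$ is one example. Nested calls are charged to the responsible quota. Budget exhaustion stops execution through the declared current-output or failure rule rather than by borrowing from another quota. Summing the quotas componentwise therefore shows that total consumption is at most the reserved vector.

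\emph{Finite time.} The number of Agents, the number of edge programs and the number of gate evaluations are all finite by the first step. Each elementary call has a finite token, tool and timeout limit, and timeouts are enforced. Finitely many calls, each of finite duration, take finite time. The schedule is serial, so each update follows the previous one, and no unbounded waiting on concurrent snapshots arises. Execution therefore ends either with the output rule's answer or with the declared failure $\bot$.

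The main obstacle is making the per-visit bound for \textsc{Interactive} airtight. Alternating directions and early stops must not create hidden extra invocations; in particular, a reciprocal reply must not count as a new visit. I would handle this by relying on the stated convention that one round is one elementary gate evaluation and that reciprocal updates are internal to the bounded program. The remaining steps are direct bookkeeping.
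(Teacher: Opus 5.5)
Your proposal is correct and follows essentially the same route as the paper's proof. It bounds each edge-program visit by $r_{\max}$ gate evaluations (none for \textsc{FinalOnly}, one for \textsc{OneWay}, at most $r_{\max}$ for \textsc{Interactive}), multiplies by the $|E|$ visits per sweep and the sweep cap, bounds total consumption by summing nonoverlapping quotas against the execution reservation, and gets termination from finitely many bounded, timed-out calls. The only point the paper states that you leave implicit is that forced stopping by budget exhaustion can only reduce the number of operations, so \textsc{OneWay} performs at most one evaluation rather than exactly one.
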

\begin{proof}
Each outer visit runs at most $r_{\max}$ elementary gate evaluations;
FinalOnly runs none and OneWay at most one. There are at most $|E|$ visits
per sweep and at most the declared number of sweeps. Every local operation
terminates or times out, so a finite number of them terminates. Charging
all nested costs to nonoverlapping quotas bounds their sum by the execution
reservation. Forced stopping can only reduce the number of operations.
\end{proof}
\noindent The fixed order makes runs reproducible, because updates from two
incoming peers can depend on their order.

\paragraph{DeGroot linear baseline.}
Following the consensus model of \citet{degroot1974reaching}, take a
hypothetical linear opinion vector $h^{(r)}\in\mathbb R^n$ and suppose
$h^{(r+1)}=Wh^{(r)}$, where $W\ge0$, $W\cfone=\cfone$, and $W$ is primitive:
some integer $m\ge1$ has all entries of $W^m$ positive.

\begin{cflemma}[Primitive stochastic linear baseline]
\label{lem:cf_degroot}
Under those conditions, $W^r\to\cfone\pi^\top$ for a unique probability
vector $\pi$ satisfying $\pi^\top W=\pi^\top$. Hence the linear baseline
reaches the consensus $\pi^\top h^{(0)}$, which the initial opinions and the
weights $W$ fix without reference to evidence about correctness.
\end{cflemma}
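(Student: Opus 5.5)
The plan is to prove the lemma with the standard Perron--Frobenius argument for primitive nonnegative matrices, specialized to the row-stochastic case, and then read off the consensus value. Since $W\ge0$ and $W\cfone=\cfone$, the spectral radius of $W$ is at most its maximum row sum $1$. Because $\cfone$ is a right eigenvector, $1$ is an eigenvalue and $\rho(W)=1$. Primitivity ($W^m>0$ entrywise) then gives, by Perron--Frobenius for primitive matrices, that $1$ is a simple eigenvalue and every other eigenvalue $\lambda$ satisfies $|\lambda|<1$. It also gives a left eigenvector $\pi$ with strictly positive entries, which I normalize so that $\pi^\top\cfone=1$. This settles existence and uniqueness of $\pi$: a second probability vector with $\pi'^\top W=\pi'^\top$ would be another left eigenvector for the simple eigenvalue $1$, hence a multiple of $\pi$, and the normalization forces $\pi'=\pi$.

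For convergence, I decompose $W=\cfone\pi^\top+N$ with $N=W-\cfone\pi^\top$. The products satisfy $\cfone\pi^\top N=0=N\cfone\pi^\top$, because $\pi^\top W=\pi^\top$, $W\cfone=\cfone$ and $\pi^\top\cfone=1$. Hence $W^r=\cfone\pi^\top+N^r$. The spectrum of $N$ is the spectrum of $W$ with the eigenvalue $1$ replaced by $0$, so $\rho(N)<1$ and $N^r\to0$, for example via Gelfand's formula or the Jordan form. This yields $W^r\to\cfone\pi^\top$.

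If I want to avoid spectral theory, there is an elementary alternative. Let $\epsilon=\min_{ij}(W^m)_{ij}>0$ and use the contraction of the oscillation $\max_i h_i-\min_i h_i$ by a factor of at most $1-n\epsilon$ under $W^m$. Together with monotonicity of $\max$ and $\min$ under a stochastic $W$, this shows every $W^rh$ converges to a constant vector. The limit map $h\mapsto\cfone\pi^\top h$ then defines $\pi$, and $\pi^\top W=\pi^\top$ follows from $W^{r+1}=W^rW$.

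The consensus claim follows immediately. Since $h^{(r)}=W^rh^{(0)}\to\cfone\pi^\top h^{(0)}$, every coordinate converges to the same scalar $\pi^\top h^{(0)}$. That scalar depends only on $W$ through $\pi$ and on the initial opinions $h^{(0)}$. No correctness variable or evidence record appears in the recursion, so none can enter the limit; this interpretive sentence needs no further argument.

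The only step requiring care is the strict gap $|\lambda|<1$ for the non-unit eigenvalues, which is where primitivity, and not mere irreducibility, is essential. I would cite Perron--Frobenius for primitive matrices rather than reprove it, falling back on the oscillation-contraction argument if a self-contained proof is wanted.
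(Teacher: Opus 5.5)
Your proof is correct, but your primary route differs from the paper's; the paper's proof is essentially your ``elementary alternative.'' It sets $a=\min_{i,j}(W^m)_{ij}$ and peels off the common part $a\sum_j h_j$ from every coordinate of $W^mh$, so the leftover nonnegative row weights sum to $1-na$. This gives $\max(W^mh)-\min(W^mh)\le(1-na)(\max h-\min h)$, and monotonicity of the extremes under $W$ then gives convergence to a common scalar. Applying this to the basis vectors defines $\pi$, and stationarity follows from $W^{r+1}=W^rW$. Uniqueness follows from $q^\top=q^\top W^r\to(q^\top\mathbf 1)\pi^\top=\pi^\top$ for any stationary probability vector $q$.

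Your Perron--Frobenius route instead takes the strict spectral gap as a black box and gets the exact decomposition $W^r=\mathbf 1\pi^\top+N^r$. This buys strict positivity of $\pi$ and the sharp asymptotic rate $\rho(N)=\max_{\lambda\ne1}|\lambda|$. The price is a heavy citation plus the deflation fact about the spectrum of $N$, which you assert without justification. It does hold: $\ker\pi^\top$ is $W$-invariant and complements $\mathrm{span}(\mathbf 1)$, with $N=W$ on $\ker\pi^\top$ and $N\mathbf 1=0$.

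The paper's argument is self-contained and gives the explicit non-asymptotic bound $(1-na)^{\lfloor r/m\rfloor}$ on the oscillation. It also derives uniqueness from convergence alone rather than from simplicity of the eigenvalue $1$. That matters for your fallback sketch, which omits uniqueness: there simplicity is unavailable, so you need the paper's $q^\top=q^\top W^r$ step. You also need $\pi\ge0$ and $\pi^\top\mathbf 1=1$, which follow from $W^r\ge0$ and $W^r\mathbf 1=\mathbf 1$.
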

\begin{proof}
Put $a=\min_{i,j}(W^m)_{ij}>0$. For any real vector $h$, subtract the common
contribution $a\sum_j h_j$ from every coordinate of $W^mh$. The remaining
nonnegative row weights sum to $1-na$, so
$\max(W^mh)-\min(W^mh)\le(1-na)(\max h-\min h)$.
Repeated blocks contract the diameter to zero (immediately if $na=1$).
Stochastic averaging makes minima nondecreasing and maxima nonincreasing,
so each initial vector converges to a common scalar. Applying this fact to
each coordinate basis vector gives $W^r\to\cfone\pi^\top$, with
$\pi\ge0$ and $\pi^\top\cfone=1$. Taking limits in $W^{r+1}=W^rW$ gives
stationarity. Any stationary probability row vector $q^\top$ satisfies
$q^\top=q^\top W^r\to\pi^\top$, proving uniqueness.
\end{proof}
\noindent The evidence gate replaces this fixed weighting with the
evidence comparison of Eq.~\ref{eq:cf_ordered_gate}.

\subsection{Single-writer serialization of internal mutations}
\label{app:state_serializability}

\paragraph{S1--S3.}
\textbf{S1 (enforced exclusive permission)} requires that one selected
Executor is the only internal component authorized to issue a state-changing
tool request, including indirect requests through delegated tools; Advisors
cannot bypass this check. \textbf{S2 (serial completion)} requires that the
Executor issue its next mutation only after the previous mutation and all
its asynchronous effects have completed or been safely resolved, with no
subsequent overlapping mutation. \textbf{S3 (scope)} restricts the
single-writer assertion to components controlled by the team. A state
composition claim additionally excludes external interleaving mutations
or conditions explicitly on their occurrence in the environment model.

\begin{cftheorem}[Internal single-writer serialization]
\label{prop:single_writer}
Under S1--S3, every finite runtime trace has an ordered sequence of internal
mutations $w_1,\ldots,w_m$ such that
\begin{equation}
 w_1\prec w_2\prec\cdots\prec w_m,
 \label{eq:app_write_order}
\end{equation}
where $\prec$ means that the previous mutation's effects complete before
the next mutation starts. If no external mutation interleaves, then, after
fixing the realized tool randomness $\xi_1,\ldots,\xi_m$, the state is
\begin{equation}
 S_m=f_{w_m,\xi_m}\circ\cdots\circ f_{w_1,\xi_1}(S_0).
 \label{eq:app_serial_composition}
\end{equation}
In particular, internal writes are totally ordered, so the internal schedule
is serial and hence serializable in the sense of
\citet{papadimitriou1979serializability}.
\end{cftheorem}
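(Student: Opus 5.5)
The statement is Theorem~\ref{prop:single_writer}. The plan is to build the write order directly from S1 and S2, and then obtain the composition formula by induction on the number of internal mutations.

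\textbf{Step 1: identify the mutations.} Fix a finite runtime trace. Let $M$ be the set of internal state-changing tool requests it contains. By S1, every element of $M$ is issued by the single selected Executor. This includes requests that reach a mutating operation only indirectly through a delegated tool, because Advisors cannot bypass the permission check. S3 restricts attention to team-controlled components, so external mutations are not in $M$. Since the trace is finite (Lemma~\ref{lem:cf_runtime}), $M$ is finite. Write $m=|M|$.

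\textbf{Step 2: order the mutations.} The Executor is one sequential decision process, so the issue times of elements of $M$ are totally ordered. Label them by issue time as $w_1,\ldots,w_m$. S2 strengthens issue order to $\prec$: $w_{i+1}$ is issued only after $w_i$ and all its asynchronous effects have completed or been resolved, with no later overlapping mutation. Hence $w_i\prec w_{i+1}$, and by transitivity $w_1\prec\cdots\prec w_m$.

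\textbf{Step 3: the composition formula.} Assume no external mutation interleaves, and fix the realized randomness $\xi_1,\ldots,\xi_m$. Let $S_i$ be the state just after $w_i$ completes, and induct on $i$. The base case $S_0$ is the initial state. For the inductive step, note what can change state between the completion of $w_i$ and the start of $w_{i+1}$.
\begin{itemize}
\item Internal writers other than the Executor cannot, by S1.
\item Earlier Executor writes cannot, since by S2 they have completed with no lingering effects.
\item External writers cannot, by the no-interleaving hypothesis.
\end{itemize}
Advisor messages are tool-disabled and are delivered only at decision boundaries between mutations (Appendix~\ref{app:cf_runtime}), so they read state but do not change it. Therefore $w_{i+1}$ acts on exactly $S_i$. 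With $\xi_{i+1}$ fixed, its effect is the deterministic map $f_{w_{i+1},\xi_{i+1}}$, giving $S_{i+1}=f_{w_{i+1},\xi_{i+1}}(S_i)$. The induction yields Eq.~\ref{eq:app_serial_composition}.

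\textbf{Step 4: serializability.} A schedule in which no two internal mutations overlap is serial by definition. Every serial schedule is trivially equivalent to itself and hence serializable in the sense of \citet{papadimitriou1979serializability}.

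\textbf{Main obstacle.} The delicate part is Step 3: showing that nothing between $w_i$ and $w_{i+1}$ changes the state. This is where indirect mutations through delegated tools must be excluded. My plan is to read S1 as covering every causal path to a mutating operation, not only direct calls. With that reading, any state change must be an element of $M$, and the argument above goes through.
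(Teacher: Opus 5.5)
Your proposal is correct and follows the paper's proof essentially step for step: S1 makes every internal mutation an Executor mutation ordered by issue time, S2 upgrades that order to $\prec$, and an induction under the no-external-interleaving hypothesis, with the realized randomness fixed, gives the composition formula and hence a serial schedule. Your additional remarks on Advisor messages and indirect mutations only spell out what S1 already states. Citing Lemma~\ref{lem:cf_runtime} for finiteness is unnecessary, because the theorem fixes a finite trace by hypothesis.
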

\begin{proof}
S1 makes every internal mutation an Executor mutation. Order them by their
issue times. S2 ensures that the effects of $w_j$ complete before $w_{j+1}$
begins, establishing the strict order. With no external interleaving, the
input state of $w_1$ is $S_0$ and its output is $f_{w_1,\xi_1}(S_0)$.
Inductively the input of $w_{j+1}$ is the completed output of $w_j$.
Composition proves the formula. Adaptive selection of $w_{j+1}$ from prior
observations does not alter the argument for the realized trace.
\end{proof}

\subsection{Normalized reverse paths and graph-level CTB}
\label{app:terminal_distribution}

We now fix $c$ and distinguish an ideal normalized outcome model from
valid-only replay. Let $\mathcal X_c$ be the complete terminal outcome set.
It is either $\mathcal G_c$ for a process normalized on valid completion,
or $\mathcal G_c\cup\{\dagger\}$ when abort is modeled explicitly. If there
are multiple failure types, the same reasoning applies with a finite set
of failure outcomes. Use $z$ for an arbitrary outcome and $G$ for a valid
team. Every participating edge lies on a complete source-to-terminal path.
The notation $g_z$ includes an absorbing abort node when present.

\paragraph{T1--T4.}
\textbf{T1 (fixed terminating construction)} requires a fixed finite
sufficient-state construction DAG with one source, normalized outgoing
forward probabilities, and no unmodeled dead ends. Every forward path
terminates and every permitted reverse path reaches this source.
\textbf{T2 (terminal identity)} requires one absorbing terminal per
canonical outcome, with all its stop transitions and the same execution
budget and reward snapshot attached to that terminal.
\textbf{T3 (matched normalized supports)} requires strictly positive
forward probabilities on the participating transitions and a reverse
kernel $b_z(e\mid v')$ normalized over the actual incoming
parent--action pairs $e=(v,u,v')$ in the $z$-ancestral subgraph. These
supports define exactly the paths used in both sides of CTB; no hidden
parent or extra terminal alias is allowed.
\textbf{T4 (fixed weights and exact balance)} requires a fixed finite
$r_z>0$ for every outcome, fixed $\beta>0$, $Z>0$, and zero unscaled
CTB residual on \emph{every} complete path. For explicit abort, $r_\dagger$
is also fixed and positive. \textbf{T4$'$ (valid-path balance)} is T4
restricted to valid complete paths, with no constraint on abort paths;
Algorithm~\ref{alg:app_training} trains under T4$'$. T4 and T4$'$ are the ideal
consistency premises, and the approximate-balance results below cover finite
residuals.

For $\zeta=(v_0,e_1,v_1,\ldots,e_T,g_z)$ define
\begin{equation}
 P_F(\zeta)=\prod_{t=1}^T p_F(e_t\mid v_{t-1}),\quad
 Q_z(\zeta)=\prod_{t=1}^T b_z(e_t\mid v_t),\quad
 w_z=r_z^\beta,\quad W=\sum_{z\in\mathcal X_c}w_z.
 \label{eq:cf_path_products}
\end{equation}
$Z=Z_\psi(c)$ is a learned positive scalar, whereas $W$ is the prescribed
sum of terminal weights. They coincide only under the hypotheses proved
below. The unscaled residual and its length-scaled loss are
\begin{equation}
 \rho(\zeta)=\log Z+\log P_F(\zeta)-\log w_z-\log Q_z(\zeta),
 \qquad L_{\rm CTB}(\zeta)=\rho(\zeta)^2/T(\zeta)^2.
 \label{eq:cf_ctb_residual}
\end{equation}
This $\rho$ is the main-text $\rho_{\rm TB}(\zeta)$, and $b_z$ is the main-text
learned kernel $b_\phi$ at terminal $g_z$. Length normalization reweights
the loss and leaves the probability kernels and the zero-residual set unchanged.

\begin{cflemma}[Reverse-path normalization]
\label{lem:cf_reverse_mass}
Under T1--T3, $\sum_{\zeta\in\Omega_z}Q_z(\zeta)=1$ for every $z$.
\end{cflemma}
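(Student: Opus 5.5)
The plan is to prove $\sum_{\zeta\in\Omega_z}Q_z(\zeta)=1$ by backward induction on the finite DAG of Lemma~\ref{lem:cf_dag}, unrolling the reverse kernel from the terminal $g_z$ toward the source $v_0$. For each node $v$ in the $z$-ancestral subgraph, let $\Omega_z(v)$ be the set of reverse partial paths from $g_z$ down to $v$, that is, complete source-to-terminal paths read backward and truncated at $v$. Let $M(v)$ be the total reverse mass reaching $v$, meaning the sum over $\Omega_z(v)$ of the product of the $b_z$ factors along the path. Rather than work with $M$ node by node, I will show that the reverse kernel induces a normalized Markov chain on the ancestral subgraph that starts at $g_z$ and is absorbed at $v_0$. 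Then $\sum_{\zeta\in\Omega_z}Q_z(\zeta)$ is exactly the probability that this chain reaches $v_0$, and that probability is one.

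The key steps run in this order.

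First, by T3, at every node $v'\ne v_0$ in the $z$-ancestral subgraph, $b_z(\cdot\mid v')$ is a probability distribution on the actual incoming parent–action pairs $e=(v,u,v')$ of that subgraph. Each such parent $v$ again lies in the ancestral subgraph, because $v$ reaches $g_z$ through $v'$.

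Second, by T1 and Lemma~\ref{lem:cf_dag}, the subgraph is finite and acyclic and $v_0$ is its only source. Every permitted reverse step therefore strictly decreases the rank (the decision index, with the terminal assigned rank $T_{\max}+1$). As a result, every reverse walk from $g_z$ reaches a node with no incoming pairs after at most $T_{\max}+1$ steps, and that node must be $v_0$. The statement "no unmodeled dead ends" rules out any other parentless node.

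Third, I carry out the induction on rank. Let $m_r$ be the total reverse mass of walks of length $r$ from $g_z$, counted together with the walks that were already absorbed at $v_0$ at an earlier step. Normalization at each non-source node gives $m_{r+1}=m_r$. After $T_{\max}+1$ steps all mass sits at $v_0$.

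Fourth, T3 also says these reverse walks are exactly the reversals of the paths in $\Omega_z$, since no hidden parent or alias is allowed. Summing the absorbed mass therefore gives $\sum_{\zeta\in\Omega_z}Q_z(\zeta)=1$.

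The main obstacle is the support-matching step: establishing that reverse walks correspond bijectively to elements of $\Omega_z$. Parallel transitions have to be counted by action, and the terminal multi-parent structure must agree with the construction orders $\mathcal O(G)$. For the last point, I will use Lemma~\ref{lem:cf_dag} to note that interior history states have a unique parent–action pair, so $b_z=1$ there. All the branching then happens at $g_z$, where normalization of the unbuilding softmax over $\mathcal O(G)$ (Appendix~\ref{app:likelihoods}) gives the result directly. The general inductive argument above covers the quotient-state case.
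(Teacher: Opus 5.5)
Your proof is correct and is essentially the paper's argument run in the opposite direction. The paper defines $b_z^\leftarrow(v)$, the total reverse-path mass from $v$ down to the source, and shows it equals one everywhere by induction in topological order from $v_0$, using the recursion over incoming parent--action pairs and T3's normalization, and then evaluates it at $g_z$; you instead push the reverse chain's mass down from $g_z$ and conserve it step by step until it is absorbed at $v_0$, relying on the same finiteness, acyclicity, unique-source, and support-matching facts (your closing specialization to the unbuilding softmax at $g_G$ is a valid special case but unnecessary, since T3 alone supplies the normalization and the match between reverse walks and $\Omega_z$).
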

\begin{proof}
Let $b_z^\leftarrow(v)$ be the total mass of reverse paths from $v$ to the
source, within the $z$-ancestral subgraph. The empty source path has mass
$b_z^\leftarrow(v_0)=1$. Partitioning by the first reverse transition gives
\begin{equation}
 b_z^\leftarrow(v')=
 \sum_{e=(v,u,v')\in\cfIn_z(v')}b_z(e\mid v')b_z^\leftarrow(v).
 \label{eq:app_backward_recursion}
\end{equation}
In topological order every parent has already been assigned mass one.
T3 then makes the sum equal to one. Finiteness and the unique reachable
source exclude leftover reverse mass. At $g_z$ this gives
\begin{equation}
 \sum_{\zeta\in\Omega_z}Q_z(\zeta)=1.
 \label{eq:app_backward_path_mass}
\end{equation}
Distinct actions sharing a parent and child enter this recursion as separate
terms, so the backward normalization runs over parent--action pairs.
\end{proof}

\begin{cftheorem}[TB consistency for canonical team outcomes]
\label{thm:cf_terminal}
Under T1--T4, the forward outcome law is $P_F(z)=w_z/W$, and $Z=W$.
In particular, in the normalized valid-completion model,
\begin{equation}
 \mu_\theta^{\mathsf V}(G\mid c)
 =\frac{r_G^\beta}{\sum_{G'\in\mathcal G_c}r_{G'}^\beta}.
 \label{eq:app_terminal_target}
\end{equation}
For explicit abort the same valid-graph expression holds after conditioning
on $\mathsf V$, whereas the unconditional denominator also contains
$w_\dagger$.
\end{cftheorem}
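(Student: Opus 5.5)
The argument exponentiates the zero-residual condition of T4 path by path, sums over the construction orders of each outcome, and then normalizes. It uses Lemma~\ref{lem:cf_reverse_mass} for the reverse-path mass and T1 for the total forward mass.

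First, fix an outcome $z\in\mathcal X_c$ and a complete path $\zeta\in\Omega_z$. T4 gives $\rho(\zeta)=0$ in Eq.~\ref{eq:cf_ctb_residual}. Exponentiating, this is the product identity
\[
 Z\,P_F(\zeta)=w_z\,Q_z(\zeta).
\]
This holds on every complete path, and T3 guarantees that both sides use the same finite support $\Omega_z$.

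Second, sum over $\zeta\in\Omega_z$. By T2 the paths in $\Omega_z$ are disjoint events, and their union is exactly the event of reaching the absorbing terminal $g_z$. Hence the left side sums to $Z\,P_F(z)$. By Lemma~\ref{lem:cf_reverse_mass} the right side sums to $w_z$. Therefore $Z\,P_F(z)=w_z$ for every outcome.

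Third, sum over $z\in\mathcal X_c$. T1 gives a finite terminating DAG with normalized outgoing probabilities and no unmodeled dead ends. A topological induction, as in the reverse recursion of Lemma~\ref{lem:cf_reverse_mass} but run forward, then shows that the forward path masses sum to one: $\sum_z P_F(z)=1$. It follows that $Z=W$ and $P_F(z)=w_z/W$.

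For the normalized valid-completion model, $\mathcal X_c=\mathcal G_c$. In that case $P_F(G)=\mu_\theta^{\mathsf V}(G\mid c)$ by Lemma~\ref{lem:cf_two_spaces} with $p_{\mathsf V}=1$, which gives Eq.~\ref{eq:app_terminal_target}.

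For explicit abort, $\mathcal X_c=\mathcal G_c\cup\{\dagger\}$ and $W=\sum_G w_G+w_\dagger$. Conditioning on $\mathsf V$ via Lemma~\ref{lem:cf_two_spaces} divides $P_F(G)=w_G/W$ by $P_F(\mathsf V)=\sum_{G'}w_{G'}/W$. The factor $W$ cancels, so the valid-conditional law again equals $r_G^\beta/\sum_{G'}r_{G'}^\beta$, while the unconditional denominator retains $w_\dagger$.

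The main obstacle is the forward-normalization step, $\sum_z P_F(z)=1$. This requires that no forward mass leaks to states outside the modeled DAG. Such leakage could come from timeouts that are not wired to $\dagger$, or from empty legal supports. I would handle it by invoking the T1 clause ``no unmodeled dead ends'' together with Lemma~\ref{lem:cf_dag}, which gives finiteness and acyclicity. The induction is on decreasing rank: the mass flowing out of each state equals the mass flowing into it, so the total mass at the terminals equals the unit mass at the source.

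A secondary point is that T3's strict positivity is what lets the exponentiated identity be summed over exactly the same path set on both sides. Without it, a path with forward probability zero but positive backward mass would break the equality of the two sums.
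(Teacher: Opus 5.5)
Your proposal is correct and follows the paper's proof essentially step for step. You exponentiate the zero residual, sum over $\Omega_z$ using T2 and Lemma~\ref{lem:cf_reverse_mass} to get $ZP_F(z)=w_z$, sum over outcomes to get $Z=W$, and divide by $P_F(\mathsf V)$ in the explicit-abort case; your forward-mass induction under T1 only spells out what the paper compresses into ``all outcomes, including abort when present, partition the terminating process.''
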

\begin{proof}
\emph{Step 1: path identity.} Exponentiating $\rho(\zeta)=0$ gives
\begin{equation}
 ZP_F(\zeta)=w_zQ_z(\zeta).
 \label{eq:appendix_trajectory_balance}
\end{equation}
\emph{Step 2: sum the paths of one team.} T2 makes
$\Omega_z$ precisely the disjoint complete paths of outcome $z$. The
single fixed weight $w_z$ can be factored out. Lemma~\ref{lem:cf_reverse_mass}
therefore yields
\begin{equation}
 ZP_F(z)=\sum_{\zeta\in\Omega_z}w_zQ_z(\zeta)=w_z.
 \label{eq:app_terminal_mass}
\end{equation}
\emph{Step 3: identify the normalizer.} All outcomes, including abort when
present, partition the terminating process. Thus
\begin{equation}
 Z=Z\sum_{z\in\mathcal X_c}P_F(z)=\sum_{z\in\mathcal X_c}w_z=W.
 \label{eq:app_partition_identity}
\end{equation}
Substituting proves the outcome law. In the explicit-abort model,
$P_F(\mathsf V)=\sum_Gw_G/W$; dividing $P_F(G)$ by this quantity proves the
conditional valid-graph formula.
\end{proof}
\noindent At zero residual, $P_F(\zeta\mid G)=Q_G(\zeta)$, so the backward
kernel also sets how probability spreads over the construction orders of a
team.

\paragraph{Feasibility of exact balance.}
When one normalized reverse kernel $b(e\mid v')$ is shared across endpoints,
an unrestricted finite tabular transition model has a solution. Set $F(g_z)=w_z$ and, in reverse
topological order, set
$F(v)=\sum_{e=(v,u,v')}F(v')b(e\mid v')$.
Every participating vertex has a positive descendant weight, so $F(v)>0$.
Define $p_F(e\mid v)=F(v')b(e\mid v')/F(v)$.
These forward probabilities normalize by construction; incoming flow to
$v'$ sums to $F(v')$ by reverse normalization. Thus source flow equals
total terminal weight, and multiplication along a path telescopes to
Eq.~\ref{eq:appendix_trajectory_balance}. This establishes feasibility for a
tabular transition model.

\subsection{Terminal aliases, abort, and the meaning of valid-only replay}
\label{app:theory_boundaries}

\begin{cflemma}[Alias weighting]
\label{lem:cf_alias}
Suppose outcome-level TB holds on terminal nodes, but graph $G$ has a set
$\mathcal Z_G$ of terminal aliases with weights $w_z$. Its graph marginal
is proportional to $\sum_{z\in\mathcal Z_G}w_z$. In particular, assigning
the full weight $r_G^\beta$ to each of $m(G)$ aliases gives
\begin{equation}
 P_F(G)=\frac{m(G)r_G^\beta}{\sum_{G'}m(G')r_{G'}^\beta}
 \label{eq:app_alias_bias}
\end{equation}
in a valid-only normalized model. The desired graph law is recovered by
merging terminal nodes with their full incoming supports, or by assigning
alias weights $w_z=\omega(z\mid G)r_G^\beta$, where
$\omega(z\mid G)\ge0$ and $\sum_{z\in\mathcal Z_G}\omega(z\mid G)=1$.
\end{cflemma}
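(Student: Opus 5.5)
The plan is to reduce the statement to Theorem~\ref{thm:cf_terminal}. The key observation is that outcome-level TB, as used in Step~2 of that proof, applies to terminal nodes individually. Treat each alias $z\in\mathcal Z_G$ as its own absorbing outcome with fixed weight $w_z>0$. T1--T3 then hold on the aliased DAG, with each alias carrying its own reverse kernel normalized over its incoming parent--action pairs.

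\emph{Step 1: alias masses.} Lemma~\ref{lem:cf_reverse_mass} gives $\sum_{\zeta\in\Omega_z}Q_z(\zeta)=1$ for every alias $z$. Summing the path identity $ZP_F(\zeta)=w_zQ_z(\zeta)$ over $\Omega_z$ then yields $ZP_F(z)=w_z$, exactly as in Eq.~\ref{eq:app_terminal_mass}.

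\emph{Step 2: graph marginal.} The event of returning $G$ is the disjoint union of the events of reaching its aliases, since distinct absorbing nodes give disjoint path sets. Hence
\[
P_F(G)=\sum_{z\in\mathcal Z_G}P_F(z)=Z^{-1}\sum_{z\in\mathcal Z_G}w_z .
\]
In the valid-only normalized model the outcomes partition the process, so $Z=\sum_{G'}\sum_{z\in\mathcal Z_{G'}}w_z$ by the argument of Eq.~\ref{eq:app_partition_identity}. This proves proportionality. Substituting $w_z=r_G^\beta$ for each of the $m(G)$ aliases gives Eq.~\ref{eq:app_alias_bias}.

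\emph{Step 3: the two repairs.} If the weights are $w_z=\omega(z\mid G)r_G^\beta$ with $\sum_z\omega(z\mid G)=1$, the sum in Step~2 collapses to $r_G^\beta$, and Eq.~\ref{eq:app_terminal_target} follows. Merging the aliases of $G$ into one node $g_G$, with the union of their incoming parent--action supports, restores T2. Theorem~\ref{thm:cf_terminal} then applies directly, provided the merged reverse kernel is renormalized over the full incoming support so that T3 holds.

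The main obstacle is the zero-weight case $\omega(z\mid G)=0$. T4 requires $r_z>0$, so such an alias is excluded from the balance constraint. I would handle this by restricting the DAG to the aliases with positive weight, or by noting that balance then forces $P_F(z)=0$, which contradicts T3's positivity unless those alias paths are removed. Either way the marginal formula is unchanged.
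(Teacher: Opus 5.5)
Your proposal is correct and matches the paper's proof: you apply the outcome-level conclusion of Theorem~\ref{thm:cf_terminal} to each alias, sum $P_F(z)=w_z/W$ over $\mathcal Z_G$ to get the multiplicity factor, remove it with weights summing to $r_G^\beta$ or by merging terminals, and handle zero-weight aliases by dropping them from the positive support. The only difference is that you re-derive Steps~1--3 of that theorem inline instead of citing its conclusion.
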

\begin{proof}
Sum the terminal outcome probabilities of Theorem~\ref{thm:cf_terminal}
over $\mathcal Z_G$. Full repeated weights produce the multiplicity
factor; weights summing to $r_G^\beta$ remove it. Zero-weight aliases must
be removed from the positive-log support, or all retained $\omega$ must be
strictly positive.
\end{proof}

\begin{cftheorem}[Raw valid-only balance]
\label{thm:cf_raw_valid}
Let $P_F$ be a normalized construction process that can abort, with
$p_{\mathsf V}>0$, and assume T1--T3 and T4$'$: the reverse distributions normalize on every
valid $\Omega_G$, the fixed weights $w_G>0$ are consistent, and
$ZP_F(\zeta)=w_GQ_G(\zeta)$ holds on every valid complete path, with no
balance constraint on abort paths. Then
\begin{equation}
 \mu_\theta^{\mathsf V}(G)=\frac{w_G}{W_{\mathsf V}},\qquad
 Z=\frac{W_{\mathsf V}}{p_{\mathsf V}},\qquad
 W_{\mathsf V}:=\sum_{G\in\mathcal G_c}w_G.
 \label{eq:cf_raw_valid}
\end{equation}
\end{cftheorem}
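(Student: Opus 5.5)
The argument runs parallel to Theorem~\ref{thm:cf_terminal}, with valid-path balance in place of full balance, and never sums over abort paths.

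\emph{Step 1.} Fix a valid team $G\in\mathcal G_c$. Exponentiating the zero valid-path residual gives $ZP_F(\zeta)=w_GQ_G(\zeta)$ for every $\zeta\in\Omega_G$. By T2, $\Omega_G$ is exactly the disjoint set of complete paths ending at $g_G$, so $P_F(z=g_G)=\sum_{\zeta\in\Omega_G}P_F(\zeta)$. Summing the balance identity over $\Omega_G$ and factoring out the single weight $w_G$, Lemma~\ref{lem:cf_reverse_mass} gives
\[
ZP_F(z=g_G)=w_G\sum_{\zeta\in\Omega_G}Q_G(\zeta)=w_G.
\]
Lemma~\ref{lem:cf_reverse_mass} applies here because T1--T3 give normalized reverse kernels on each valid ancestral subgraph.

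\emph{Step 2.} Sum over $G\in\mathcal G_c$. The events $\{z=g_G\}$ partition $\mathsf V$, so $Zp_{\mathsf V}=W_{\mathsf V}$. This gives $Z=W_{\mathsf V}/p_{\mathsf V}$, using $p_{\mathsf V}>0$. The abort mass $1-p_{\mathsf V}$ never enters the sum, which is exactly why T4$'$ suffices.

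\emph{Step 3.} Lemma~\ref{lem:cf_two_spaces} gives $\mu_\theta^{\mathsf V}(G)=P_F(z=g_G)/p_{\mathsf V}$. Substituting $P_F(z=g_G)=w_G/Z$ and the value of $Z$ from Step~2 yields $\mu_\theta^{\mathsf V}(G)=w_G/W_{\mathsf V}$.

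The main obstacle is Step~1, specifically checking that Lemma~\ref{lem:cf_reverse_mass} still applies when abort paths are unconstrained. Abort paths may share interior states with valid paths. However, the reverse kernel $b_G$ is normalized only over incoming parent--action pairs inside the $G$-ancestral subgraph, and its recursion never references abort terminals. So the reverse-mass argument goes through unchanged on each valid $\Omega_G$.

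One further check concerns strict positivity. T3 requires positive forward probabilities only on participating transitions, so an abort branch leaving a shared interior state simply lowers the forward mass along valid paths. That lowered mass is then absorbed by $Z$, which is why $Z$ exceeds $W_{\mathsf V}$ by the factor $1/p_{\mathsf V}$.
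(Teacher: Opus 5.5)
Your proposal is correct and follows the paper's proof: you sum the valid-path identities over $\Omega_G$, using reverse normalization, to get $ZP_F(G)=w_G$, then sum over $G\in\mathcal G_c$ to get $Zp_{\mathsf V}=W_{\mathsf V}$, and divide. Your extra check on abort paths is sound but not needed, since the hypothesis already states that the reverse distributions normalize on every valid $\Omega_G$.
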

\begin{proof}
Summing the valid path identities for a fixed graph gives $ZP_F(G)=w_G$.
Summing over valid graphs gives $Zp_{\mathsf V}=W_{\mathsf V}$.
Dividing the first equality by the second proves the conditional law.
\end{proof}
\noindent\emph{Algorithm correspondence.} This is the precise ideal
interpretation when Algorithm~\ref{alg:app_training} retains valid rollouts
and evaluates their \emph{raw} construction likelihoods. Rejection-based
replay keeps those likelihoods, and the conditional target follows with the
scale $Z=W_{\mathsf V}/p_{\mathsf V}$. Explicit abort rewards or a normalized
valid-conditioned forward law give the two alternative conventions, each with
its own normalizer. Because $Z$ absorbs $p_{\mathsf V}$, valid-path balance
leaves the abort rate to the legal masks and budgets; a positive abort reward
under T4 (Theorem~\ref{thm:cf_terminal}) prices aborts explicitly.

\paragraph{Graph-level temperature interpretation.}
For a finite nonempty valid set and fixed $r_G>0$, define
$q_\beta(G)=r_G^\beta/W_{\mathsf V}$ and
$H(p)=-\sum_Gp(G)\log p(G)$, with $0\log0=0$. Direct expansion gives
\begin{equation}
 \cfKL(p\|q_\beta)=\log W_{\mathsf V}
 -\beta\cfE_p[\log r_G]-H(p).
 \label{eq:cf_graph_variational}
\end{equation}
Hence $q_\beta$ is the unique optimum of
$\beta\cfE_p\log r_G+H(p)$ over graph distributions, since KL is
nonnegative and zero exactly at $p=q_\beta$. The entropy is graph entropy,
so duplicated construction paths earn no extra weight. As $\beta\downarrow0$
the law becomes uniform on graphs, and as $\beta\to\infty$ it becomes uniform
on the tied reward maxima; $\beta$ thus sets how strongly CTB favors the best
teams while keeping the others. It also sets the mass left on failing teams:
a team with $r_{\rm ans}=0$ has $r_G=\varepsilon_{\min}$, so the $|F|$ such teams
carry total weight $|F|\varepsilon_{\min}^\beta$, and
$\beta\log(r_s/\varepsilon_{\min})\ge\log|\mathcal G_c|$ keeps this weight below
that of one team with reward $r_s$.

\subsection{Approximate balance and perturbation bounds}
\label{app:cf_approximation}

The next statements use normalized true path probabilities and a fixed
finite support, as in T1--T3. They apply either to all outcomes or,
using Theorem~\ref{thm:cf_raw_valid}, to the graph law conditional on valid
completion. Write $q(z)=w_z/W$ in the normalized model.

\begin{cftheorem}[Exact residual tilt and uniform distribution error]
\label{thm:cf_approx}
For a possibly nonzero residual from Eq.~\ref{eq:cf_ctb_residual}, let
$a_z=\cfE_{\zeta\sim Q_z}\exp(\rho(\zeta))$. Then
\begin{equation}
 P_F(z)=\frac{q(z)a_z}{\cfE_q a_z}.
 \label{eq:cf_residual_tilt}
\end{equation}
If $|\rho(\zeta)|\le\eta$ for every complete path, then
\begin{equation}
 e^{-2\eta}\le\frac{P_F(z)}{q(z)}\le e^{2\eta},\qquad
 \cfTV(P_F,q)\le\tanh(\eta/2),\qquad
 \max\{\cfKL(P_F\|q),\cfKL(q\|P_F)\}\le2\eta.
 \label{eq:app_approximate_balance}
\end{equation}
Here $\cfTV(p,q)=\tfrac12\sum_z|p(z)-q(z)|$. The same bounds hold for
valid-conditional graph laws when the uniform residual bound is only over
valid paths and raw valid-only likelihoods are used.
\end{cftheorem}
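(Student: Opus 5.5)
The plan is to turn the residual definition into an exact path-level identity, sum it over the construction orders of each outcome, and then normalize. By Eq.~\ref{eq:cf_ctb_residual}, every complete path $\zeta\in\Omega_z$ satisfies
\[
 P_F(\zeta)=\frac{w_z}{Z}\,Q_z(\zeta)\,e^{\rho(\zeta)} .
\]
Summing over $\Omega_z$ and using T2 (the paths in $\Omega_z$ are disjoint and exhaust outcome $z$), we get
\[
 P_F(z)=\frac{w_z}{Z}\sum_{\zeta\in\Omega_z}Q_z(\zeta)e^{\rho(\zeta)} .
\]
By Lemma~\ref{lem:cf_reverse_mass}, $Q_z$ is a probability distribution on $\Omega_z$, so the sum equals $a_z=\cfE_{\zeta\sim Q_z}e^{\rho(\zeta)}$. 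Hence $ZP_F(z)=w_za_z$.

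Summing over all outcomes, which partition the terminating process as in Step~3 of Theorem~\ref{thm:cf_terminal}, gives
\[
 Z=\sum_z w_za_z=W\,\cfE_q a_z .
\]
Substituting back yields the tilt formula Eq.~\ref{eq:cf_residual_tilt}. This is the same telescoping as in Theorem~\ref{thm:cf_terminal}, with $e^{\rho}$ carried along instead of set to one.

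For the bounds, $|\rho|\le\eta$ gives $a_z\in[e^{-\eta},e^{\eta}]$, and therefore $\cfE_qa\in[e^{-\eta},e^{\eta}]$. The ratio $a_z/\cfE_qa$ then lies in $[e^{-2\eta},e^{2\eta}]$, which is the first claim. The KL claims follow directly:
\[
 \cfKL(P_F\|q)=\cfE_{P_F}\log\frac{a_z}{\cfE_qa}\le 2\eta,\qquad
 \cfKL(q\|P_F)=\cfE_q\log\frac{\cfE_qa}{a_z}\le 2\eta,
\]
since in each case the integrand is pointwise at most $2\eta$.

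The main obstacle is the sharp total-variation constant $\tanh(\eta/2)$. The crude ratio bound only gives something like $e^{2\eta}-1$, so a finer argument is needed. I would write
\[
 \cfTV(P_F,q)=\sup_A\big(P_F(A)-q(A)\big).
\]
For a fixed set $A$ and a fixed ratio $R=\max a/\min a\le e^{2\eta}$, the quantity $P_F(A)=\sum_A q a/\sum q a$ increases if $a$ is raised on $A$ and lowered off $A$. So the worst case is a two-valued tilt, with $a=M$ on $A$ and $a=m$ off $A$. Writing $s=q(A)$, the gap becomes
\[
 f(s)=\frac{sR}{sR+1-s}-s .
\]
A one-variable maximization places the optimum at $s=1/(1+\sqrt R)$, where $f=(\sqrt R-1)/(\sqrt R+1)$. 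With $\sqrt R\le e^{\eta}$, this equals $(e^{\eta}-1)/(e^{\eta}+1)=\tanh(\eta/2)$.

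For the valid-conditional statement, I would repeat the same computation restricted to valid paths with raw likelihoods, as in Theorem~\ref{thm:cf_raw_valid}. This gives $ZP_F(G)=w_Ga_G$ for each valid $G$ and $Zp_{\mathsf V}=W_{\mathsf V}\cfE_qa$ after summing. Dividing the two yields the tilt form for $\mu_\theta^{\mathsf V}$, and all the bounds then follow verbatim.
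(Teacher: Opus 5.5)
Your proposal is correct, and for the tilt identity, the ratio bound, both KL bounds and the valid-only extension it follows the paper's proof essentially step for step. You exponentiate the residual, sum over $\Omega_z$ using Lemma~\ref{lem:cf_reverse_mass}, sum over outcomes to get $Z=W\,\mathbb{E}_q a_z$, and then use $a_z,\mathbb{E}_q a_z\in[e^{-\eta},e^{\eta}]$.

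The genuine difference is the total-variation step. The paper stays with the $\ell_1$ form. Writing $\bar a=\mathbb{E}_q a_z$ (called $A$ there), it has $\operatorname{TV}(P_F,q)=\mathbb{E}_q|a_z-\bar a|/(2\bar a)$. It then bounds the convex function $a\mapsto|a-\bar a|$ on $[m,M]=[e^{-\eta},e^{\eta}]$ by its chord, which gives $\mathbb{E}_q|a_z-\bar a|\le 2(M-\bar a)(\bar a-m)/(M-m)$. Finally it removes the unknown mean with $\bar a+Mm/\bar a\ge2\sqrt{Mm}$.

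You instead use the event form $\sup_S\bigl(P_F(S)-q(S)\bigr)$. You push the tilt to its extremes by monotonicity of $P_F(S)$ in the $a_z$, and then maximize a one-variable function of $s=q(S)$.

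Both routes arrive at $(\sqrt R-1)/(\sqrt R+1)$ with $R=M/m$, so they are equally strong; each depends only on the spread ratio of the $a_z$. The paper's route is a short algebraic chain with no optimization over events. Yours exhibits the extremal configuration explicitly: a two-valued tilt with $q(S)=1/(1+\sqrt R)$. That makes visible that the constant $\tanh(\eta/2)$ cannot be improved in general.

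To make your maximization airtight, state two facts. First, $s\mapsto sR/(1+s(R-1))$ is concave on $[0,1]$ for $R\ge1$, so the stationary point is the global maximum. Second, $x\mapsto(x-1)/(x+1)$ is increasing, which justifies replacing $\sqrt R$ by $e^{\eta}$.
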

\begin{proof}
Exponentiation gives $ZP_F(\zeta)=w_zQ_z(\zeta)e^{\rho(\zeta)}$.
Sum within $\Omega_z$ to obtain $ZP_F(z)=w_za_z$, then sum over outcomes to
obtain $Z=W\cfE_q a_z$. This proves the tilt. Under the uniform bound,
$a_z\in[m,M]=[e^{-\eta},e^\eta]$, as does $A=\cfE_q a_z$;
thus $P_F(z)/q(z)=a_z/A$ lies in $[e^{-2\eta},e^{2\eta}]$.
Taking the appropriate expectation of each log ratio bounds both KLs.

For total variation, convexity of $|a-A|$ bounds it on $[m,M]$ by the
chord joining its endpoint values. Taking expectations gives
$\cfE_q|a_z-A|\le2(M-A)(A-m)/(M-m)$. Therefore, for $M>m$,
\[
 \cfTV(P_F,q)\le\frac{(M-A)(A-m)}{A(M-m)}
 =\frac{M+m-A-Mm/A}{M-m}
 \le\frac{\sqrt M-\sqrt m}{\sqrt M+\sqrt m}
 =\tanh(\eta/2),
\]
using $A+Mm/A\ge2\sqrt{Mm}$. For $M=m$ the distributions coincide.
In the raw valid-only case, summation gives
$Zp_{\mathsf V}=W_{\mathsf V}\cfE_{q^{\mathsf V}}a_G$; division yields the
same tilt for $\mu_\theta^{\mathsf V}$, so the proof is unchanged.
\end{proof}
\noindent For a length-scaled bound $|\rho/T|\le\epsilon$, take
$\eta=T_{\max}\epsilon$.

\begin{cfcorollary}[Fixed reward perturbation and likelihood error]
\label{cor:cf_perturb}
Suppose a fit uses fixed rewards $\widehat r_z>0$ with
$\sup_z|\log\widehat r_z-\log r_z|\le d_R$ and its true residual relative
to $\widehat r$ is bounded by $\eta$. Relative to the target from $r$, the
ratio, TV, and KL bounds above hold with $\eta+\beta d_R$ in place of
$\eta$. If the recorded residual differs from this true residual by at
most $d_L$ on every path, replace this quantity by
$\eta+d_L+\beta d_R$.
\end{cfcorollary}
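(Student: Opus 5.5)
The plan is to reduce the corollary to Theorem~\ref{thm:cf_approx} by rewriting the residual so that it is measured against the true target weights $w_z=r_z^\beta$ rather than the fitted weights $\widehat r_z^\beta$. Theorem~\ref{thm:cf_approx} uses only three ingredients: the normalized true forward path law $P_F$, the reverse kernels $Q_z$ normalized on each $\Omega_z$ (Lemma~\ref{lem:cf_reverse_mass}), and a uniform bound on the residual taken relative to whatever weights define $q$. Its ratio, TV, and KL conclusions therefore transfer once we have a uniform residual bound relative to $r$.

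\emph{Step 1: change of reward.} Write $\widehat\rho(\zeta)=\log Z+\log P_F(\zeta)-\beta\log\widehat r_z-\log Q_z(\zeta)$ for the true residual relative to $\widehat r$, and $\rho(\zeta)$ for the residual relative to $r$ in Eq.~\ref{eq:cf_ctb_residual}. These differ by a term depending only on the terminal:
\[
 \rho(\zeta)=\widehat\rho(\zeta)+\beta\bigl(\log\widehat r_z-\log r_z\bigr).
\]
The hypotheses give $|\widehat\rho|\le\eta$ and $|\log\widehat r_z-\log r_z|\le d_R$. Hence $|\rho(\zeta)|\le\eta+\beta d_R$ on every complete path.

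\emph{Step 2: apply the theorem.} Invoke Theorem~\ref{thm:cf_approx} with $q(z)=r_z^\beta/W$ and $\eta+\beta d_R$ in place of $\eta$. This yields the ratio bounds $e^{\pm2(\eta+\beta d_R)}$, the bound $\cfTV\le\tanh((\eta+\beta d_R)/2)$, and the KL bounds. For the valid-conditional law we use the raw valid-only clause of that theorem instead, which needs the residual bound only on valid paths.

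\emph{Step 3: recorded likelihoods.} Now suppose the available bound $\eta$ is on the recorded residual $\rho^{\rm rec}$ relative to $\widehat r$. Recorded likelihoods may come from a surrogate scorer, and the theorem must be applied to the true $P_F$. The assumption $|\rho^{\rm rec}-\widehat\rho|\le d_L$ gives $|\widehat\rho|\le\eta+d_L$ by the triangle inequality, and Step 1 then gives $|\rho|\le\eta+d_L+\beta d_R$. Repeating Step 2 finishes the proof.

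The main point needing care is in Step 2: the theorem must be applied to the true sampling law with normalized $Q_z$, and not to the recorded scores or to $\widehat r$. This is because the tilt identity $ZP_F(z)=w_za_z$ relies on $P_F$ being the normalized law actually sampled and on $\sum_{\Omega_z}Q_z=1$. Since neither object is changed by the reward or recording perturbations, the reduction is legitimate.
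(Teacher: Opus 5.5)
Your proposal is correct and follows the paper's proof. The paper writes the reward change as multiplying the tilt factor $a_z$ by $(\widehat r_z/r_z)^\beta$, which is the path-averaged, exponentiated form of your pathwise shift $\rho=\widehat\rho+\beta(\log\widehat r_z-\log r_z)$. It then reruns the proof of Theorem~\ref{thm:cf_approx} and handles $d_L$ with the same triangle inequality you use.
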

\begin{proof}
Relative to $w_z=r_z^\beta$, the tilt factor is
$(\widehat r_z/r_z)^\beta a_z$, in
$[e^{-(\eta+\beta d_R)},e^{\eta+\beta d_R}]$.
Apply the previous proof. The residual triangle inequality proves the
likelihood-error extension. For example, uniform per-transition log errors
$d_F,d_B$ and log-normalizer error $d_Z$ give
$d_L\le d_Z+T_{\max}(d_F+d_B)$.
\end{proof}
\noindent The errors compare recorded scores with the normalized sampling
and reverse probabilities on the same support.

\begin{cflemma}[What a finite-support population loss certifies]
\label{lem:cf_population}
Let $\nu$ be a fixed distribution on the finite complete-path set, with
$\nu(\zeta)\ge\nu_{\min}>0$ for every participating path. If
$\cfE_\nu[(\rho/T)^2]\le\epsilon^2$, then
$\sup_\zeta|\rho(\zeta)|\le T_{\max}\epsilon/\sqrt{\nu_{\min}}$.
\end{cflemma}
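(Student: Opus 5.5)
The plan is a pointwise Markov-type argument: a population loss that is small in $\nu$-mean cannot hide a large residual on any single path, because every participating path carries at least mass $\nu_{\min}$. First fix an arbitrary participating complete path $\zeta_0$. Since every term of the expectation is nonnegative, drop all other paths to get
\[
 \nu_{\min}\,\bigl(\rho(\zeta_0)/T(\zeta_0)\bigr)^2
 \le \nu(\zeta_0)\,\bigl(\rho(\zeta_0)/T(\zeta_0)\bigr)^2
 \le \cfE_\nu\bigl[(\rho/T)^2\bigr]\le\epsilon^2 .
\]
This uses only nonnegativity of squares and the lower bound on $\nu$ over the finite support, and it gives $|\rho(\zeta_0)|/T(\zeta_0)\le\epsilon/\sqrt{\nu_{\min}}$.

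Next, undo the length normalization. By the construction setup, every complete path has $1\le T(\zeta)\le T_{\max}$, and Lemma~\ref{lem:cf_dag} ensures these lengths are finite and bounded. Multiplying through by $T(\zeta_0)$ then gives $|\rho(\zeta_0)|\le T_{\max}\epsilon/\sqrt{\nu_{\min}}$. Because $\zeta_0$ was arbitrary and the path set is finite, the supremum is attained and satisfies the same bound.

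There is essentially no obstacle here. The one point needing care is that the bound applies only to paths charged by $\nu$, namely the participating paths. This is why the statement requires $\nu(\zeta)\ge\nu_{\min}$ on every participating path. Paths outside the support of $\nu$, such as abort paths under T4$'$, are simply not covered.

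I would close with a remark linking this to Theorem~\ref{thm:cf_approx}. Setting $\eta=T_{\max}\epsilon/\sqrt{\nu_{\min}}$ turns the population-loss certificate into the ratio, TV, and KL bounds there. The price of the conversion is the factor $1/\sqrt{\nu_{\min}}$.
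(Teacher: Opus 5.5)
Your proposal is correct and follows the same route as the paper's proof: drop the other nonnegative terms of $\mathbb{E}_\nu[(\rho/T)^2]$ to get $\nu_{\min}\rho(\zeta)^2/T(\zeta)^2\le\epsilon^2$ for each participating path, then take square roots and apply the length cap $T\le T_{\max}$. Your remark that only paths charged by $\nu$ are certified, so abort paths under T4$'$ are excluded, usefully clarifies which paths the supremum ranges over.
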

\begin{proof}
For any path, nonnegativity gives
$\nu_{\min}\rho(\zeta)^2/T(\zeta)^2
\le\cfE_\nu[(\rho/T)^2]\le\epsilon^2$.
Take square roots and use the length cap.
\end{proof}

\subsection{Reward snapshots, positive support, and moving counters}
\label{app:cf_reward_snapshot}

Fix a finite graph-description length $L_\chi(G)\ge0$ and a fixed
classification $a_\chi(G)$ for structural reference comparisons. For each
nonempty class let $L_{\min,\chi}(a)$ be the minimum description length in
that class, and set
$L_{\rm rel}(G)=L_\chi(G)-L_{\min,\chi}(a_\chi(G))\ge0$.
The code and classes are fixed parts of the reward specification, and the
results hold for any such specification.

At the beginning of batch $k$, freeze finite counts
$0\le s_k(G)\le n_k(G)$ and use
\begin{equation}
 \widehat p_k(G)=\frac{s_k(G)+1/2}{n_k(G)+1},\qquad
 \widetilde R_k(G,x,\xi)=\varepsilon_{
 \min}+r_{\rm ans}(G,x,\xi)\widehat p_k(G)e^{-\lambda L_{\rm rel}(G)},
 \label{eq:cf_reward_snapshot}
\end{equation}
where $\varepsilon_{\min}>0$, $\lambda\ge0$, and $0\le r_{\rm ans}\le1$.
The count factor $\widehat p_k(G)$ is the posterior mean of the success rate
under the Jeffreys prior $\mathrm{Beta}(1/2,1/2)$ \citep{jeffreys1946invariant}.
The results below hold for every $\lambda\ge0$. The method uses $\lambda=1$
with code lengths in nats, so $e^{-L_{\rm rel}(G)}$ is the
minimum-description-length prior of $G$ relative to the shortest team of its
class \citep{rissanen1978modeling}; the structural-prior
ablation sets $\lambda=0$.
Each execution record increments the counts once, however often it is copied.
All rewards in a batch use its pre-batch counts; completed outcomes update
counts only after that batch. Pooling counts across tasks or graph classes
uses a fixed pooling rule.

\begin{cflemma}[Reward support and the posterior's model]
\label{lem:cf_reward_support}
The counts in Eq.~\ref{eq:cf_reward_snapshot} give
$0<\widehat p_k(G)<1$ and
$\varepsilon_{\min}\le\widetilde R_k\le\varepsilon_{\min}+1$.
Under a model of conditionally independent Bernoulli outcomes with one
fixed success parameter $p$ and prior $\operatorname{Beta}(1/2,1/2)$, the
posterior is $\operatorname{Beta}(s+1/2,n-s+1/2)$ and its mean is
$(s+1/2)/(n+1)$.
\end{cflemma}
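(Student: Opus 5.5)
The lemma has three claims, and I would prove them in order: the bounds on the smoothed success rate, the support interval of the observed reward, and the Beta posterior under the Bernoulli model.

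\textbf{Bounds on $\widehat p_k(G)$.} From $0\le s_k(G)\le n_k(G)$ the numerator $s_k(G)+1/2$ lies in $[1/2,\,n_k(G)+1/2]$. The denominator is $n_k(G)+1\ge1$. Hence
\[
0<\frac{1/2}{n_k(G)+1}\le\widehat p_k(G)\le\frac{n_k(G)+1/2}{n_k(G)+1}<1 .
\]

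\textbf{Support of $\widetilde R_k$.} The reward in Eq.~\ref{eq:cf_reward_snapshot} adds $\varepsilon_{\min}$ to a product of three factors:
\begin{itemize}
\item $r_{\rm ans}\in[0,1]$ by assumption;
\item $\widehat p_k(G)\in(0,1)$ by the first step;
\item $e^{-\lambda L_{\rm rel}(G)}\in(0,1]$, because $\lambda\ge0$ and $L_{\rm rel}(G)=L_\chi(G)-L_{\min,\chi}(a_\chi(G))\ge0$.
\end{itemize}
That last inequality holds because $G$ belongs to its own nonempty class, whose minimum is finite since the code is finite. The product therefore lies in $[0,1]$, which gives $\varepsilon_{\min}\le\widetilde R_k\le\varepsilon_{\min}+1$. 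In particular $\log\widetilde R_k$ is finite, as the main text requires.

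\textbf{Posterior.} I would use Beta–Bernoulli conjugacy. Conditionally independent outcomes with $s$ successes among $n$ give likelihood $p^s(1-p)^{n-s}$. The prior density is proportional to $p^{-1/2}(1-p)^{-1/2}$. Their product is proportional to $p^{s+1/2-1}(1-p)^{n-s+1/2-1}$ on $(0,1)$, and this is integrable because both exponents exceed $-1$. It is therefore the kernel of $\operatorname{Beta}(s+1/2,\,n-s+1/2)$. The mean of $\operatorname{Beta}(\alpha,\beta')$ is $\alpha/(\alpha+\beta')$, which here equals $(s+1/2)/(n+1)$.

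The only step needing care is modelling rather than calculation: $(s,n)$ must be the deduplicated unique-outcome counts, so that conditional independence is a coherent assumption. By the bookkeeping contract, copies do not increment the counters, so I would simply state this.
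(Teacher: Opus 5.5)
Your proof is correct and follows the paper's argument step for step. The numerator $s+1/2$ lies strictly between $0$ and $n+1$, the three multiplicative factors lie in $[0,1]$, and Beta--Bernoulli conjugacy gives $\operatorname{Beta}(s+1/2,n-s+1/2)$ with mean $(s+1/2)/(n+1)$; your remarks on the kernel's integrability and on why $L_{\rm rel}(G)\ge0$ make explicit what the paper leaves implicit.
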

\begin{proof}
The numerator is strictly between zero and $n+1$, proving the first claim.
The remaining multiplicative factors belong to $[0,1]$, proving the reward
bounds. For the posterior, multiply the prior density proportional to
$p^{-1/2}(1-p)^{-1/2}$ by the Bernoulli likelihood
$p^s(1-p)^{n-s}$ and normalize. Its two beta parameters and their ratio
give the stated distribution and mean.
\end{proof}
\noindent The positive floor keeps log rewards finite and gives failed
outcomes a small positive target mass.

\begin{cflemma}[Counter-induced target drift]
\label{lem:cf_counter_drift}
After $a\ge0$ new unique outcomes with $b\in\{0,\ldots,a\}$ successes,
\begin{equation}
 \widehat p_{n+a}-\widehat p_n
 =\frac{b-a\widehat p_n}{n+a+1},\qquad
 |\widehat p_{n+a}-\widehat p_n|\le\frac{a}{n+a+1}.
 \label{eq:cf_counter_drift}
\end{equation}
If the answer-score table and structural code are fixed between two reward
snapshots, write $r(G)=\varepsilon_{\min}+c_G\widehat p(G)$ with
$c_G=r_{\rm ans}(G)e^{-\lambda L_{\rm rel}(G)}\in[0,1]$. The per-graph
log-reward difference between the two snapshots, with rewards $r$ and $r'$, is then at most
\begin{equation}
 \frac{c_G\,a}{(n+a+1)\min\{r,r'\}}\le\frac{a}{(n+a+1)\varepsilon_{\min}}.
 \label{eq:cf_log_reward_drift}
\end{equation}
A uniform such bound can be substituted for $d_R$ in
Corollary~\ref{cor:cf_perturb}.
\end{cflemma}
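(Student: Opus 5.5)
The proof splits into three steps: an algebraic identity for the posterior mean, a mean-value bound on the logarithm, and a substitution into Corollary~\ref{cor:cf_perturb}.

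\emph{Step 1: the drift identity.} Write $\widehat p_n=(s+1/2)/(n+1)$ and $\widehat p_{n+a}=(s+b+1/2)/(n+a+1)$. I will put both over the common denominator $n+a+1$ by using $s+1/2=(n+1)\widehat p_n$:
\[
(n+a+1)\widehat p_{n+a}=(n+1)\widehat p_n+b .
\]
Subtracting $(n+a+1)\widehat p_n$ from both sides gives $b-a\widehat p_n$. For the bound I use $0<\widehat p_n<1$ from Lemma~\ref{lem:cf_reward_support} together with $0\le b\le a$. These give $b-a\widehat p_n\in(-a,a)$, so the absolute value is at most $a/(n+a+1)$. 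The case $a=0$ is trivial.

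\emph{Step 2: the log-reward bound.} With the answer-score table and structural code fixed, $c_G$ is the same in both snapshots, so
\[
r'-r=c_G(\widehat p_{n+a}-\widehat p_n).
\]
I then apply the mean value theorem to $\log$ on the interval between $r$ and $r'$, or equivalently use $|\log r'-\log r|\le |r'-r|/\min\{r,r'\}$. Combined with Step 1, this gives the first inequality in Eq.~\ref{eq:cf_log_reward_drift}. The second inequality follows from $c_G\le 1$ and $\min\{r,r'\}\ge\varepsilon_{\min}$, which holds because $c_G\widehat p\ge0$.

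\emph{Step 3: substitution into Corollary~\ref{cor:cf_perturb}.} Let $d_R$ be the maximum of the per-graph bounds over graphs. This is finite because $\mathcal G_c$ is finite. Then $\sup_G|\log r'_G-\log r_G|\le d_R$, which is exactly the hypothesis of Corollary~\ref{cor:cf_perturb}. The corollary then applies with $r$ in the role of the reference target and $r'$ in the role of $\widehat r$.

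The only point I expect to need care is the bookkeeping of $n$ and $a$ per graph. Different graphs receive different numbers of new outcomes, so the uniform bound should be taken as the maximum of $a_G/((n_G+a_G+1)\varepsilon_{\min})$. Graphs with $a_G=0$ contribute zero. This is also where the claim that drift shrinks as records accumulate comes from: the bound decreases as $n_G$ grows. Beyond this, the argument is elementary.
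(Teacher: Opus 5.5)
Your proof is correct and follows the paper's argument essentially step for step. It uses the same common-denominator identity via $s+1/2=(n+1)\widehat p_n$, the bound from $0\le b\le a$ and $0<\widehat p_n<1$, and the mean-value bound on $\log$ with $\min\{r,r'\}\ge\varepsilon_{\min}$, while your Step~3 (taking the maximum of $a_G/((n_G+a_G+1)\varepsilon_{\min})$ over the finite support) spells out the substitution into Corollary~\ref{cor:cf_perturb} that the paper only asserts.
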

\begin{proof}
Subtract the two ratios, writing $s+1/2=(n+1)\widehat p_n$, to obtain the
identity. Since $0\le b\le a$ and $0<\widehat p_n<1$, its numerator has
absolute value at most $a$. The reward difference is
$c_G(\widehat p'-\widehat p)$, so $|r'-r|\le c_Ga/(n+a+1)$. Between $r$ and $r'$
the derivative of $\log$ is at most $1/\min\{r,r'\}$, and the mean-value
theorem gives the first bound; $c_G\le1$ and $\min\{r,r'\}\ge\varepsilon_{\min}$
give the second.
\end{proof}

\begin{cfcorollary}[Movement of the self-generated target]
\label{cor:cf_target_drift}
Fix the answer-score table, structural code, graph support, and $\beta$
between snapshots $k$ and $k+1$, and let $q_k(G)\propto r_k(G)^\beta$ be the
target of snapshot $k$. If team $G$ gains $a_G$ unique outcomes between the
snapshots and
$d=\max_G c_Ga_G/\big((n_k(G)+a_G+1)\min\{r_k(G),r_{k+1}(G)\}\big)$, then
\begin{equation}
 e^{-2\beta d}\le\frac{q_{k+1}(G)}{q_k(G)}\le e^{2\beta d},\qquad
 \cfTV(q_{k+1},q_k)\le\tanh(\beta d/2).
 \label{eq:cf_target_drift}
\end{equation}
\end{cfcorollary}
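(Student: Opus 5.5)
The plan is to reduce the corollary to the tilt argument of Theorem~\ref{thm:cf_approx}, with the counter movement playing the role of the residual. Because the answer-score table, structural code, graph support and $\beta$ are fixed, snapshots $k$ and $k+1$ differ only through the counts. Lemma~\ref{lem:cf_counter_drift} then gives $r_{k+1}(G)=\varepsilon_{\min}+c_G\widehat p_{k+1}(G)$ with $|r_{k+1}(G)-r_k(G)|\le c_Ga_G/(n_k(G)+a_G+1)$. By the mean-value step in that lemma's proof,
\[
|\log r_{k+1}(G)-\log r_k(G)|\le \frac{c_Ga_G}{(n_k(G)+a_G+1)\min\{r_k(G),r_{k+1}(G)\}}\le d .
\]
This holds for every $G$, since $d$ is the maximum of these per-graph quantities.

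Next I set $a_G:=(r_{k+1}(G)/r_k(G))^\beta\in[e^{-\beta d},e^{\beta d}]$. A direct computation gives
\[
q_{k+1}(G)=\frac{r_{k+1}(G)^\beta}{\sum_{G'}r_{k+1}(G')^\beta}=\frac{q_k(G)\,a_G}{\mathbb E_{q_k}a_G},
\]
which is exactly the tilt form \eqref{eq:cf_residual_tilt} with $q=q_k$ and $P_F=q_{k+1}$. The normalizer $A=\mathbb E_{q_k}a_G$ also lies in $[e^{-\beta d},e^{\beta d}]$. The ratio bound $e^{-2\beta d}\le q_{k+1}/q_k\le e^{2\beta d}$ follows immediately.

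For total variation, I reuse the chord/convexity argument from the proof of Theorem~\ref{thm:cf_approx} verbatim, with $m=e^{-\beta d}$ and $M=e^{\beta d}$. This gives $\mathrm{TV}\le(\sqrt M-\sqrt m)/(\sqrt M+\sqrt m)=\tanh(\beta d/2)$, and the case $d=0$ is trivial.

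I expect no real obstacle, only two bookkeeping points to check. First, the support is fixed and every reward is at least $\varepsilon_{\min}>0$, so all ratios are finite. Second, teams with $a_G=0$ contribute zero to $d$, which is consistent with their rewards being unchanged.
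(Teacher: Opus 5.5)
Your proposal is correct and follows the paper's proof essentially step for step. Lemma~\ref{lem:cf_counter_drift} bounds every log-reward change by $d$, $q_{k+1}$ is written as the tilt of $q_k$ by $(r_{k+1}(G)/r_k(G))^\beta\in[e^{-\beta d},e^{\beta d}]$, and the ratio and chord-based total-variation bounds of Theorem~\ref{thm:cf_approx} then apply with $\eta=\beta d$.

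The only blemish is notational. You reuse $a_G$ for the tilt factor, although it already denotes the number of new unique outcomes, and your closing remark about $a_G=0$ uses that original meaning. Rename the tilt factor, e.g.\ to $\tau_G$ as the paper does.
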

\begin{proof}
Lemma~\ref{lem:cf_counter_drift} gives
$|\log r_{k+1}(G)-\log r_k(G)|\le d$ for every $G$. Hence
$q_{k+1}(G)=q_k(G)\tau_G/\cfE_{q_k}\tau$ with
$\tau_G=(r_{k+1}(G)/r_k(G))^\beta\in[e^{-\beta d},e^{\beta d}]$.
This is the tilt of Theorem~\ref{thm:cf_approx} with $\eta=\beta d$, and its
ratio and total-variation bounds apply unchanged.
\end{proof}
\noindent Teams with $a_G=0$ or $c_G=0$ contribute nothing to $d$, so $d$ is set
by the sampled teams with positive answer scores and shrinks as their records
accumulate. It holds for any realized counts, whatever policy selected the
teams.

\subsection{The practical training loss}
\label{app:cf_training_limits}

For a fixed replay distribution $\nu_k$ and fixed state weighting $d_k$, the
regularized training surrogate is
\begin{equation}
 \mathcal L_k(\vartheta)=
 \cfE_{\zeta\sim\nu_k}\frac{\rho_\vartheta(\zeta)^2}{T(\zeta)^2}
 +\alpha_{\rm KL}\cfE_{v\sim d_k}
 \cfKL(\bar\pi_\theta(\cdot\mid v)\|\pi_{\rm ref}(\cdot\mid v)).
 \label{eq:cf_regularized_loss}
\end{equation}
Here $\vartheta=(\theta,\phi,\psi)$ collects all learned parameters, the
reference policy is normalized on the same action domain and positive
where the forward policy is positive, and $\alpha_{\rm KL}\ge0$.
During an update, replay samples and recorded reward targets are treated as
fixed. Shared parameters must be differentiated in \emph{all} terms where
they occur. This specification distinguishes a replay gradient from the
full derivative of an on-policy sampling distribution.

\begin{cfexample}[KL regularization tilts the balance point]
\label{ex:cf_kl}
Consider two one-step terminals with unit rewards and backward probabilities
one. Let the forward law be $(p,1-p)$ and the reference $(0.9,0.1)$.
At $p=1/2$, $Z=2$, both TB residuals vanish and the squared-residual
gradient is zero, while
\[
 \left.\frac{d}{dp}\alpha_{\rm KL}\cfKL((p,1-p)\|(0.9,0.1))
 \right|_{p=1/2}=-\alpha_{\rm KL}\log9.
\]
With $\alpha_{\rm KL}>0$ the regularized optimum therefore moves from the
reward-proportional point toward a fixed reference, and $\alpha_{\rm KL}$ sets
the size of this move.
\end{cfexample}

\begin{cflemma}[Proximal reference keeps balanced directors fixed]
\label{lem:cf_proximal}
Let $\pi_{\rm ref}=\bar\pi_{\theta_k}$ be the director that collected batch $k$,
and fix the rewards. If $\vartheta_k=(\theta_k,\phi_k,\psi_k)$ has zero residual
on every replayed path, then $\vartheta_k$ minimizes $\mathcal L_k$. Hence a
balanced director is a fixed point of the regularized update, and the KL term
only damps moves away from it.
\end{cflemma}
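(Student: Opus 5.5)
The approach is to show that at $\vartheta_k$ both terms of $\mathcal L_k$ in Eq.~\ref{eq:cf_regularized_loss} reach their individual lower bound of zero, and both terms are nonnegative. Since $\mathcal L_k$ is a sum of nonnegative terms, a point where each term equals zero is a global minimizer. That is the whole argument, and the proof just has to confirm the two vanishing claims under the stated conventions.

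First, the residual term. Replay samples and recorded reward targets are treated as fixed during the update (Appendix~\ref{app:cf_training_limits}). So $\cfE_{\zeta\sim\nu_k}\rho_\vartheta(\zeta)^2/T(\zeta)^2$ is an average of squares over a fixed replay distribution. It is therefore nonnegative for every $\vartheta$. By hypothesis $\rho_{\vartheta_k}(\zeta)=0$ on every replayed path, i.e.\ on the support of $\nu_k$, so the term equals zero at $\vartheta_k$. Here $T(\zeta)\ge1$ keeps the length scaling well defined.

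Second, the KL term. Take $\pi_{\rm ref}=\bar\pi_{\theta_k}$, the legal-set kernel of Eq.~\ref{eq:cf_mask} that collected the batch. For every state $v$ in the support of $d_k$, the policy at $\theta=\theta_k$ coincides with the reference, so $\cfKL(\bar\pi_{\theta_k}(\cdot\mid v)\|\bar\pi_{\theta_k}(\cdot\mid v))=0$. Because KL is nonnegative for distributions on a common domain with absolute continuity, $\alpha_{\rm KL}\ge0$ makes the regularizer nonnegative for all $\theta$ and zero at $\theta_k$. The paper assumes the reference is normalized on the same action domain and positive wherever the forward policy is; here that holds trivially since the two kernels are identical.

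Combining the two, $\mathcal L_k(\vartheta)\ge0=\mathcal L_k(\vartheta_k)$ for all $\vartheta$, so $\vartheta_k$ is a global minimizer. If $\mathcal L_k$ is differentiable at $\vartheta_k$, the gradient vanishes there. This holds even when $\theta$ is shared across the two terms, because each term is separately minimized at $\vartheta_k$, so differentiating through all occurrences still gives zero. Any descent update that returns a minimizer, or a gradient step, therefore leaves $\vartheta_k$ unchanged, which is the fixed-point claim.

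For the "only damps" remark, I would note that away from $\vartheta_k$ the KL term adds a nonnegative penalty that vanishes at the collecting director. It never moves the minimizer set off a balanced point, in contrast with the fixed-reference tilt of Example~\ref{ex:cf_kl}.

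The main obstacle is interpretive rather than technical: stating precisely that the reference is frozen at $\theta_k$ and not differentiated. If $\pi_{\rm ref}$ tracked $\theta$, the KL term would be identically zero and the statement would be trivially true but vacuous. The convention in Algorithm~\ref{alg:app_training} fixes it, and I would cite that explicitly.
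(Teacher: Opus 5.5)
Your proof is correct and follows the paper's argument: both terms of $\mathcal L_k$ are nonnegative, the residual term vanishes at $\vartheta_k$ by the zero-residual hypothesis on the replay support, and the KL term vanishes because $\pi_{\rm ref}=\bar\pi_{\theta_k}$, so $\mathcal L_k(\vartheta_k)=0\le\mathcal L_k(\vartheta)$ for every $\vartheta$. Your extra remarks are sound refinements that the paper leaves implicit: the gradient vanishes through shared parameters because each term is separately minimized, and the reference is held frozen at $\theta_k$.
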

\begin{proof}
Both terms of Eq.~\ref{eq:cf_regularized_loss} are nonnegative. At
$\vartheta_k$ the residual term vanishes by assumption, and the KL term vanishes
because $\bar\pi_{\theta_k}=\pi_{\rm ref}$. Hence
$\mathcal L_k(\vartheta_k)=0\le\mathcal L_k(\vartheta)$ for every $\vartheta$.
\end{proof}
\noindent Because the reference moves with the director, the tilt of the
preceding example vanishes at every balanced director, and the recursion keeps
the reward-proportional point of each snapshot. This choice mirrors the
proximal updates of TRPO and PPO \citep{schulman2015trust,schulman2017proximal},
which regularize each policy update toward the policy that collected the data.

\begin{cftheorem}[Random log-reward regression]
\label{thm:cf_random_reward}
Fix a replay path $\zeta$ ending at $G$, its length, and its context.
Assume execution randomness is fresh conditional on $G,c$ and independent
of the construction path, and $\log\widetilde R_k(G,x,\xi)$ has finite second
moment. Put $m_G=\cfE[\log\widetilde R_k\mid G,c]$,
$\sigma_G^2=\cfVar(\log\widetilde R_k\mid G,c)$, and
$A_\vartheta(\zeta)=\log Z+\log P_F(\zeta)-\log Q_G(\zeta)$. Then
\begin{equation}
 \cfE\!\left[\frac{(A_\vartheta(\zeta)-\beta\log\widetilde R_k)^2}{T^2}
 \,\middle|\,\zeta,c\right]
 =\frac{(A_\vartheta(\zeta)-\beta m_G)^2}{T^2}
 +\frac{\beta^2\sigma_G^2}{T^2}.
 \label{eq:cf_noise_decomposition}
\end{equation}
In an unregularized fixed-replay model that can simultaneously realize all
these conditional means, the corresponding ideal graph weights are
$\exp(\beta m_G)$, the tempered geometric-mean reward of each team.
\end{cftheorem}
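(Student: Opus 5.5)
The plan is a bias--variance decomposition conditional on the replay path, followed by a pointwise minimization argument. Fix $\zeta$, its length $T$, and the context $c$. Then $A_\vartheta(\zeta)$ is a deterministic number, since it depends only on the parameters and the recorded path. By the freshness assumption, the only randomness left is $\widetilde R_k$, whose conditional law given $(\zeta,c)$ equals its law given $(G,c)$. Hence
\[
\cfE[\log\widetilde R_k\mid\zeta,c]=m_G,\qquad \cfVar(\log\widetilde R_k\mid\zeta,c)=\sigma_G^2 .
\]
Finite second moment makes both quantities finite. Lemma~\ref{lem:cf_reward_support} actually gives $\varepsilon_{\min}\le\widetilde R_k\le\varepsilon_{\min}+1$, so the log reward is bounded, but the assumption as stated already suffices.

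Next, write
\[
A_\vartheta-\beta\log\widetilde R_k=(A_\vartheta-\beta m_G)-\beta(\log\widetilde R_k-m_G),
\]
square it, and take the conditional expectation. The cross term is
\[
-2\beta(A_\vartheta-\beta m_G)\,\cfE[\log\widetilde R_k-m_G\mid\zeta,c]=0,
\]
because the first factor is measurable given $(\zeta,c)$. The remaining square gives $\beta^2\sigma_G^2$. Dividing by the constant $T^2$ yields Eq.~\ref{eq:cf_noise_decomposition}.

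For the second claim, observe that the variance term $\beta^2\sigma_G^2/T^2$ does not depend on $\vartheta$. The expected per-path loss is therefore minimized exactly when $A_\vartheta(\zeta)=\beta m_G$. Averaging over a fixed replay distribution with $\alpha_{\rm KL}=0$, a model that realizes $A_\vartheta(\zeta)=\beta m_G$ on every replayed valid path attains the minimum of every term simultaneously. At that point
\[
\rho(\zeta)=A_\vartheta(\zeta)-\log w_G=0\quad\text{with}\quad w_G:=\exp(\beta m_G)=\big(\exp m_G\big)^\beta .
\]
This is the tempered geometric-mean reward. With these fixed weights $w_G>0$, the premises of Theorem~\ref{thm:cf_raw_valid} (or Theorem~\ref{thm:cf_terminal}) hold. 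It follows that $\mu^{\mathsf V}_\theta(G)\propto\exp(\beta m_G)$.

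The main obstacle is the word ``simultaneously.'' With shared parameters, the pointwise minimizers of different paths need not be jointly realizable. The theorem sidesteps this by hypothesis, and the tabular feasibility construction after Theorem~\ref{thm:cf_terminal} shows that such a realization exists once $w_G=\exp(\beta m_G)$. A second subtle point is that the replay path must not be correlated with the reward draw. This is exactly what freshness and independence from the construction path guarantee; without them, $m_G$ would become a path-dependent conditional mean.
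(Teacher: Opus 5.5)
Your proposal is correct and follows the paper's proof essentially step for step. You write $\log\widetilde R_k=m_G+\epsilon$ with conditionally mean-zero noise, kill the cross term, note that the variance term does not depend on $\vartheta$, and read the minimizer $A_\vartheta(\zeta)=\beta m_G$ as the fixed-weight path identity with $w_G=\exp(\beta m_G)$, to which Theorem~\ref{thm:cf_raw_valid} or Theorem~\ref{thm:cf_terminal} applies. Your remarks on joint realizability and on freshness are accurate elaborations of what the paper leaves to the hypothesis; like the paper, you implicitly need the replay support to cover every valid path so that T4$'$ holds.
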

\begin{proof}
Write $\log\widetilde R_k=m_G+\epsilon$ with conditional mean-zero
$\epsilon$ and variance $\sigma_G^2$. Expanding the square makes the cross
term vanish and gives the identity. The variance term is independent of
$A_\vartheta$ for fixed replay and context. If all means can be realized,
the remaining squared terms are minimized at $A_\vartheta=\beta m_G$.
These are exactly the fixed-weight path identities with
$w_G=\exp(\beta m_G)$, to which the appropriate normalized or valid-only
TB theorem applies.
\end{proof}
\noindent\emph{Example (CTB favors reliable teams).} One team with reward
$0.1$ or $0.9$ equally often and another with constant reward $0.5$ have the
same mean reward, but their geometric-mean weights are $0.3$ and $0.5$. For
$\beta=1$, CTB samples them with probabilities $0.375$ and $0.625$ in the
realizable two-terminal model, so under noisy rewards it prefers the team
that succeeds consistently.

\begin{cflemma}[Residual-gradient bound]
\label{lem:cf_gradient}
For a fixed sampled path, fixed reward, and differentiable finite residual,
\begin{equation}
 g(\zeta):=\nabla_\vartheta L_{\rm CTB}(\zeta)
 =\frac{2\rho(\zeta)}{T(\zeta)^2}\nabla_\vartheta\rho(\zeta).
 \label{eq:cf_residual_gradient}
\end{equation}
If $T\ge T_{\min}\ge1$ and
$\|\nabla_\vartheta\rho(\zeta)\|\le G_0<\infty$ on the distribution under
consideration, then
\begin{equation}
 \cfE\|g-\cfE g\|^2
 \le\cfE\|g\|^2
 \le\frac{4G_0^2}{T_{\min}^4}\cfE[\rho^2].
 \label{eq:cf_gradient_bound}
\end{equation}
\end{cflemma}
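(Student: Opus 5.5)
The gradient identity \eqref{eq:cf_residual_gradient} is a chain-rule computation. With the path, its length $T(\zeta)$ and the reward held fixed, $L_{\rm CTB}(\zeta)=\rho(\zeta)^2/T(\zeta)^2$ depends on $\vartheta$ only through $\rho(\zeta)$. Differentiability of the finite residual then gives $\nabla_\vartheta L_{\rm CTB}=2\rho\,\nabla_\vartheta\rho/T^2$ directly. The one point to note is that $T$ and $\log w_z$ are constants in $\vartheta$: the replayed path and its recorded reward are treated as fixed, as stipulated in Appendix~\ref{app:cf_training_limits}. Because of this, no extra terms arise from the reward or from the path length.

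For the bound \eqref{eq:cf_gradient_bound} I would prove the two inequalities separately.

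\textbf{Left inequality.} This is the variance decomposition $\cfE\|g-\cfE g\|^2=\cfE\|g\|^2-\|\cfE g\|^2\le\cfE\|g\|^2$. It requires $\cfE\|g\|^2<\infty$, which the right inequality supplies whenever $\cfE[\rho^2]<\infty$. If $\cfE[\rho^2]=\infty$, the right-hand side is infinite and the statement is trivial.

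\textbf{Right inequality.} I would bound the gradient pointwise:
\[
\|g(\zeta)\|=\frac{2|\rho(\zeta)|}{T(\zeta)^2}\|\nabla_\vartheta\rho(\zeta)\|\le\frac{2G_0|\rho(\zeta)|}{T_{\min}^2},
\]
using $T\ge T_{\min}$ and the gradient cap $G_0$. Squaring and taking expectations gives $4G_0^2\cfE[\rho^2]/T_{\min}^4$.

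I do not expect a real obstacle here. The only care needed is in the measure-theoretic bookkeeping: both expectations must be over the same path distribution on which the cap $G_0$ holds, and the finiteness just discussed must be checked before the variance identity is applied.
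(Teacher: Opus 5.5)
Your proposal is correct and follows the paper's proof. The chain rule, with the recorded length and reward held fixed, gives the gradient identity. Squaring the pointwise bound $\|g\|\le 2G_0|\rho|/T_{\min}^2$ and taking expectations gives the right inequality, and subtracting $\|\mathbb{E}g\|^2\ge 0$ gives the left one. You also check that $\mathbb{E}\|g\|^2<\infty$ before using the variance identity; the paper leaves this implicit, and it holds automatically on the finite path support used there.
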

\begin{proof}
Treating the recorded length as fixed, the chain rule gives the gradient.
Squaring its norm and using the two uniform bounds yields the last
inequality after expectation. Subtracting $\|\cfE g\|^2\ge0$ proves the
first. Thus, as the mean-square residual shrinks, the variance of this CTB
gradient component shrinks with it.
\end{proof}

\subsection{Occupancy flows, conditional bridges, and path diagnostics}
\label{app:cf_occupancy}

Fix the normalized outcome model of Section~\ref{app:terminal_distribution}.
For any learned forward policy, the reverse kernels define the
reward-weighted path measure
\begin{equation}
 M(\zeta)=w_zQ_z(\zeta),\qquad
 \sum_\zeta M(\zeta)=W,\qquad P^\star(\zeta)=M(\zeta)/W.
 \label{eq:cf_path_measure}
\end{equation}
This measure allocates the target over paths. For a construction vertex $v$ or transition $e$, define
\begin{equation}
 \begin{aligned}
 F_z(v)&=\sum_{\zeta\in\Omega_z:\,v\in\zeta}M(\zeta),&
 f_z(e)&=\sum_{\zeta\in\Omega_z:\,e\in\zeta}M(\zeta),\\
 F(v)&=\sum_zF_z(v),& f(e)&=\sum_zf_z(e).
 \end{aligned}
 \label{eq:cf_occupancy_definition}
\end{equation}
States and transitions occur at most once on a construction-DAG path.
Source flow is outgoing flow $F(v_0)=W$; terminal flow is incoming flow
$F(g_z)=w_z$. Incoming and outgoing flows agree at every interior
vertex.

\begin{cftheorem}[Occupancy conservation and terminal-conditioned bridge]
\label{thm:cf_occupancy}
Under T1--T3, the path measure in Eq.~\ref{eq:cf_path_measure} satisfies
$F_z(v)=\sum_{e\in\cfIn(v)}f_z(e)=\sum_{e\in\cfOut(v)}f_z(e)$ at every
interior $v$, and
$f_z(e)=F_z(v')b_z(e\mid v')$ for $e=(v,u,v')$ in the
$z$-ancestral subgraph, with zero occupancy elsewhere.
If additionally exact balance holds, then
\[
 F(v)=W P_F(v\text{ is visited}),\qquad
 f(e)=F(v)p_F(e\mid v).
\]
For a nonterminal visited state $v$ with
$h_z(v):=P_F(z\mid v\text{ is visited})>0$, define the forward bridge
\begin{equation}
 p_F^z(e\mid v)=p_F(e\mid v)\frac{h_z(v')}{h_z(v)}.
 \label{eq:cf_bridge}
\end{equation}
It is normalized over outgoing edges and obeys
\begin{equation}
 f_z(e)=F_z(v)p_F^z(e\mid v)=F_z(v')b_z(e\mid v').
 \label{eq:cf_conditional_balance}
\end{equation}
Thus the bridge supplies the forward side of an endpoint-conditioned
occupancy.
\end{cftheorem}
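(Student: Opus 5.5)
We prove the three parts of Theorem~\ref{thm:cf_occupancy} in order: path-measure conservation, the identification with forward visitation under exact balance, and the bridge.

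\emph{Step 1 (conservation and backward factorization).} Fix $z$. Every path in $\Omega_z$ lies in the $z$-ancestral subgraph, and by Lemma~\ref{lem:cf_dag} it visits each vertex at most once. Hence, at an interior $v$ on such a path, the path enters $v$ through exactly one incoming edge and leaves through exactly one outgoing edge. Partitioning the paths through $v$ by their entering edge gives $F_z(v)=\sum_{e\in\cfIn(v)}f_z(e)$; partitioning by their leaving edge gives the outgoing identity.

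For $f_z(e)=F_z(v')b_z(e\mid v')$ with $e=(v,u,v')$, we split $Q_z(\zeta)$ at $v'$ into:
\begin{itemize}
\item the reverse product from $g_z$ down to $v'$;
\item the factor $b_z(e\mid v')$;
\item the reverse product from $v$ to the source.
\end{itemize}
Summing over paths through $e$, the prefix sum factorizes. By Lemma~\ref{lem:cf_reverse_mass}, applied via recursion~\eqref{eq:app_backward_recursion} at $v$, the sum of reverse masses from $v$ to $v_0$ equals one. The same factorization with $e$ removed shows that $F_z(v')$ equals $w_z$ times the suffix sum. Together these give $f_z(e)=F_z(v')b_z(e\mid v')$. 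Edges outside the ancestral subgraph lie on no path of $\Omega_z$, so their occupancy is zero.

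\emph{Step 2 (exact balance).} Under T4, Eq.~\ref{eq:appendix_trajectory_balance} gives $M(\zeta)=ZP_F(\zeta)$, and Theorem~\ref{thm:cf_terminal} gives $Z=W$. Summing over all paths through $v$, or through $e$, yields
\[
F(v)=W\,P_F(v\text{ visited}),\qquad f(e)=W\,P_F(e\text{ traversed}).
\]
The Markov property of the forward kernel gives $P_F(e\text{ traversed})=P_F(v\text{ visited})\,p_F(e\mid v)$, which proves $f(e)=F(v)p_F(e\mid v)$.

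\emph{Step 3 (the bridge).} Normalization of $p_F^z(\cdot\mid v)$ is first-step analysis: $h_z(v)=\sum_{e\in\cfOut(v)}p_F(e\mid v)h_z(v')$, using again that a path through $v$ exits by one edge. For Eq.~\ref{eq:cf_conditional_balance}, exact balance restricts Step 2 to a single terminal: $M(\zeta)=WP_F(\zeta)$ on $\Omega_z$. This gives
\[
F_z(v)=W\,P_F(v\text{ visited},\,z)=W\,P_F(v\text{ visited})\,h_z(v).
\]
In the same way, the Markov property yields
\[
f_z(e)=W\,P_F(v\text{ visited})\,p_F(e\mid v)\,h_z(v').
\]
Dividing the second identity by the first shows $f_z(e)=F_z(v)p_F^z(e\mid v)$. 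The equality with $F_z(v')b_z(e\mid v')$ is Step 1.

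The main obstacle is the bookkeeping in Step 1. The prefix factorization must be stated carefully because reverse kernels are normalized only within the $z$-ancestral subgraph. The fact that each path visits a vertex at most once must also be invoked explicitly, so that occupancies count paths rather than visits. Finally, the bridge statement implicitly requires exact balance, and we will state that premise explicitly.
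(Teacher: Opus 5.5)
Your proposal is correct and follows the paper's proof essentially step for step. The shared steps are: pathwise entry/exit indicators give conservation; the unit reverse-prefix mass from Lemma~\ref{lem:cf_reverse_mass} gives $f_z(e)=F_z(v')b_z(e\mid v')$; $M=WP_F$ at zero residual together with the Markov property gives $f(e)=F(v)p_F(e\mid v)$; and first-step analysis plus $F_z(v)=F(v)h_z(v)$ and $f_z(e)=F(v)p_F(e\mid v)h_z(v')$ gives the bridge. Your remark that the two-sided conditional balance needs exact balance, while the bridge's normalization does not, matches how the paper's proof uses that premise.
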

\begin{proof}
For each interior visit of a complete path, exactly one incoming and one
outgoing edge are used. Summing these pathwise indicator identities against
$M$ proves conservation. In the reverse process starting at $g_z$, once
$v'$ is visited the next reverse edge has law $b_z(\cdot\mid v')$.
All remaining reverse prefixes have mass one by
Lemma~\ref{lem:cf_reverse_mass}; hence its edge-visit mass is
$F_z(v')b_z(e\mid v')$.

At zero residual, $M(\zeta)=WP_F(\zeta)$. Summing gives the state-visit
identity. A sufficient Markov state makes the next-edge law independent of
the earlier prefix, so $f(e)=F(v)p_F(e\mid v)$. The same Markov property
gives $h_z(v)=\sum_{e=(v,u,v')}p_F(e\mid v)h_z(v')$, proving bridge
normalization. Furthermore $F_z(v)=F(v)h_z(v)$ and
$f_z(e)=F(v)p_F(e\mid v)h_z(v')$. Substitution proves the two-sided
conditional balance.
\end{proof}
\noindent\emph{Example.} For a source that chooses between two unit-reward
terminals with probability $1/2$ each, the bridge toward the first terminal
takes that branch with probability one, so
$F_1(v_0)p_F^1(g_1\mid v_0)=1=F_1(g_1)b_1$.

\paragraph{State-flow ratios and edge flows.}
Whenever a compatible local balance identity holds, its ratio is a ratio
of state occupancies, $p_F^z/b_z=F_z(v')/F_z(v)$ on positive conditional
support, while the reward an edge carries is its edge flow $f_z(e)$. For two
parents of flows $1$ and $99$ entering a common child of flow $100$
deterministically, the backward probabilities are $0.01$ and $0.99$, the
flow ratios are $100$ and $100/99$, and the edge flows are $1$ and $99$; we
therefore rank edges by $f_z(e)$.

\begin{cfdefinition}[Semantic transition occupancy]
For a semantic communication decision $a=(i,j,p)$, let
$N_a(\zeta)$ count accepted insertions of that edge on a valid complete
trajectory and set $N_a(\zeta)=0$ on abort trajectories. Define
\begin{equation}
 H_c(a)=\sum_\zeta M(\zeta)N_a(\zeta)
       =W\cfE_{P^\star}N_a(\zeta).
 \label{eq:cf_semantic_occupancy}
\end{equation}
Under the add-only interface with duplicate insertions excluded,
$N_a(\zeta)=\cfone\{a\in E(G_\zeta)\}$, so
$H_c(a)=\sum_{G:\,a\in E(G)}w_G$.
The normalized statistic $H_c(a)/W$ is the target probability of returning
a valid team containing that edge; $H_c(a)/W_{\mathsf V}$ conditions this
probability on valid completion.
\end{cfdefinition}
The last identity follows by summing $Q_G$ over $\Omega_G$. It is independent
of the allocation between different orders for a fixed graph, so semantic
occupancy reads the canonical team rather than any one construction order.

\paragraph{Estimators and their sampling law.}
If $\zeta_1,\ldots,\zeta_n$ are independent samples from $P^\star$, then
$n^{-1}\sum_iN_a(\zeta_i)$ is unbiased for $H_c(a)/W$ by linearity of
expectation. Off policy, for a normalized proposal $\nu$ positive wherever
$M(\zeta)N_a(\zeta)>0$, the estimator
\begin{equation}
 \widehat H_c(a)=\frac1n\sum_{i=1}^n
 \frac{M(\zeta_i)}{\nu(\zeta_i)}N_a(\zeta_i),\qquad\zeta_i\sim\nu,
 \label{eq:cf_occupancy_estimator}
\end{equation}
is unbiased for $H_c(a)$, since summing $\nu(\zeta)$ cancels the denominator.
In raw valid-only exact balance,
$W_{\mathsf V}$ and the valid-conditioned target replace $W$ and $P^\star$.
At finite residual, unweighted forward frequencies describe the learned
policy.

\paragraph{Cached path diagnostics and computation.}
For a scored path define
$d_t(\zeta)=\log p_F(e_t\mid v_{t-1})-\log b_z(e_t\mid v_t)$.
Its complete sum has the exact algebraic identity
\begin{equation}
 \sum_{t=1}^T d_t(\zeta)=\log w_z-\log Z+\rho(\zeta).
 \label{eq:cf_path_diagnostic}
\end{equation}
This path diagnostic holds at any residual. The $d_t$ values are computed by
arithmetic from cached forward and backward log probabilities.

\subsection{Proof dependencies}
\label{app:cf_closure}

The preceding results form one dependency chain. Checked construction
determines the legal support. Normalized action and reverse kernels turn this
support into path laws. A fixed terminal identity and reward snapshot make the
TB path sum meaningful, and exact balance on all paths (T4) or on valid paths
(T4$'$), or the uniform residual bounds, link these laws to the prescribed
terminal distribution. The
reward-snapshot corollary bounds how this target moves between rounds, and the
communication and single-writer results are runtime contracts that hold
alongside this chain.

\begin{center}
\small
\renewcommand{\arraystretch}{1.12}
\begin{tabular}{>{\raggedright\arraybackslash}p{0.21\linewidth}>{\raggedright\arraybackslash}p{0.34\linewidth}>{\raggedright\arraybackslash}p{0.36\linewidth}}
\toprule
Result & Premises & Main-text result \\
\midrule
Construction & C1 for soundness; C1--C3 for coverage & Proposition~\ref{prop:construction}: valid canonical teams, every target reachable \\
Evidence update & Version-bound, authentic applicable records; ordered gate & Proposition~\ref{prop:revision}: version-bound evidence, flip and correction identities \\
Runtime & Fixed schedule, finite rounds, charged bounded calls & Finite, budgeted execution \\
Single writer & S1--S3 & Proposition~\ref{prop:revision}: serialized internal writes \\
Graph sampling & T1--T3 with T4$'$ (Algorithm~\ref{alg:app_training}) or T4; uniform residual control & Proposition~\ref{prop:terminal_distribution}: reward-proportional teams \\
Proximal update & $\pi_{\rm ref}$ is the director that collected the batch & Balanced directors stay fixed (Section~\ref{subsec:rl}) \\
Target movement & Fixed answer-score table, structural code, and team support & Proposition~\ref{prop:bounded_target}: bounded target drift \\
Occupancy & Normalized reward path measure & Path diagnostics \\
\bottomrule
\end{tabular}
\end{center}

All finite sums above use the declared supports, all reverse normalizers
include parent--action multiplicity, and all probability comparisons hold
at a fixed context. Conditional events of probability zero use stated
conventions.